\newcommand\independent{\protect\mathpalette{\protect\independenT}{\perp}}
\def\independenT#1#2{\mathrel{\rlap{$#1#2$}\mkern2mu{#1#2}}}
\newtheorem{lemma}{Lemma}
\newtheorem{theorem}{Theorem}
\title[]{Generalized interpretation and identification of separable effects in competing event settings}
\author{Mats J. Stensrud$^{1,2}$, Miguel A. Hern\'{a}n$^{1,3}$, Eric J. Tchetgen Tchetgen$^{4}$, James M. Robins$^{1,3}$, Vanessa Didelez$^{5,6}$, Jessica G. Young$^7$} \address{ $^1$ Department of Epidemiology, Harvard T. H. Chan School of Public Health, USA \\
$^2$Department of Biostatistics, University of Oslo, Norway\\
$^3$ Department of Biostatistics, Harvard T. H. Chan School of Public Health, USA  \\
$^4$ Department of Statistics, The Wharton School, University of Pennsylvania, USA \\
$^5$ Leibniz Institute for Prevention Research and Epidemiology – BIPS, Germany \\
$^6$ Faculty of Mathematics / Computer Science,
University of Bremen, Germany \\
$^7$ Department of Population Medicine,
Harvard Medical School and Harvard
Pilgrim Health Care Institute, USA \\
}
\date{\today}
\begin{document}

\begin{abstract}
In competing event settings, a counterfactual contrast of cause-specific cumulative incidences quantifies the total causal effect of a treatment on the event of interest. However, effects of treatment on the competing event may indirectly contribute to this total effect, complicating its interpretation. We previously proposed the \textsl{separable effects} (Stensrud et al, 2019) to define direct and indirect effects of the treatment on the event of interest. This definition presupposes a treatment decomposition into two components acting along two separate causal pathways, one exclusively outside of the competing event and the other exclusively through it. Unlike previous definitions of direct and indirect effects, the separable effects can be subject to empirical scrutiny in a study where separate interventions on the treatment components are available. Here we extend and generalize the notion of the separable effects in several ways, allowing for interpretation, identification and estimation under considerably weaker assumptions. We propose and discuss a definition of separable effects that is applicable to general time-varying structures, where the separable effects can still be meaningfully interpreted, even when they cannot be regarded as direct and indirect effects. We further derive weaker conditions for identification of separable effects in observational studies where decomposed treatments are not yet available; in particular, these conditions allow for time-varying common causes of the event of interest, the competing events and loss to follow-up. For these general settings, we propose semi-parametric weighted estimators that are straightforward to implement. As an illustration, we apply the estimators to study the separable effects of intensive blood pressure therapy on acute kidney injury, using data from a randomized clinical trial.

\end{abstract}

\maketitle

\textit{Key words}: Causal inference; Competing events; Effect decomposition; G-formula; Hazard functions; Separable effects.

\section{Introduction}
\label{sec: introduction}
Researchers are often interested in treatment effects on an event of interest that is subject to competing events, that is, events that make it impossible for the event of interest to subsequently occur.  For example, when the event of interest is kidney injury, death is a competing event because any individual who dies prior to kidney injury cannot subsequently suffer from kidney injury. A counterfactual contrast in cause-specific cumulative incidences (risks) quantifies the \textsl{total effect} of the treatment on the event of interest through all causal pathways. When the treatment affects competing events, the total effect also partly includes pathways mediated by these competing events \cite{robins1986new, young2018choice}. For example, a harmful total effect of blood pressure therapy on the risk of kidney injury may be due to a biological side-effect on the kidneys, but could also be fully or partly explained by a protective treatment effect on cardiovascular death. 

As previously discussed \cite{robins1986new, young2018choice}, other popular estimands in competing events settings do not address this interpretational problem. Counterfactual contrasts in cause-specific or subdistribution hazard ratios do not generally have a causal interpretation as they contrast counterfactual outcomes in different populations (those who survive under different treatments). Other estimands that do have a causal interpretation may be of limited practical relevance.  Specifically, a counterfactual contrast in marginal cumulative incidences (or net risks \cite{geskus}) is a special case of a controlled direct effect \cite{robins1992identifiability}, which is defined relative to outcomes under an intervention to eliminate competing events. Alternatively, the survivor average causal effect defines the total effect in the subset of the population that would survive the competing event under any level of treatment.  This subset, which may not even exist, is never identified as it requires knowledge of cross-world treatment conditions that are unobservable. Finally, pure (natural) direct and indirect effects \cite{robins1992identifiability, pearl2009causality} are defined relative to outcomes under interventions that assign individuals competing event status under unobservable cross-world treatment conditions. 

%alternative suffer from  for competing events   Reporting of cause-specific or subdistribution hazard ratios does not provide a solution to this problem because these estimands do not generally have a causal interpretation, even when treatment is randomized and does not affect competing events \cite{ young2018choice,robins1986new}.  

To address the shortcomings of these existing estimands, we recently proposed the \emph{separable effects} for causal inference in competing event settings \cite{stensrud2019separable}, inspired by Robins and Richardson's extended graphical approach to mediation analysis \cite{robins2010alternative, didelez2018defining}. Given a plausible decomposition of the treatment into different components, we defined these effects as counterfactual contrasts indexed by hypothetical interventions that assign these components different values. The separable effects have clear advantages over the existing causal estimands. In particular, the separable effects do not require hypothetical interventions that eliminate competing events and avoid cross-world counterfactuals, which can never be subject to empirical scrutiny \cite{robins2010alternative}. Instead, the separable effects can, at least in principle, be directly identified in a future experiment where the treatment components are assigned different values \cite{stensrud2019separable}.

In Stensrud et al. \cite{stensrud2019separable}, we defined separable effects under the special case where neither treatment component affects common causes of the event of interest and the competing event. In these settings, the separable effects can be interpreted as direct and indirect effects, quantifying treatment effects acting exclusively outside of, and exclusively through, the competing event, respectively. We also formalized conditions under which the separable effects can be identified in a study where the original treatment has not yet been decomposed and only baseline covariates are measured.

In this paper, we extend the results of Stensrud et al. \cite{stensrud2019separable} in several ways. We define separable effects under a generalized treatment decomposition assumption, allowing settings where the treatment components affect common causes of the event of interest and the competing event. We describe the interpretation of separable effects in this more general setting, where they may not exclusively quantify direct and indirect effects. We also give weaker assumptions for identification of the separable effects, which not only depend on measurements of baseline covariates, but also time-varying covariates. Finally we present semi-parametric weighted
estimators of the separable effects in these generalized settings.% and supplement our results with suggestions of sensitivity analyses that can assess violations of the identifying assumptions. 

%even when mechanistic assumptions that allow interpretation of these estimands as direct and indirect effects fail.  We define weaker assumptions for identification of these effects in a study without decomposed versions of the original treatment given measurement of both baseline and time-varying covariates.  We also consider approaches for sensitivity analysis to violations of both identifying assumptions and mechanistic assumptions. Finally, we give weighted estimators of the separable effects in this generalized setting.%, which also apply to the special setting considered in Stenrsud et al \cite{stensrud2019separable}.  

The manuscript is organized as follows. In Section \ref{sec: observed data structure}, we describe the observed data structure in which the event of interest is subject to competing events and both baseline and time-varying covariates are measured. In Section \ref{sec: defining treatment eff}, we review the definition of the total effect on an event of interest subject to competing events. In Section \ref{sec: def decomposition}, we define the generalized decomposition assumption that is agnostic to the mechanism by which the treatment exerts effects on the competing event and the event of interest. In Section \ref{sec: defining sep effects}, we formally define the separable effects. In Section \ref{sec: isolation and interpret}, we formalize a range of conditions by which the treatment components may exert effects on future outcomes and explain the interpretation of the separable effects in each case. The most restrictive of these conditions coincides with those considered by Stensrud et al. \cite{stensrud2019separable} under which the separable effects can be interpreted as direct and indirect effects. In Section \ref{sec: identifiability conditions}, we give conditions that allow identification of the separable effects under the observed data structure by a particular g-formula \cite{robins1986new}. In Section \ref{sec: censoring}, we generalize identification results to allow for censored data. In Section \ref{sec: estimation}, we provide two weighted representations of the g-formula for the separable effects and use these representations to motivate weighted estimators, which are supplemented with sensitivity analysis techniques in Appendix \ref{sec: sensitivity analysis}. In Section \ref{sec: sprint implementation}, we apply these results to a randomized study of the effect of intensive versus standard blood pressure therapy on acute kidney injury. In Section \ref{sec: discussion}, we provide a discussion.
%In Section \ref{sec: sensitivity analysis}, we consider approaches for sensitivity analysis to violations of the mechanistic isolation conditions and the identifiability assumptions.

\section{Observed data structure}
\label{sec: observed data structure}
We consider an experiment in which $i=1,\ldots,n$ individuals are randomly assigned to one of two treatment arms $A\in \{0,1\}$ at baseline (e.g. $A=0$ and $A=1$ denote assignment to standard and intensive blood pressure therapy, respectively).  We assume that observations are independent and identically distributed and suppress the $i$ subscript. Let $k=0,1,2,...,K+1$ be equally spaced time intervals with interval $k=0$ corresponding to baseline (the interval of randomization) and interval $k=K+1$ the maximum follow-up of interest at or before the administrative end of follow-up (e.g. 60 months).

For $k>0$, let $Y_{k}$ and $D_{k}$ denote indicators of an event of interest (e.g.\ kidney injury) and a competing event (e.g.\ death) by interval $k$, respectively, and $L_{k}$ a vector of individual time-varying covariates in that interval. Define $ D_{0} \equiv Y_{0} \equiv  0$, i.e.\ the population is restricted to those alive and at risk of all events prior to randomization. Further, define $L_0$ as a vector of pre-randomization covariates. We denote the history of a random variable by an overbar, e.g. $\bar{Y}_{k}=(Y_{0},Y_{1},...,Y_{k})$ is the history of the event of interest through interval $k$, and the future of a random variable through $K+1$ by an underline, e.g. $\underline{Y}_{k}=(Y_{k},Y_{k+1},...,Y_{K+1})$. 

Throughout, we assume a temporal order $(D_{k},Y_{k},L_{k})$ in each interval $k > 0$.  Importantly, as interval lengths become arbitrarily small, this temporal order assumption is guaranteed because the probability that two events of any type occur within that interval approaches zero (equivalent to the common assumption in survival analysis of no tied event times).  In this case, the time-varying event history $\overline{D}_{K+1},\overline{Y}_{K+1}$ coincides with the more familiar competing events data structure $\{\tilde{T}=\min(T,G),J\}$ for $T$ the time to failure from any cause, $G$ a censoring time and $J$ an indicator of cause of failure such that $J=0$ when $\tilde{T}=G$ and $J>0$ otherwise (e.g. $J=1$ if failure from kidney injury and $J=2$ if failure from death).  Defining the observed data structure in terms of time-varying failure status, as opposed to the summarized $(\tilde{T},J)$, is essential for understanding identification and interpretation of many causal estimands in survival analysis, including those considered here, and further avoids the assumption that there exists a censoring time $G$ for individuals who are observed to fail (e.g. die) during the follow-up \cite{young2018choice}.

By definition of a competing event, if an individual experiences this event by interval $k$ without history of the event of interest $(Y_{k-1}=0, D_k=1)$ then $\underline{Y}_{k}=0$; an individual who experiences the competing event cannot subsequently experience the event of interest.  For ease of presentation, we will assume no individual is censored by loss to follow-up (that is, $\overline{D}_{K+1},\overline{Y}_{K+1}$ is fully observed for all individuals randomized at baseline) until Section \ref{sec: censoring}.

\section{The total treatment effect on the event of interest}
\label{sec: defining treatment eff}
For any individual in the study population and for $k \in \{0,\ldots,K\}$, let $Y_{k+1}^{a}$ be the indicator of the event of interest by interval $k+1$ had, possibly contrary to fact, he/she been assigned to $A=a$. The contrast
\begin{align}
    \Pr(Y_{k+1}^{a=1}=1) \text{ vs. } \Pr(Y_{k+1}^{a=0}=1)
\label{eq: basic treatment contrast}
\end{align}
is then a \textsl{total effect} of treatment $A$ on the risk of the event of interest by interval $k+1$ in this study population, which also may include treatment effects on the competing event \cite{young2018choice}. 

We will use causal directed acyclic graphs (DAGs) \cite{pearl2009causality} to represent underlying assumptions on how random variables in the study of Section \ref{sec: observed data structure} are generated.  A causal DAG must represent all common causes of any variable represented on the DAG.  For example, the causal DAG in Figure \ref{fig: simplest graphs}a represents a generally restrictive assumption on this data generating process for a subset of time points because it depicts no common causes (measured or unmeasured) of event status over time.  Throughout we will assume that causal DAGs represent a Finest Fully Randomized Causally Interpreted Structural Tree Graph (FFRCISTG) model, a type of counterfactual causal model that generalizes the non-parametric structural equation model with independent errors (NPSEM-IE) \cite{robins1986new, robins2010alternative,pearl2009causality}, and we assume that statistical independencies in the data are faithful to the DAG \cite{verma1991equivalence}. 

The total effect of $A$ on $Y_2$ in Figure \ref{fig: simplest graphs}a includes all directed (causal) paths between $A$ and $Y_2$.  This includes causal paths that do not capture the treatment's effect on the competing event (e.g.\, $A \rightarrow Y_1 \rightarrow Y_2$ and $A \rightarrow Y_2$) as well as causal paths that capture this effect (e.g.\ $A \rightarrow D_1 \rightarrow D_2 \rightarrow Y_2$ and $A \rightarrow D_2 \rightarrow Y_2$). While the total effect can be straightforward to identify from a study in which $A$ is randomly assigned, its interpretation is complicated when pathways like $A\rightarrow D_2\rightarrow Y_2$ in Figure \ref{fig: simplest graphs}a are present \cite{young2018choice, stensrud2019separable}. For example, a harmful total effect of intensive versus standard blood pressure therapy on kidney injury, i.e. $\Pr(Y_{k+1}^{a=1}=1)>\Pr(Y_{k+1}^{a=0}=1)$, may be wholly or partially explained by one of these pathways (e.g.\ a protective effect of intensive therapy on death).  

\section{Generalized decomposition assumption}
\label{sec: def decomposition}
Consider the following assumption:\\
\begin{enumerate} %\underline{Generalized decomposition assumption}: 
    \item[] \underline{Generalized decomposition assumption}: \begin{align}
    & \text{The treatment $A$ can be decomposed into two binary} \nonumber \\
    & \text{components $A_Y\in \{0,1\}$ and $A_D\in \{0,1\}$ such that,} \nonumber \\
    & \text{in the observed data, the following determinism holds} \nonumber \\
    & A\equiv A_D\equiv A_Y, \text{but in a future study, $A_Y$ and $A_D$} \nonumber \\
    & \text{could, in principle, be assigned different values.}
    \label{assumption: Determinsm}
\end{align}
Let $\overline{Z}_k$, $k\in \{0,  \dots ,K\}$, be the vector of all (direct or indirect) causes of $\underbar{Y}_{k+1}$ and/or $\underbar{D}_{k+1}$, excluding $(A_Y,A_D)$, and $Z_j, j=0,\dots,k$ are the values of these causes in interval $j$.
 We also assume that an intervention that assigns $A=a$ results in the same outcome as an intervention that assigns $A_Y=A_D=a$, that is, 
\begin{align}
& Y_{k+1}^{a_Y=a,a_D=a}=Y_{k+1}^{a} \nonumber \\
& D_{k+1}^{a_Y=a,a_D=a}=D_{k+1}^{a},  \nonumber \\
& Z_{k+1}^{a_Y=a,a_D=a}=Z_{k+1}^{a}, \quad k \in \{0,  \dots ,K\}, \label{eq: definition A=Ay=Ad}
\end{align}
where $W_{k+1}^{a_Y,a_D}$ for $W_{k+1} \in \{Y_{k+1},D_{k+1},Z_{k+1}\}$ $k \in \{0,\ldots,K\}$, is the value of $W_{k+1}$ had, contrary to fact, he/she been assigned the components $A_Y=a_Y$ and $A_D=a_D$, in place of assignment to a value of the original treatment $A$. 
\end{enumerate}

%The generalized decomposition assumption as defined by \eqref{assumption: Determinsm} and \eqref{eq: definition A=Ay=Ad} makes no mechanistic assumptions on the effects exerted by $A_Y$ and $A_D$; the generalized decomposition assumption only implies that the treatment can be decomposed into two components that could in principle be assigned different values in a hypothetical trial, but in the observed data there is a deterministic relation between these components.

Beyond \eqref{eq: definition A=Ay=Ad}, the generalized decomposition assumption makes no mechanistic assumptions on the effects exerted by $A_Y$ and $A_D$. We will consider different examples of treatment decompositions in Section \ref{sec: isolation and interpret}. In Appendix \ref{sec: modified treatment} we consider straightforward further generalizations of our results to settings where $A_Y$ and $A_D$ are not a decomposition of $A$, violating \eqref{assumption: Determinsm}, but are still treatments satisfying \eqref{eq: definition A=Ay=Ad}. 

\section{Separable effects} 
\label{sec: defining sep effects}

Following Stensrud et al \cite{stensrud2019separable}, for $k \in \{0,\ldots,K\}$,
\begin{equation}
\Pr (Y_{k+1}^{a_Y=1,a_D}=1)\text{ vs. }\Pr
(Y_{k+1}^{a_Y=0,a_D}=1), \quad a_D \in \{0,1\},
\label{eq: A_Y sep}
\end{equation}
quantifies the causal effect of the $A_Y$ component on the risk of the event of interest by $k+1$ under an intervention that assigns $A_D=a_D$.  Similarly
\begin{equation}
\Pr (Y_{k+1}^{a_Y,a_D=1}=1)\text{ vs. }\Pr
(Y_{k+1}^{a_Y,a_D=0}=1), \quad a_Y \in \{0,1\},
\label{eq: A_D sep}
\end{equation}%
quantifies the causal effect of the $A_D$ component on the risk of the event of interest by $k+1$ under an intervention that assigns $A_Y=a_Y$.

We will refer to \eqref{eq: A_Y sep} as the \textsl{$A_Y$ separable effect under $A_D=a_D$}, $a_D\in\{0,1\}$ and \eqref{eq: A_D sep} as the \textsl{$A_D$ separable effect under $A_Y=a_Y$}, $a_Y\in\{0,1\}$.  Given the generalized decomposition assumption, the total effect can be expressed as a sum of particular $A_Y$ and $A_D$ separable effects, for example,
\begin{align*}
& \Pr (Y_{k+1}^{a_Y=1,a_D=1}=1)-\Pr (Y_{k+1}^{a_Y=0,a_D=1}=1) \\
& +\Pr (Y_{k+1}^{a_Y=0,a_D=1}=1)-\Pr (Y_{k+1}^{a_Y=a_D=0}=1)  \\
& =\Pr (Y_{k+1}^{a=1}=1)-\Pr (Y_{k+1}^{a=0}=1).
\end{align*}

\section{Isolation conditions and interpretation of separable effects}
\label{sec: isolation and interpret}

In this section, we consider conditions, beyond the generalized decomposition assumption, under which we can ascribe a more precise interpretation to the separable effects \eqref{eq: A_Y sep} and \eqref{eq: A_D sep}. The strongest of these assumptions allows interpretation of these effects  as the separable direct and indirect effects of Stensrud et al \cite{stensrud2019separable}. To formally define these additional conditions, we will first review the definition of an \emph{extended causal DAG} \cite{robins2010alternative}: an extended causal DAG augments the original causal DAG with additional nodes representing components of the treatment, and bold vertices representing the deterministic relation between these components and the full treatment in the observed data. For example, the extended causal DAG in Figure \ref{fig: simplest graphs}b is an augmented version of the causal DAG in Figure \ref{fig: simplest graphs}a. The extended causal DAG also encodes assumptions, not represented on the original causal DAG, on the mechanisms by which each treatment component exerts effects on future variables.

The interpretation of the separable effects will also depend on the nature of $\overline{Z}_k$, $k>0$ as defined in Section \ref{sec: def decomposition}. We will remain agnostic until Section \ref{sec: identifiability conditions} as to whether all or some of the components of $\overline{Z}_k$ are measured or unmeasured in our study of Section \ref{sec: observed data structure}. Therefore the components of $\overline{Z}_k$ may or may not partially or fully coincide with the components of the measured covariate history $\overline{L}_k$.  As we will discuss in Section \ref{sec: identifiability conditions}, the overlap (or lack thereof) of $\overline{Z}_k$ and $\overline{L}_k$ will impact our ability to identify the separable effects under the observed data structure of Section \ref{sec: observed data structure}.  
%Note that throughout we will assume that causal DAGs represent underlying counterfactual causal models (Randomized Causally Interpretable Structure Tree Graph (FFRCISTG) models \cite{robins1986new}).

%\ref{fig:full isolation} , where $Z_k$, $k=0,1$ are (possibly unmeasured) common causes in interval $k$ of the event of interest and the competing event

\subsection{Full isolation}
\label{sec: full iso}
Consider an extended causal DAG in which $A$ is decomposed into two components $A_Y$ and $A_D$ satisfying the generalized decomposition assumption \eqref{eq: definition A=Ay=Ad}, and define the following conditions:
\begin{align}
    & \text{The only causal paths from } A_Y \text{ to } D_{k+1}, k=0,...,K \text{ are directed} \nonumber \\
    &\text{paths intersected by }
     Y_{j},  j=0,...,k. \label{def: Ay partial iso} \\ 
    & \text{The only causal paths from } A_D \text{ to } Y_{k+1},  k=0,...,K \text{ are directed} \nonumber \\ 
    &\text{paths intersected by }
    D_{j+1}, j=0,...,k. \label{def: Ad partial iso} 
\end{align}
When both conditions \eqref{def: Ay partial iso} and \eqref{def: Ad partial iso} hold we will say there is \textsl{full isolation}. This assumption is satisfied in Figure \ref{fig: simplest graphs}b which assumes there are no common causes of the event of interest and the competing event. It is also satisfied in Figure \ref{fig:full isolation}b which allows the presence of both pre-randomization ($Z_0$) and post-randomization ($Z_1$) common causes.%, which may or may not include the measured covariates $L_k$, $k=0,1$.

Under the generalized decomposition assumption and full isolation, \eqref{eq: A_Y sep} are the separable direct effects of $A$ on the risk of the event of interest by $k+1$, which do not capture the treatment's effect on the competing event, and \eqref{eq: A_D sep} are the separable indirect effects of $A$ on this risk, which only capture the treatment's effect on the competing event. Full isolation coincides with the settings considered by Stensrud et al \cite{stensrud2019separable}, which allowed for the presence of pre-randomization, but not post-randomization, common causes of the event of interest and the competing event.

Returning to our running example, assume that the blood pressure treatment $A$ can be decomposed into a component $A_Y$ that binds to receptors in the kidneys, e.g.\ by relaxing the efferent arterioles which is a well-known biological effect of commonly used blood pressure drugs such as angiotensin-converting-enzyme inhibitors (ACE)
and angiotensin II receptor blockers (ARB), and a component $A_D$ that includes the remaining components of the antihypertensive therapy, some of which lead, for example, to  reductions in systemic blood pressure.

Then, $A_Y=1$ and $A_Y=0$ are the levels (doses) of the $A_Y$ component under standard and intensive therapy, respectively, and $A_D=1$ and $A_D=0$ are defined analogously.

Full isolation would be satisfied in this case if (i) the $A_Y$ component only exerts effects on death through its effects on kidney function and (ii) the remaining $A_D$ component only exert effects on kidney function through its effects on survival.  In Section \ref{sec: Ay iso}, however, we argue that the assumption of full isolation may not be reasonable in this setting. 

\subsection{$A_Y$ partial isolation}
\label{sec: Ay iso}
The causal graphs in Figures \ref{fig: simplest graphs} and \ref{fig:full isolation} make the restrictive assumption that there are no common causes of the event of interest and competing event that are, themselves, affected by treatment.  In our running example, this assumption likely fails: a reduction in blood pressure may increase the risk of kidney injury (the event of interest) due to hypoperfusion of the kidneys (for example, when patients are dehydrated) \cite{aalen2019time} and also may affect the risk of mortality (the competing event).  Further, blood pressure itself clearly may be affected by the blood pressure treatment. The causal DAG in Figure \ref{fig:dagAarrowLk} depicts the more realistic assumption that blood pressure ($Z_1$) is both a possible common cause of future kidney injury $Y_2$ and mortality $D_2$ and also, itself, affected by treatment $A$ (represented by the blue arrow connecting $A$ to $Z_1$).    

Suppose, however, that the $A_Y$ component of the treatment $A$ (that which binds to receptors in the kidneys) has no effect on blood pressure outside of its possible effect on kidney function, such that only the remaining components of treatment, $A_D$, can directly affect blood pressure. The extended DAG in Figure \ref{fig:partial isolation}a, which is one possible extension of the causal DAG in Figure \ref{fig:dagAarrowLk}, represents this assumption by the blue arrow from $A_D$ into $Z_1$ and the absence of an arrow from $A_Y$ into $Z_1$. In this case, condition \eqref{def: Ay partial iso} holds but \eqref{def: Ad partial iso} does not.  When only the condition \eqref{def: Ay partial iso} holds, but \eqref{def: Ad partial iso} fails, we will say there is \textsl{$A_Y$ partial isolation}.  

Unlike under full isolation, under $A_Y$ partial isolation, the $A_D$ separable effects \eqref{eq: A_D sep} quantify \textsl{both} direct effects of the treatment on the event of interest not through the competing event (e.g. the path $A_D \rightarrow Z_1 \rightarrow Y_2$ in Figure \ref{fig:partial isolation}a) and indirect effects through the competing event (e.g. the path $A_D \rightarrow D_1 \rightarrow Y_1 \rightarrow Y_2$ in Figure \ref{fig:partial isolation}a).  By contrast, the $A_Y$ separable effects \textsl{only} quantify direct effects not through the competing event. However, the $A_Y$ separable effects do not capture all direct effects in this case, because some of these pathways may originate from $A_D$ as described above. In the current example, the $A_Y$ separable effect evaluated at $a_D=1$ may be of particular clinical interest, quantifying the effect of assignment to the current intensive therapy containing all components versus a modified intensive therapy that lacks the component possibly affecting the kidneys. 

\subsection{$A_D$ partial isolation}
\label{sec: Ad iso}
When \eqref{def: Ad partial iso} holds, but \eqref{def: Ay partial iso} fails, we will say there is \textsl{$A_D$ partial isolation}. $A_D$ partial isolation is represented in Figure \ref{fig:partial isolation}b, depicting an alternative augmentation of the causal DAG in Figure \ref{fig:dagAarrowLk}.  Under $A_D$ partial isolation, the $A_Y$ separable effects \eqref{eq: A_Y sep} quantify both direct effects of the treatment on the event of interest not through the competing event (e.g. the path $A_Y \rightarrow Z_1 \rightarrow Y_2$ in Figure \ref{fig:partial isolation}b) and indirect effects through the competing event (e.g. the path $A_Y \rightarrow Z_1 \rightarrow D_2 \rightarrow Y_2$ in Figure \ref{fig:partial isolation}b).  By contrast, the $A_D$ separable effects \textsl{only} quantify indirect effects through the competing event. However, the $A_D$ separable effects do not capture all indirect effects in this case, because some of these pathways may originate from $A_Y$ as above. 

As an example of $A_D$ partial isolation, trials have reported an increase in the risk of new-onset type 2 diabetes among patients assigned to statins \cite{sattar2010statins,ridker2012cardiovascular}.  However, statins also reduce the risk of all-cause mortality, a competing event for type 2 diabetes onset (the event of interest).  It is therefore unclear whether a total effect of statin treatment on type 2 diabetes is due a protective treatment effect on mortality, a biologically harmful process leading to type 2 diabetes onset or some combination.  

Figure \ref{fig:dagAarrowLk} illustrates a possible underlying causal structure for a trial with random assignment to statin therapy relating treatment assignment $A$, mortality $D_k$ and new-onset type 2 diabetes $Y_k$, $k=1,2$. Body weight ($Z_1$) is a possible common cause of both mortality and onset of type 2 diabetes which may also be affected by statin treatment.  Consider a decomposition of $A$ (represented in Figure \ref{fig:partial isolation}b) where $A_D$ may lead to increased risk of diabetes only by reducing mortality risk (e.g.\ through $A_D \rightarrow D_1 \rightarrow D_2 \rightarrow Y_2$, where the reduction in mortality risk is likely due to reduced levels of low density lipoprotein in the blood), while a second component $A_Y$ exerts unintended effects of statins on diabetes through body weight (e.g.\ through $A_Y \rightarrow Z_1 \rightarrow Y_2$). As in the previous example of blood pressure therapy and kidney injury, the $A_Y$ separable effect of statin therapy on type 2 diabetes risk evaluated at $a_D=1$ may be of particular clinical interest, quantifying the effect of assignment to the original statin therapy containing both components versus a modified treatment that removes the component possibly leading to weight gain.  %\textcolor{red}{I may be confused again, but I'm not sure that explanation in next sentence matches the story we have above?  We don't depict cholesterol on Figure 4b and are making simplifying assumption that it doesn't cause anything but mortality on the graph and shares no common causes with anything on the graph.  That might be unrealistic but think ok for illustration sake. But in our story above the modified treatment is simply different from the old one because it doesn't affect bodyweight.  It doesn't remove or change pathways that affect cholesterol?  I think we would have to change story above or just remove this next sentence?  Am I missing something? In fact, if we delete next sentence then I might remove reference to cholesterol above because we don't depict it on the graph?}Indeed, such a modified treatment may be conceivable in practice, because there exist new therapeutics that reduce LDL levels through different biological pathways compared to statins, which are not associated with an increased risk of diabetes \cite{sabatine2017cardiovascular}.

\subsection{No isolation}
If there are direct arrows from $A_Y$ and $A_D$ into common causes of $Y_{k+1}$ and $D_{k+1}$, $k \in \{0,\ldots,K\}$, as illustrated in Figure \ref{fig:no isolation}, then both \eqref{def: Ay partial iso} and \eqref{def: Ad partial iso} fail. In this case, both the $A_Y$ separable effects \eqref{eq: A_Y sep} and the $A_D$ separable effects \eqref{eq: A_D sep} quantify direct and indirect effects of the treatment on the event of interest, outside of and through, the competing event. When both conditions \eqref{def: Ay partial iso} and \eqref{def: Ad partial iso} fail, we will say there is \textsl{no isolation}. 

There are two important cases of no isolation that have different implications for the interpretation of separable effects and, as we will see, their identification in a two-arm trial. First, suppose there are direct arrows from $A_Y$ and $A_D$ into the same set of common causes $Z_k$ of $\underline{Y}_{k+1}$ and $\underline{D}_{k+1}$, as illustrated in Figure \ref{fig:no isolation}a.  In this case, the $A_Y$ separable effects and the $A_D$ separable effects will capture common downstream pathways (e.g. $Z_1 \rightarrow Y_2$ in Figure \ref{fig:no isolation}a) between the original treatment $A$ and the event of interest $Y_{k+1}$.   

Alternatively, suppose $A_Y$ and $A_D$ may only exert effects on \emph{different} sets of common causes $Z_{A_Y,1}$ and $Z_{A_D,1}$ of $Y_{k+1}$ and $D_{k+1}$ as illustrated in Figure \ref{fig:no isolation}b; here $A_Y$ exerts effects on $Y_{k+1}$ through one set of causal paths from $A_Y$ to $Y_{k+1}$, and  $A_D$ exerts effects on $Y_{k+1}$ through a distinct set of causal paths.  In this case, the $A_Y$ separable effects and the $A_D$ separable effects will capture no common pathways between the original treatment $A$ and the event of interest $Y_{k+1}$. 

\subsection{$Z_k$ partition}
\label{sec: zk partition}
Suppose there exist vectors $Z_{A_D,k}, Z_{A_Y,k}$ such that $Z_k \equiv (Z_{A_D,k}, Z_{A_Y,k})$, $k>0$, and
\begin{align}
& \text{The only causal paths from } A_Y \text{ to } D_{k+1} \text{ and } Z_{A_D,k+1}, k=0,...,K \text{ are through } \nonumber  \\
&  Y_{j} \text{ or any component of } Z_{A_Y,j}, j=0,...,k.  \label{def: Ay partitioning Z}  \\
& \text{The only causal paths from } A_D \text{ to } Y_{k+1} \text{ and } Z_{A_Y,k+1},  k=0,...,K \text{ are through } \nonumber  \\ 
&  D_{j+1} \text{ or any component of } Z_{A_D,j},  j=0,...,k. \label{def: Ad partitioning Z}
\end{align}
%\textcolor{red}{Tried to match this more to how we had the conditions without Z.  I also changed ``and'' to ``or'' and got rid of $s$ index, do you agree right logic?  The formatting is messed up for some reason of (8) and (9) sorry, can you fix?  I did something weird probably, you get the idea. }
When both conditions \eqref{def: Ay partitioning Z} and \eqref{def: Ad partitioning Z} hold we will say there is a \textsl{$Z_k$ partition}. %In Appendix \ref{sec: appendix lemmas}, we show that if $Z_k$ partition fails, then there is a recanting witness \cite{avin2005identifiability, malinsky2019potential}. 

The assumption of a $Z_k$ partition holds trivially under full isolation for any partition of $Z_k$ as illustrated in Figure \ref{fig:full isolation}b. However, this assumption will only hold in some cases of partial isolation (e.g. Figure \ref{fig:partial isolation}) and no isolation (e.g. Figure \ref{fig:no isolation}b).  $Z_k$ partition fails under the case of no isolation represented in Figure \ref{fig:no isolation}a.  It also fails under the case of $A_Y$ partial isolation represented in Figure \ref{fig: Z_k vs partial iso}a and $A_D$ partial isolation represented in Figure \ref{fig: Z_k vs partial iso}b.   Under any version of $Z_k$ partition, the $A_Y$ separable effects and the $A_D$ separable effects will capture no common pathways between the original treatment $A$ and the event of interest $Y_{k+1}$.

\section{Identification of separable effects}
\label{sec: identifiability conditions}
Regardless of the isolation assumptions that impact the interpretation of separable effects, if we had data from a four-arm trial in which $A_Y$ and $A_D$ were randomly assigned with no loss to follow-up, we would be guaranteed identification of the separable effects \cite{stensrud2019separable, robins2016uk}; that is, we could identify, for $k \in \{0,\ldots,K\}$, 

\begin{align}
& \Pr (Y_{k+1}^{a_Y,a_D}=1) \text{ for } a_Y,a_D \in \{0,1\}
\label{eq: estimand}
\end{align}%
by $\Pr (Y_{k+1}=1 \mid A_Y = a_Y, A_D = a_D)$ \cite{hernan2018causal}. However, in order to identify \eqref{eq: estimand} for $a_Y\neq a_D$ in the absence of a four-arm trial, we must make untestable assumptions that are not guaranteed to hold, even in a two-armed trial such as that described in Section \ref{sec: observed data structure} with no loss to follow-up. In addition to the generalized decomposition assumption, consider the following assumptions: 
\begin{enumerate}
\item[1.] Exchangeability: \begin{align} %\textbf{E1}: 
& \underline{Y}_{1}^{a},\underline{D}_{1}^{a},\underline{L}^{a}_{1} \independent A \mid L_{0}. \label{ass: exchangeability 1}
\end{align}
Exchangeability is expected to hold in a study where $A$ is randomly assigned,  possibly conditional on the measured baseline covariates $L_0$, but is otherwise not guaranteed to hold \cite{hernan2018causal}.  For example, Figure \ref{fig:partial isolationExpanded} illustrates various extended graphs that explicitly depict measured (e.g.\ $L_1$) and unmeasured (e.g. $U_{L,Y}$) variables. Exchangeability is represented in Figures \ref{fig:partial isolationExpanded}a-f by the absence of any unblocked backdoor paths between $A$ and $(\underline{Y}_{1},\underline{D}_{1},\underline{L}_{1}$) conditional on $L_0$ \cite{pearl2009causality}.

\item[2.] Consistency: 
\begin{align}
  & \text{If } A=a, \nonumber \\
  & \text{then } \bar{Y}_{k+1} = \bar{Y}^{a}_{k+1}, \bar{D}_{k+1} = \bar{D}^{a}_{k+1} \text{ and } \bar{L}_{k+1} = \bar{L}^{a}_{k+1} \text{ for } k \in \{0,\ldots,K\}.
  \label{ass: consistency}
\end{align}
Consistency states that if an individual has observed treatment consistent with an intervention that sets $A=a$, then that individual's future observed outcomes and time-varying covariates are equal to his/her counterfactual outcomes and time-varying covariates, respectively, under an intervention that sets $A=a$.  Consistency holds in a study where $A$ is randomly assigned but is otherwise not guaranteed.
\item[3.] Positivity: 
\begin{align}
& f_{L_0}(l_0)>0\implies \nonumber  \\
& \quad \Pr (A=a\mid  L_0=l_0)>0, \text{ for } a\in\{0,1\}  \label{eq: positivity of A} \\
& f_{\overline{L}_k,D_{k+1},Y_k}(\overline{l}_k,0,0)> 0  \implies \nonumber\\ 
& \quad  \Pr(A=a|D_{k+1}=Y_k=0,\overline{L}_k=\overline{l}_k)>0, \nonumber\\
&\text{ for } a\in\{0,1\}\text{ and }k \in \{0,\ldots,K\}. \label{eq: positivity of A part 2}
\end{align}%
Assumption \eqref{eq: positivity of A} states that, for any possibly observed level of the measured baseline covariates, there exist individuals with $A=1$ and individuals with $A=0$.  This assumption  will hold by design in a randomized controlled trial like that of Section \ref{sec: observed data structure}. Assumption \eqref{eq: positivity of A part 2} requires that for any possibly observed level of the measured time-varying covariate history among those surviving all events through each follow-up time, there exist individuals with $A=1$ and individuals with $A=0$. Even when $A$ is randomized, assumption \eqref{eq: positivity of A part 2} does not hold by design.  However, it can be assessed in the observed data. %Condition \eqref{eq: positivity of A part 2} is weaker than the conditions that are required to identify pure direct and indirect effects in settings with time-varying mediators. 
% requires that if a subject is unaffected by the event of interest and the competing event with covariate history $\overline{L}=\overline{l}$ under treatment $A=a$ in interval $k = 1,\dots, K+1$, there exist a subject with the same covariate history $\overline{L}=\overline{l}$ and $A=a' \neq a$ who is unaffected by the event of interest and the competing event in $k$. 

\item[4.] Dismissible component conditions: \\
Let $G$ refer to a hypothetical four-arm trial in which both $A_Y$ and $A_D$ are randomly assigned, possibly to different values.  We define the following conditions for $k \in \{0,\ldots,K\}$:
\begin{align} % \mathbf{\Delta2: }  removed dismissible component name. \mathbf{\Delta3a:  }. 
& Y_{k+1}(G) \independent A_D(G) \mid A_Y(G), D_{k+1}(G)=Y_{k}(G)=0, \bar{L}_k(G), \label{ass: delta 1}\\
& D_{k+1}(G) \independent A_Y(G) \mid A_D(G), D_{k}(G)=Y_{k}(G)=0, \bar{L}_k(G), \label{ass: delta 2} \\ 
& L_{A_Y,k}(G) \independent A_D(G) \mid A_Y(G), Y_{k}(G)=D_{k}(G)=0, \bar{L}_{k-1}(G),L_{A_D,k}(G),  \label{ass: delta 3a} \\
& L_{A_D,k}(G) \independent A_Y(G) \mid A_D(G), D_{k}(G)=Y_{k}(G)=0, \bar{L}_{k-1}(G),  \label{ass: delta 3b}
\end{align} 
where $Y_{k+1}(G)$, $ D_{k+1}(G)$ and $L_{k}(G) \equiv (L_{A_Y,k}(G),L_{A_D,k}(G))$ are values of the outcome of interest, the competing outcome and (a temporally ordered partitioning of) the measured covariates at $k+1$, respectively, had we, contrary to fact, randomly assigned $A_Y(G)$ and $A_D(G)$, the values of $A_Y$ and $A_D$ assigned an individual under $G$, respectively. It follows directly from the generalized decomposition assumption that, using d-separation rules \cite{robins2010alternative,pearl2009causality}, the dismissible component conditions can be read off of a transformation of the extended causal DAG, representing an augmented version of our original data generating assumption, in which $A$ and the deterministic arrows originating from $A$ are eliminated. See similar results in Didelez \cite[Figure 2]{didelez2018defining}. 

For example, consider Figure \ref{fig:transformG}a, a transformation of Figure \ref{fig:partial isolation}a, which assumes that $L_k \equiv Z_k$, (i.e., all common causes of the event of interest and competing event are measured). Assumption \eqref{ass: delta 1} holds in Figure \ref{fig:transformG}a by the absence of any unblocked backdoor paths between $A_D(G)$ and $Y_2(G)$ conditional on $A_Y(G)$, $D_1(G)$, $D_2(G)$, $L_1(G)$ and $Y_1(G)$, and similarly assumption \eqref{ass: delta 2} holds due to the absence of any unblocked paths between  $A_Y(G)$ and $D_2(G)$ conditional on $A_D(G)$, $D_1(G)$, $L_1(G)$ and $Y_2(G)$. Analogously, by choosing $L_{k}(G) = (L_{A_Y,k}(G),\emptyset),k=1,2$,  \eqref{ass: delta 3a} and \eqref{ass: delta 3b} also hold in Figure \ref{fig:transformG}a.

Consider also the examples in Figure \ref{fig:partial isolationExpanded}; under $G$ transformations of each graph, all dismissible component conditions hold in Figures \ref{fig:partial isolationExpanded}a-d, where $L_{A_D,1}= L_1$ and $L_{A_Y,1}= \emptyset$ in Figures \ref{fig:partial isolationExpanded}a-c. By contrast, Figures \ref{fig:partial isolationExpanded}e-f illustrate failure of these conditions under their $G$ transformations. For example, while \eqref{ass: delta 2}-\eqref{ass: delta 3b} hold in Figure \ref{fig:partial isolationExpanded}e, \eqref{ass: delta 1} is violated by the the unblocked collider path $A_D(G) \rightarrow \boxed{D_2(G)} \leftarrow U_{L,D} \rightarrow \boxed{L_1(G)} \leftarrow U_{L,Y} \rightarrow Y_2(G) $, regardless of whether we define $L_{A_D,1}= L_1$ and $L_{A_Y,1}= \emptyset$ or  $L_{A_Y,1}= L_1$ and $L_{A_D,1}= \emptyset$.  Similarly, in Figure \ref{fig:partial isolationExpanded}f, while \eqref{ass: delta 2}-\eqref{ass: delta 3b} hold when we define $L_{A_D,1}= L_1$ and $L_{A_Y,1}= \emptyset$, \eqref{ass: delta 1} is violated by the unblocked collider path $A_D(G) \rightarrow \boxed{L_1(G)} \leftarrow U_{L,Y} \rightarrow Y_2(G)$.
\end{enumerate}

\subsection{Relation between isolation and dismissible component conditions}\label{isodis}
Note that $Z_k$ partition is a \textsl{necessary} condition for the dismissible component conditions to hold for any choice of measured covariates $L_k$ and their partition (see proof in Appendix \ref{sec: appendix lemmas}). However, $Z_k$ partition is not \textsl{sufficient} to ensure these conditions as also illustrated by Figure \ref{fig:partial isolationExpanded}.  For example, in Figure \ref{fig:partial isolationExpanded}e, full isolation holds but, as we noted above, the dismissible component conditions fail due to failure to measure either the common cause $U_{L,D}$ or $U_{L,Y}$.  Similarly, the graph in Figure \ref{fig:partial isolationExpanded}f satisfies $Z_k$ partition, but, as we noted above, the dismissible component conditions fail due to failure to measure the common cause $U_{L,Y}$. These results on identification and those of the previous section coincide with previous identification results on identification of path-specific effects \cite{shpitser2013counterfactual, avin2005identifiability} but without explicit consideration of competing events.   

In Appendix \ref{sec: appendix lemmas} we also show that: (i) if the dismissible component conditions hold when we select the $L_k$ partition $L_{A_D,k}=L_k$ and $L_{A_Y,k}=\emptyset$ for all $k\in\{1,\dots,K\}$, then $A_Y$ partial isolation holds; (ii) if the dismissible component conditions hold when we select the $L_k$ partition $L_{A_Y,k}=L_k$ and $L_{A_D,k}=\emptyset$ for all $k\in\{1,\dots,K\}$, then $A_D$ partial isolation holds; and (iii) if the dismissible component conditions hold when we select either of the partitions in (i) and (ii) then full isolation holds and $L_k$ does not depend on $A$ at any $k$, given the measured past.

Note that we may consider a decomposition of $A$ into more than two components under a stronger version of the generalized decomposition assumption (Appendix \ref{sec: A_Z decomposition}). Then we can define alternative versions of separable effects indexed by interventions that may assign different values to these components. Under failure of $Z_k$ partition for a two-way decomposition, identification may be possible for separable effects indexed by multiple components but under additional assumptions. We briefly consider these alternative definitions of separable effects and their identification for the case of a three-way decomposition of $A$ in Appendix \ref{sec: A_Z decomposition}.

\subsection{The g-formula for separable effects}
\label{sec: identification g formula}
For $k \in \{0,\ldots,K\}$, let $l_k = (l_{A_y,k},l_{A_D,k})$ be a realization of the measured time-varying covariates at $k$, such that $l_{A_Y,k}$ and $l_{A_D,k}$ are possible realizations of $L_{A_y,k}$ and $L_{A_D,k}$, respectively (a chosen partition of $L_k$ under an assumed temporal order $L_{A_D,k}, L_{A_Y,k}$). Provided that exchangeability, consistency, positivity and the 4 dismissible component conditions hold, we can identify $\Pr(Y^{a_Y, a_D}_{k+1}=1)$ by 
\begin{align}
       & \sum_{\bar{l}_k}  \Big[ \sum_{s=0}^{k} \Pr(Y_{s+1}=1 \mid  D_{s+1}= Y_{s}=0, \bar{L}_{s} = \bar{l}_{s}, A = a_Y) \nonumber \\ 
       &  \prod_{j=0}^{s}  \big\{ \Pr(D_{j+1}=0 \mid D_{j}= Y_{j}=0, \bar{L}_{j} = \bar{l}_{j},  A = a_D)  \nonumber \\
      &  \times \Pr(Y_{j}=0 \mid Y_{j-1}=D_{j}=0, A = a_Y) \nonumber \\
    &\times \Pr(L_{A_Y,j}=l_{A_Y,j} \mid Y_{j} = D_{j} = 0, \bar{L}_{j-1}=\bar{l}_{j-1},  L_{A_D,j}= l_{A_D,j}, A = a_Y) \nonumber \\
    &\times \Pr(L_{A_D,j}=l_{A_D,j} \mid Y_{j} = D_{j} = 0, \bar{l}_{j-1},  A = a_D) \big\} \Big],
\label{eq: identifying formula nc}
\end{align}
$k \in \{0,\ldots,K\}$ where, generally, for any vector of random variables $A$ and $B$, $f(a|b)\equiv f_{A|B}(a|b)$ is the joint density of $A$ given $B$ evaluated at $a,b$. See Appendix \ref{sec: proof of idenditifiability} for proof. We will refer to expression \eqref{eq: identifying formula nc} as the \textsl{g-formula} \cite{robins1986new} for $\Pr(Y^{a_Y, a_D}_{k+1}=1)$. 

\section{Censored data}
\label{sec: censoring}
We now relax the assumption of no losses to follow-up, allowing that some individuals are censored at some point during the study. For $k>0$, let $C_{k}$ denote censoring by loss to follow-up by interval $k$, and assume a temporal order $(C_{k},D_{k},Y_{k},L_{k})$ in each interval $k > 0$. We remind the reader that the the temporal ordering assumption is analogous to assumptions about ties in continuous time settings, which becomes practically irrelevant when the time intervals are small. Hereby, we will implicitly redefine all counterfactual outcomes $Y_{k+1}^{a_Y, a_D}$ in terms of outcomes under an additional intervention that eliminates censoring. 

When censoring is present, the isolation conditions defined in Section \ref{sec: isolation and interpret} and their implications for interpretation of separable effects are unchanged.  However, in this case, additional exchangeability, positivity and consistency assumptions are required for identification of \eqref{eq: estimand} using only the observed data. Given assumptions \eqref{ass: E1 app}-\eqref{ass: delta 3b app} in Appendix \ref{sec: proof of idenditifiability}, which extend the  assumptions of Section \ref{sec: identifiability conditions} to allow that censoring is present and dependent on the measured time-varying risk factors $\overline{L}_k$, we can identify \eqref{eq: estimand} by 
\begin{align}
       & \sum_{\bar{l}_k}  \Big[ \sum_{s=0}^{k} \Pr(Y_{s+1}=1 \mid C_{s+1}= D_{s+1}= Y_{s}=0,\bar{L}_{s} = \bar{l}_{s}, A=a_Y) \nonumber \\ 
       &  \prod_{j=0}^{s}  \big\{ \Pr(D_{j+1}=0 \mid C_{j+1}=D_{j}= Y_{j}=0, \bar{L}_{j} = \bar{l}_{j},  A = a_D)  \nonumber \\
      &  \times \Pr(Y_{j}=0 \mid C_{j}=D_{j}= Y_{j-1}=0,\bar{L}_{j} = \bar{l}_{j}, A = a_Y) \nonumber \\
    &\times f(L_{A_Y,j}=l_{A_Y,j} \mid C_{j}= Y_{j} = D_{j} = 0, \bar{L}_{j-1} = \bar{l}_{j-1}, L_{A_D,j}= l_{A_D,j}, A = a_Y) \nonumber \\
    &\times f(L_{A_D,j}=l_{A_D,j} \mid C_{j}= Y_{j} = D_{j} = 0,  L_{j-1}=\bar{l}_{j-1},  A = a_D) \big\} \Big].
\label{eq: identifying formula}
\end{align}
See Appendix \ref{sec: proof of idenditifiability} for proof. We say expression \eqref{eq: identifying formula} is the g-formula for \eqref{eq: estimand} under elimination of censoring. When assumptions \eqref{ass: E1 app}-\eqref{ass: delta 3b app} hold replacing $\overline{L}_k = L_0$, $k \in \{0,\ldots,K\}$ then identification of \eqref{eq: estimand} is achieved by a simplified version of \eqref{eq: identifying formula} which was given in Stensrud et al \cite{stensrud2019separable}.

\section{Estimation of separable effects}
\label{sec: estimation}
The g-formula \eqref{eq: identifying formula} has the following alternative representations, 
\begin{align}
   \sum_{s=0}^{k} E & [ W_{C,s}(a_Y) W_{D,s}(a_Y,a_D) W_{L_{A_D},s}(a_Y,a_D) (1-Y_{s}) (1-D_{s+1}) Y_{s+1} \mid A=a_Y], %\mid C_{k+1} = 0
    \label{eq: alternative id formula 1}
\end{align}
where 
\begin{align*}
W_{D,s} (a_Y,a_D) &= \frac{\prod_{j=0}^{s}  \Pr(D_{j+1}=0 \mid C_{j+1}=D_{j}= Y_{j}=0, \bar{L}_{j},  A = a_D) }{ \prod_{j=0}^{s} \Pr(D_{j+1}=0 \mid C_{j+1}=D_{j}= Y_{j}=0, \bar{L}_{j},  A = a_Y) }, \\
W_{L_{A_D},s} (a_Y,a_D) &= \frac{\prod_{j=0}^{s}   \Pr(A =a_D \mid C_{j}= Y_{j} = D_{j} = 0, L_{A_D,j}, \bar{L}_{j-1}) }{ \prod_{j=0}^{s}   \Pr(A =a_Y \mid C_{j}= Y_{j} = D_{j} = 0,  L_{A_D,j}, \bar{L}_{j-1}) } \nonumber \\
& \times  \frac{\prod_{j=0}^{s}   \Pr(A =a_Y \mid C_{j}= Y_{j} = D_{j} = 0, \bar{L}_{j-1}) }{ \prod_{j=0}^{s}   \Pr(A =a_D \mid C_{j}= Y_{j} = D_{j} = 0, \bar{L}_{j-1}) }, \nonumber \\
W_{C,s} (a_D) &= \frac{I(C_{s+1} =0) }{ \prod_{j=0}^{s}  \Pr(C_{j+1}=0 \mid C_{j}=D_{j}= Y_{j}=0, \bar{L}_{j},  A = a_D) }, %I(C_{j+1}=0)\prod_{j=0}^{k}  \Pr(C_{j+1}=0 \mid C_{j}=D_{j}= Y_{j}=0,  A = a_D)
\end{align*}
and 
\begin{align}
   \sum_{s=0}^{k} E & \{ W_{C,s}(a_D) W_{Y,s}(a_D,a_Y) W_{L_{A_Y},s}(a_D,a_Y) (1-Y_{s}) (1-D_{s+1}) Y_{s+1} \mid A=a_D \}, %\mid C_{k+1} = 0
    \label{eq: alternative id formula 2}
\end{align}
where $W_{C,s} (a_D)$ is defined as in \eqref{eq: alternative id formula 1} and
\begin{align*}
W_{Y,s} (a_D,a_Y) & = \frac{ \Pr(Y_{s+1}=1 \mid C_{s+1}=D_{s+1}= Y_{s}=0, \bar{L}_{s},  A = a_Y) }{\Pr(Y_{s+1}=1 \mid C_{s+1}=D_{s+1}= Y_{s}=0, \bar{L}_{s},  A = a_D) } \\
& \times \frac{\prod_{j=0}^{s-1}  \Pr(Y_{j+1}=0 \mid C_{j+1}=D_{j+1}= Y_{j}=0, \bar{L}_{j},  A = a_Y) }{ \prod_{j=0}^{s-1} \Pr(Y_{j+1}=0 \mid C_{j+1}=D_{j+1}= Y_{j}=0, \bar{L}_{j},  A = a_D) }, \\
W_{L_{A_Y},s} (a_D,a_Y) & = \frac{\prod_{j=0}^{s}   \Pr(A =a_Y  \mid C_{j}= Y_{j} = D_{j} = 0, \bar{L}_{j}) }{ \prod_{j=0}^{s}   \Pr(A =a_D \mid C_{j}= Y_{j} = D_{j} = 0, \bar{L}_{j}) }   \nonumber \\
& \times \frac{\prod_{j=0}^{s}   \Pr(A =a_D \mid C_{j}= Y_{j} = D_{j} = 0,  L_{A_D,j}, \bar{L}_{j-1}) }{ \prod_{j=0}^{s}   \Pr(A =a_Y \mid C_{j}= Y_{j} = D_{j} = 0, L_{A_D,j}, \bar{L}_{j-1}) },
\end{align*}
as formally shown in Appendix \ref{sec: proof of alternative id}. 

% \textcolor{red}{I think drop all text here before ``Here we propose two possible weighted estimatored derived from...''?}Stensrud et al \cite{stensrud2019separable} considered a parametric g-computation estimator \cite{robins1986new} of the g-formula \eqref{eq: identifying formula} for the case where $\overline{L}_{k}=L_0$, $k = 0,\dots, K$. In this simpler case, parametric g-computation requires consistent estimation of the discrete hazards of the event of interest and the competing event conditional on the, typically high-dimensional, vector $L_0$.  More generally, a parametric g-computation estimator of \eqref{eq: identifying formula} will further require consistent estimation of the joint densities of the vector $L_k$ at each $k$ given past measured variables.
%Here we propose two possible semi-parametric weighted estimators derived from \eqref{eq: alternative id formula 1} and \eqref{eq: alternative id formula 2}

Representations \eqref{eq: alternative id formula 1} and \eqref{eq: alternative id formula 2} motivate weighted estimators of the separable effects, which generalize the weighted estimators given by \cite{stensrud2019separable}.  We let $\nu_{a_Y,a_D,k} $ denote \eqref{eq: identifying formula}, and $(\overline{L}_{k,i},L_{A_D,k,i})$ is study individual $i$'s values of $(\overline{L}_{k},L_{A_D,k})$ for a user-chosen partition of $L_k$. 

%\subsection{Estimator based on \eqref{eq: alternative id formula 1}}
Define 
\begin{align*}
\hat{W}_{D,k,i} (a_Y,a_D;\hat{\alpha}_D) &= \frac{\prod_{j=0}^{k}  \Pr(D_{j+1}=0 \mid C_{j+1}=D_{j}= Y_{j}=0, \bar{L}_{j,i},  A = a_D; \hat{\alpha}_D) }{ \prod_{j=0}^{k} \Pr(D_{j+1}=0 \mid C_{j+1}=D_{j}= Y_{j}=0, \bar{L}_{j,i},  A = a_Y ; \hat{\alpha}_D) }, \\
\hat{W}_{L_{A_D},k,i} (a_Y,a_D;\hat{\alpha}_{L_D 1},\hat{\alpha}_{L_D 2}) &= \frac{\prod_{j=0}^{k}   \Pr(A =a_D \mid C_{j}= Y_{j} = D_{j} = 0, L_{A_D,j,i}, \bar{L}_{j-1,i};\hat{\alpha}_{L_D 1}) }{ \prod_{j=0}^{k}   \Pr(A =a_Y \mid C_{j}= Y_{j} = D_{j} = 0,  L_{A_D,j,i}, \bar{L}_{j-1,i}; \hat{\alpha}_{L_D 1}) } \nonumber \\
& \times  \frac{\prod_{j=0}^{k}   \Pr(A =a_Y \mid C_{j}= Y_{j} = D_{j} = 0, \bar{L}_{j-1,i}; \hat{\alpha}_{L_D 2}) }{ \prod_{j=0}^{k}   \Pr(A =a_D \mid C_{j}= Y_{j} = D_{j} = 0, \bar{L}_{j-1,i}; \hat{\alpha}_{L_D 2}) }, \nonumber \\
\hat{W}_{C,k,i} (a_D;\hat{\alpha}_{C}) &= \frac{I(C_{k+1} =0) }{ \prod_{j=0}^{k}  \Pr(C_{j+1}=0 \mid C_{j}=D_{j}= Y_{j}=0, \bar{L}_{j,i},  A = a_D; \hat{\alpha}_{C}) }, %I(C_{j+1}=0)\prod_{j=0}^{k}  \Pr(C_{j+1}=0 \mid C_{j}=D_{j}= Y_{j}=0,  A = a_D)
\end{align*}
where $\Pr(D_{j+1}=0 \mid C_{j+1}=D_{j}= Y_{j}=0, \bar{L}_{j},  A = a_D; \alpha_D)$ is a parametric model for the numerator (and denominator) of $W_{D,k}(a_Y,a_D)$ indexed by parameter $ \alpha_D$, and $\hat{\alpha}_D$ is a consistent estimator of $\alpha_D$ (e.g.\ the MLE). The terms in $\hat{W}_{L_{A_D},k,i} (a_Y,a_D;\hat{\alpha}_{L_D 1},\hat{\alpha}_{L_D 2})$ and $\hat{W}_{C,k,i} (a_D;\hat{\alpha}_{C})$ are defined analogously, where $\hat{\alpha}_{L_D 1},\hat{\alpha}_{L_D 2},\hat{\alpha}_{C}$ are consistent estimators of corresponding model parameters $\alpha_{L_D1 },\alpha_{L_D2 },\alpha_{C}$, respectively. 

Let $\alpha_1 = (\alpha_D, \alpha_{L_D1 },\alpha_{L_D2 },\alpha_{C})$, and define the estimator $\hat{\nu}_{1,a_Y,a_D,k} $ of $\nu_{a_Y,a_D,k} $ as the solution to the estimating equation $\sum_{i=1}^{n}U_{1,k,i}(\nu_{a_Y,a_D,k},\hat{\alpha}_1)=0$ with respect to $\nu_{a_Y,a_D,k}$ with% $U_i(\nu,\hat{\alpha})=\sum_{k=0}^{K}U_{i,k}(\nu,\hat{\alpha})$ and 
\begin{align*}
& U_{1,k,i}(\nu_{a_Y,a_D,k},\hat{\alpha}_1) \nonumber \\
= & I(A_i=a_Y) \Big[ \sum_{s=0}^{k} \{ \hat{W}_{1,s,i}(a_Y,a_D;\hat{\alpha}_1) Y_{s+1,i} (1-Y_{s,i}) (1-D_{s+1,i}) \} - \nu_{a_Y,a_D,k} \Big],  \nonumber \\
\end{align*}
and $\hat{W}_{1,s,i}(a_Y,a_D; \hat{\alpha}_1) = \hat{W}_{D,s,i} (a_Y,a_D;\hat{\alpha}_{D}) \hat{W}_{L_{A_D},s,i} (a_Y,a_D;\hat{\alpha}_{L_D 1},\hat{\alpha}_{L_D 2}) \hat{W}_{C,s,i}  (a_Y;\hat{\alpha}_{C}) $. 

Provided that the models indexed by elements in $\alpha_1$ are correctly specified and $\hat{\alpha}_1$ is a consistent estimator for $\alpha_1$, then consistency of $\hat{\nu}_{1,a_Y,a_D, k} $ for $\nu_{a_Y,a_D, k} $ follows because \eqref{eq: identifying formula} and \eqref{eq: alternative id formula 1} are equal. We describe an implementation algorithm for $\hat{\nu}_{1,a_Y,a_D, k} $ in Appendix \ref{sec: estimation algorithm}. In practice, we can use popular regression models for binary outcomes to estimate the weights $W_{D,k}(a_Y,a_D)$ and $W_{C,k}(a_Y)$. However, when we parameterize the terms in $\hat{W}_{L_{A_D},k} (a_Y,a_D;\hat{\alpha}_{L_D 1},\hat{\alpha}_{L_D 2})$, we must ensure that the statistical models are congenial, which may fail for popular models, such as logistic regressions models. In Appendix \ref{sec: proof of alternative id}, we have provided an alternative expression of $W_{L_{A_Y},k}(a_Y,a_D)$ that motivates different weighted estimators based on estimation of the conditional joint densities of $L_k$.  These alternative weighted estimators avoid the problem of incongenial models at the expense of the need to model higher dimensional quantities.   %In particular, $W_{L_{A_Y},k}(a_Y,a_D)$ are conceptually similar to the time-fixed inverse odds ratio weights for mediation analysis \cite{tchetgen2013inverse},

%\subsection{Estimator based on \eqref{eq: alternative id formula 2}}
The estimator based on \eqref{eq: alternative id formula 2} is derived analogously to the estimator based on \eqref{eq: alternative id formula 1}. Suppose 
\begin{align*}
\hat{W}_{Y,k,i}  (a_D,a_Y;\hat{\alpha}_{Y}) & = \frac{ \Pr(Y_{k+1}=1 \mid C_{k+1}=D_{k+1}= Y_{k}=0, \bar{L}_{k,i},  A = a_Y; \hat{\alpha}_{Y}) }{\Pr(Y_{k+1}=1 \mid C_{j+1}=D_{k+1}= Y_{k}=0, \bar{L}_{k,i},  A = a_D; \hat{\alpha}_{Y}) } \\
& \times \frac{\prod_{j=0}^{k-1}  \Pr(Y_{j+1}=0 \mid C_{j+1}=D_{j+1}= Y_{j}=0, \bar{L}_{j,i},  A = a_Y; \hat{\alpha}_{L_Y1}) }{ \prod_{j=0}^{k-1} \Pr(Y_{j+1}=0 \mid C_{j+1}=D_{j+1}= Y_{j}=0, \bar{L}_{j,i},  A = a_D; \hat{\alpha}_{L_Y1}) }, \\
\hat{W}_{L_{A_Y},k,i}  (a_D,a_Y;\hat{\alpha}_{L_Y1},\hat{\alpha}_{L_Y2}) & = \frac{\prod_{j=0}^{k}   \Pr(A =a_Y  \mid C_{j}= Y_{j} = D_{j} = 0, \bar{L}_{j,i}; \hat{\alpha}_{L_Y2}) }{ \prod_{j=0}^{k}   \Pr(A =a_D \mid C_{j}= Y_{j} = D_{j} = 0, \bar{L}_{j,i}; \hat{\alpha}_{L_Y2}) }   \nonumber \\
& \times \frac{\prod_{j=0}^{k}   \Pr(A =a_D \mid C_{j}= Y_{j} = D_{j} = 0,  L_{A_D,j,i}, \bar{L}_{j-1,i}; \hat{\alpha}_{C}) }{ \prod_{j=0}^{k}   \Pr(A =a_Y \mid C_{j}= Y_{j} = D_{j} = 0, L_{A_D,j,i}, \bar{L}_{j,i}; \hat{\alpha}_{C}) },
\end{align*}
where the terms in $\hat{W}_{Y,k,i}  (a_D,a_Y;\hat{\alpha}_{Y})$, $\hat{W}_{L_{A_Y},k,i}  (a_D,a_Y;\hat{\alpha}_{L_Y1},\hat{\alpha}_{L_Y2})$ are statistical models for binary outcomes, and where $\hat{\alpha}_{Y}, \hat{\alpha}_{L_Y 1},\hat{\alpha}_{L_Y 2}$ are consistent estimators for $\alpha_Y,\alpha_{L_Y1 },\alpha_{L_Y2 }$, respectively. Similar to $\hat{W}_{L_{A_D},k,i} (a_Y,a_D;\hat{\alpha}_{L_D 1},\hat{\alpha}_{L_D 2})$, however, we must ensure that congenial models are used to estimate the terms in $\hat{W}_{L_{A_Y},k,i}  (a_D,a_Y;\hat{\alpha}_{L_Y1},\hat{\alpha}_{L_Y2})$. % similar to the terms in $\hat{W}_{L_{A_D},k,i}  (a_D,a_Y;\hat{\alpha}_{L_D1},\hat{\alpha}_{L_D2})$.

Let $\alpha_2 = (\alpha_Y, \alpha_{L_Y1 },\alpha_{L_Y2 },\alpha_{C})$, and define the estimator $\hat{\nu}_{2,a_Y,a_D,k}$ of $\nu_{a_Y,a_D,k} $ as the solution to the estimating equation $\sum_{i=1}^{n}U_{2,k,i}(\nu_{a_Y,a_D,k},\hat{\alpha}_2)=0$ with respect to $\nu_{a_Y,a_D,k}$, where % $U_i(\nu,\hat{\alpha})=\sum_{k=0}^{K}U_{i,k}(\nu,\hat{\alpha})$ and 
\begin{align*}
& U_{2,k,i}(\nu_{a_Y,a_D,k},\hat{\alpha}_2) \nonumber \\
= & I(A_i=a_D) \Big[ \sum_{s=0}^{k} \{ \hat{W}_{2,s,i}(a_Y,a_D;\hat{\alpha}_{2}) Y_{s+1,i} (1-Y_{s,i}) (1-D_{s+1,i}) \} -  \nu_{a_Y,a_D,k} \Big],  \nonumber \\
\end{align*}
and $\hat{W}_{2,s,i}(a_Y,a_D;\hat{\alpha}_{2} ) = \hat{W}_{C,s,i} (a_D;\hat{\alpha}_{C}) \hat{W}_{Y,s,i} (a_D,a_Y;\hat{\alpha}_{Y})\hat{W}_{L_{A_Y},s,i} a_D,a_Y;\hat{\alpha}_{L_Y1},\hat{\alpha}_{L_Y2})$. Analogous to the estimator based on \eqref{eq: alternative id formula 1}, provided that the models indexed by elements in $\alpha_2$ are correctly specified and $\hat{\alpha}_2$ is a consistent estimator for $\alpha_2$, then consistency of $\hat{\nu}_{2,a_Y,a_D,k}$ for $\nu_{a_Y,a_D,k} $ follows because \eqref{eq: identifying formula} and \eqref{eq: alternative id formula 2} are equal. 

%\subsection{Simplified estimators under certain partitions of $L_k$.}
%\textcolor{red}{I think make this a subsection not a subsubsection?  I can work on more specific edits later or tomorrow if you don't get to it before me but my preference would be not to conflate estimation with identifiability and/or interpretation here.  I think you want to start by saying that the estimators can be simplified by choosing a partition of Lk such that all components go in one or the other Lay or Lad and describe the simplification.  Then I think you can just note that you should only do this when you can justify that the dismissible component conditions hold for such a partition of Lk and that in Appendix D we discuss the relation betwen identifiability under certain paritions of Lk and isolation condtioins.  You briefly note after presentation of the g-formula that simplifications can be made to the g-formula under certain partitions of Lk and that these would imply isolation conditions and this is explained in Appendix D.  If we go into more details in the main text, I would put this maybe in a subsection at the end of the identification section entitled relation between dismissible component and isolation conditions or similar.  But I think ok to keep in appendix D.  Let's just not first get into these details in the main text in the estimation section.  }
\subsection{Simplified estimators under assumptions on $L_k$}
\label{sec: simplified estimators}
Given a user-chosen partition of $L_k$ such that $L_{A_Y,k} \equiv L_k, L_{A_D,k} \equiv \emptyset$ for $k = 0,\dots, K$, then $W_{L_{A_D},k} (a_Y,a_D)=1$ and the consistency of $\hat{\nu}_{1,a_Y,a_D,k}$ only requires consistent estimation of the weights $W_{D,k}(a_Y,a_D)$ and $W_{C,k}(a_Y)$. Similarly, the partition $L_{A_D,k} \equiv L_k, L_{A_Y,k} \equiv \emptyset$ gives $W_{L_{A_Y},k} (a_D,a_Y)=1$, such that the consistency of $\hat{\nu}_{2,a_Y,a_D,k}$ only relies on consistent estimation of the weights $W_{Y,k}(a_Y,a_D)$ and $W_{C,k}(a_D)$. Of course, these simplified $L_k$ partitions are only justified if they satisfy the dismissible component conditions.  As discussed in Section \ref{isodis}, identification under these simplified $L_k$ partitions implies partial or full isolation, impacting the interpretation of the separable effects.

\section{Data example: blood pressure therapy and acute kidney injury}
\label{sec: sprint implementation}
%9361
As an illustration, we analyzed data from the Systolic Blood Pressure Intervention Trial (SPRINT) \cite{sprint2015randomized}, which randomly assigned individuals to intensive ($A=1$) or standard ($A=0$) blood pressure treatment. We used follow-up data from each month $k+1$, $k=0\ldots,29$ and restricted our analysis to participants aged older than 75 years at baseline in whom the most deaths (competing events) occurred \cite{williamson2016intensive}.  For simplicity, we further restricted to those patients with complete data on baseline covariates (described below).  This resulted in a data set with 1304 and 1297 in the intensive ($A=1$) and standard ($A=0$) blood pressure therapy arms, respectively. During the 30-month follow-up period, 107 and 98  of these patients were lost to follow-up (censored) in some month $k+1\leq 30$ in the intensive and standard arms, respectively.

%for each follow-up month $k=1,\ldots, 30$
In order to adjust for informative censoring by loss to follow-up, we used inverse probability of censoring weighted Aalen-Johansen estimators \cite{aalen1978empirical, young2018choice}  to estimate the total effects of treatment assignment on the cause-specific cumulative incidences at each $k+1$ of kidney injury and mortality. We adjusted for the baseline covariates ($L_0$) smoking status, history of clinical or subclinical cardiovascular disease, clinical of subclinical chronic kidney disease, statin use and gender as well as the time-varying covariates ($L_k$) defined by the most recent measurements of systolic and diastolic blood pressure, scheduled monthly for the first 3 months and every 3 months thereafter. The weight denominators were estimated under the following pooled logistic model for the probability of being censored within each month $k+1$ given the measured past, 
\begin{align}
    \text{logit} \{\Pr(C_{k+1}=1 \mid D_{k}= Y_{k}=\bar{C}_{k}=0,A, \bar{L}_k)\} = \alpha_{C,0,k+1} + \alpha_{C,1}A + \alpha_{C,2}Ak + \alpha'_{C,1}L_0 + \alpha'_{C,2}L_k,
    \label{eq: sprint censoring model}
\end{align}
where $\alpha_{C,0,k+1}$ are time-varying intercepts modeled as 3rd degree polynomials. For all analyses, 95\% percent confidence intervals were constructed using 500 nonparametric bootstrap samples. 

The estimated cumulative incidence of acute kidney injury (the event of interest) under the intensive treatment assignment was consistently higher compared to standard treatment assignment (Figure \ref{fig: ci plot}a, solid lines), in line with a harmful total effect on acute kidney injury. Specifically, the total effect estimate (on the additive scale) of intensive therapy assignment versus standard was 0.01 (95\% CI: $[0.00, 0.03]$) at 30 months of follow-up. As discussed in Section \ref{sec: defining treatment eff}, this harmful effect is hard to interpret due to a possible protective effect of intensive treatment assignment on death (the competing event).  This concern is not easily ruled out by the data; the cumulative incidence of death under intensive treatment assignment is consistently slightly lower compared to standard treatment assignment over the 30-month follow-up with differences increasing at 25 months, as shown with dashed lines in Figure \ref{fig: ci plot}a. At 30 months, the total effect estimate on mortality was -0.01 (95\% CI: $[-0.03, 0.00]$).  

As discussed in Section \ref{sec: Ay iso}, for $A_Y$ defined as the component of treatment $A$ that may exert biological effects on the kidneys, e.g.\ by relaxing the efferent arterioles, and $A_D$ defined as all remaining components of $A$, $A_Y$ partial isolation may be a reasonable assumption given background subject matter knowledge. Under this assumption, the $A_Y$ separable effect \eqref{eq: A_Y sep} evaluated at $a_D=1$ does not capture effects of the treatment on the competing event. It also may be of clinical interest as it quantifies the effect of removing the possibly harmful $A_Y$ component from the original treatment $A$.

We used the inverse probability weighted estimator $\hat{\nu}_{2,a_Y,a_D,k}$ from Section \ref{sec: estimation} to estimate \eqref{eq: estimand} and, in turn, the $A_Y$ separable effect \eqref{eq: A_Y sep} evaluated at $a_D=1$ on acute kidney injury at each $k+1$ under the assumption that the measured baseline and time-varying covariates are sufficient to ensure identification, including the dismissible component conditions \eqref{ass: delta 1}-\eqref{ass: delta 3b} under the partitioning $L_k=L_{A_D,k}$. This assumption, at best, approximately holds because $L_k$ contains only intermittent measurements of systolic and diastolic blood pressure. 

% with each month $k=1,\ldots, 30$ as a discrete time interval
We estimated $W_{Y,k}(a_Y,a_D)$ under the pooled logistic models
\begin{align}
& \text{logit} \{\Pr(Y_{k+1}=1 \mid D_{k+1}= Y_{k+1} =\bar{C}_{k+1}=0,A=0, \bar{L}_k) \} \nonumber \\
 & = \alpha_{Y,0,k+1} + \alpha'_{Y,1}L_0 + \alpha'_{Y,2}L_k + \alpha'_{Y,3}L_k^2 +\alpha'_{Y,4}L_k k, \nonumber \\
& \text{logit} \{ \Pr(Y_{k+1}=1 \mid D_{k+1}= Y_{k+1} =\bar{C}_{k+1}=0,A=1, \bar{L}_k) \} \nonumber \\
 & = \alpha_{Y,5,k+1} + \alpha'_{Y,6}L_0 + \alpha'_{Y,7}L_k + \alpha'_{Y,8}L_k^2 + \alpha'_{Y,9}L_k k, \nonumber \\
\label{eq: weight parametierzations}
\end{align}
for $k = 0,\dots,20$, where $\alpha_{Y,0,k+1}$ and $\alpha_{Y,5,k+1}$ are time-varying intercepts modeled as 3rd degree polynomials. The inverse probability of censoring weights $W_{C}(a_D)$ were estimated under \eqref{eq: sprint censoring model}.

Figure \ref{fig: ci plot}b shows estimates of the counterfactual cumulative incidence for acute kidney injury under assignment to different combinations of $a_Y$ and $a_D$ over time. The vertical distance between the black and the green line at $k+1$ is a point estimate of the additive $A_Y$ separable effect when $a_D=1$ at that time, and similarly the vertical distance between the red and the green line is the $A_D$ separable effect when $a_Y=0$. In particular, the estimated $A_Y$ separable effect of 0.00 (95\% CI: $[-0.10, 0.02]$) at $k+1=30$ months when $a_D=1$, suggests that removing the $A_Y$ component from the intensive therapy will not decrease the average risk of kidney injury by 30 months. R code is provided in the supplementary materials.

\section{Discussion}
\label{sec: discussion}
%This article makes two main contributions. First, 
We have provided generalized results for interpretation and identification of separable effects in competing events settings. These results allow the separable effects to be identified and meaningfullly interpreted in much broader settings than those initially considered by Stensrud et al \cite{stensrud2019separable}. Generally these effects clarify the interpretation of total effects when competing events are affected by treatment, provide more information to patients and doctors for current treatment decisions and inform the development of improved treatments with unwanted components removed.   In general, our framework provides a basis for which subject matter experts can formally reason about the mechanisms by which treatments act on time-to-event outcomes and subsequently falsify this reasoning in a future trial. 

Even under our generalized conditions, the separable effects may be difficult to identify given currently available data in many studies. However, they can point to shortcomings of the data typically collected in studies of competing events, and may guide the planning for improved data collection in future studies.. This is particularly true of randomized trials which have historically relied heavily on the treatment randomization; failing to collect data on baseline and time-varying covariates makes it nearly impossible to adjust for selection bias due to censoring and/or to target estimands other than the total effect of the randomization. Furthermore, we outlined strategies for sensitivity analysis to both the dismissible component conditions and isolation conditions in Appendix \ref{sec: sensitivity analysis}.  %Future work will also include the development of sensitivity bounds to violation of the dismissible component conditions and simulation studies to understand the degree of bias associated with failure of these conditions in different settings.  

We have focused on establishing fundamental results for interpretation and identification of separable effects, as well as suggesting three estimators that are easy to implement. In future work, we aim to derive new estimators from the efficient influence function, which can achieve parametric convergence rates even when machine learning methods are used for model fitting \cite{robins1994estimation, chernozhukov2018double, robins2017minimax, van2018targeted}, such that e.g.\ bias-aware model selection can be performed to minimize bias due to model misspecification \cite{cui2019bias}. We will also extend our results to separable effects of time-varying treatment interventions in competing events settings, including per-protocol effects in trials with nonadherence. 

% Finally, motivated by the various inverse probability weighted estimators of the separable effects, we also aim to develop doubly and multiply robust estimators.

\section*{Acknowledgements}
This manuscript was prepared using SPRINT POP Research Materials obtained from the NHLBI Biologic Specimen and Data Reppsitory Information Coordinating Center and does not necessarily reflect the opinions or views of the SPRINT POP or the NHLBI. \\

\bibliography{references}
\bibliographystyle{unsrt}

\clearpage

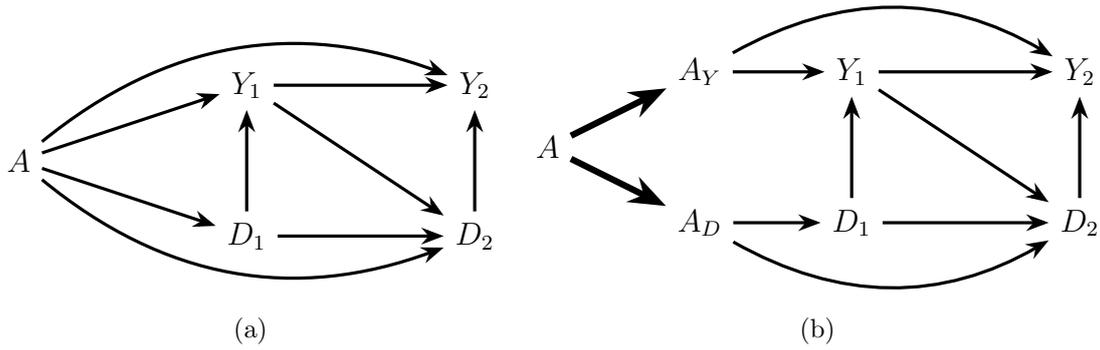
\begin{figure}
\subfloat[]{
\begin{tikzpicture}
\begin{scope}[every node/.style={thick,draw=none}]
    \node (A) at (0,0) {$A$};
  	\node (Y1) at (3,1) {$Y_1$};
    \node (D1) at (3,-1) {$D_1$};
    \node (Y2) at (6,1) {$Y_2$};
    \node (D2) at (6,-1) {$D_2$};
\end{scope}
\begin{scope}[>={Stealth[black]},
              every node/.style={fill=white,circle},
              every edge/.style={draw=black,very thick}]
	\path [->] (A) edge  (D1);
    \path [->] (A) edge[bend right] (D2);
	\path [->] (A) edge[bend left] (Y2);
    \path [->] (A) edge  (Y1);
    \path [->] (Y1) edge (D2);
    \path [->] (Y1) edge (Y2);
    \path [->] (D1) edge (D2);
    \path [->] (D1) edge (Y1);
    \path [->] (D2) edge (Y2);
\end{scope}
\end{tikzpicture}
}
\subfloat[]{
\begin{tikzpicture}
\begin{scope}[every node/.style={thick,draw=none}]
    \node (A) at (-1,0) {$A$};
    \node (Ay) at (1,1) {$A_Y$};
	\node (Ad) at (1,-1) {$A_D$};
	\node (Y1) at (3,1) {$Y_1$};
    \node (D1) at (3,-1) {$D_1$};
    \node (Y2) at (6,1) {$Y_2$};
    \node (D2) at (6,-1) {$D_2$};
\end{scope}

\begin{scope}[>={Stealth[black]},
              every node/.style={fill=white,circle},
              every edge/.style={draw=black,very thick}]
    \path [->] (A) edge[line width=0.85mm] (Ad);
    \path [->] (A) edge[line width=0.85mm] (Ay);
	\path [->] (Ad) edge (D1);
    \path [->] (Ad) edge[bend right] (D2);
	\path [->] (Ay) edge[bend left] (Y2);
    \path [->] (Ay) edge (Y1);	
    \path [->] (Y1) edge (D2);
    \path [->] (Y1) edge (Y2);
    \path [->] (D1) edge (D2);
    \path [->] (D1) edge (Y1);
    \path [->] (D2) edge (Y2);
\end{scope}
\end{tikzpicture}
%\caption{Conventional causal diagram of the MR design}
}
\caption{The directed acyclic graph (DAG) in (a) represents a restrictive data generating assumption on the observed data structure such that there are no common causes of the event of interest and the competing event at any time. The extended DAG in (b) is an augmented version of the graph in (a) representing a treatment decomposition satisfying the generalized decomposition assumption.  The bold arrows encode deterministic relationships.}
    \label{fig: simplest graphs}
\end{figure}

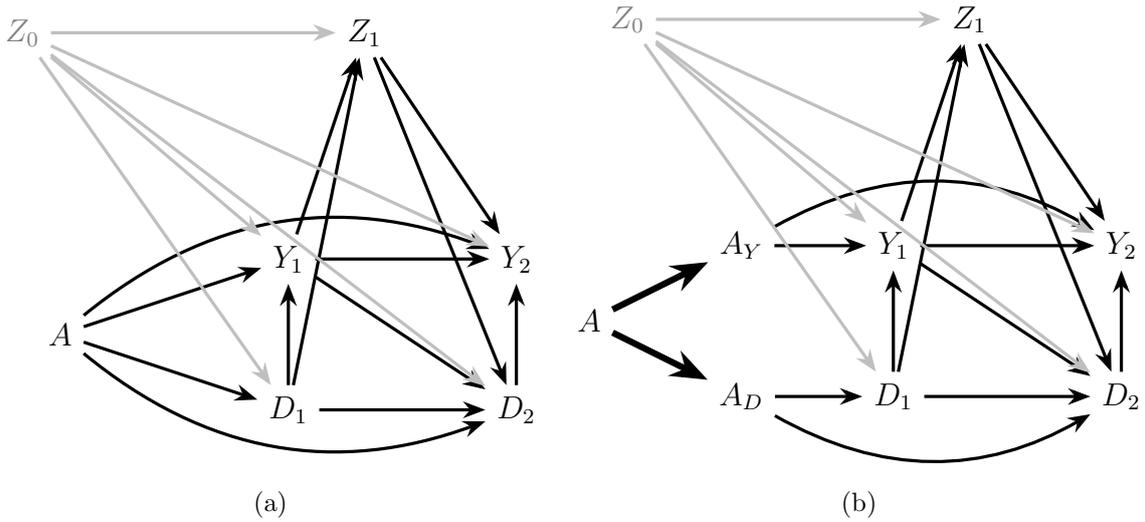
\begin{figure}
\centering
\setlength{\lineskip}{1ex}% increase spacing 
% \centering
% Old graph from no time-varying eff
\subfloat[]{
\begin{tikzpicture}
\begin{scope}[every node/.style={thick,draw=none}]
    \node (A) at (0,0) {$A$};
  	\node (Y1) at (3,1) {$Y_1$};
    \node (D1) at (3,-1) {$D_1$};
    \node (Y2) at (6,1) {$Y_2$};
    \node (D2) at (6,-1) {$D_2$};
    \node (Z1) at (4,4) {$Z_1$};
    \node[text = gray] (Z0) at (-0.5,4) {$Z_0$};
\end{scope}

\begin{scope}[>={Stealth[black]},
              every node/.style={fill=white,circle},
              every edge/.style={draw=black,very thick}]
	\path [->] (A) edge  (D1);
    \path [->] (A) edge[bend right] (D2);
	\path [->] (A) edge[bend left] (Y2);
    \path [->] (A) edge  (Y1);	
    \path [->] (Z1) edge (D2);
    \path [->] (Z1) edge (Y2);
    \path [->] (Y1) edge (D2);
    \path [->] (Y1) edge (Y2);
    \path [->] (D1) edge (D2);
    \path [->] (D1) edge (Y1);
    \path [->] (D2) edge (Y2);
    \path [->] (Y1) edge (Z1);
    \path [->] (D1) edge (Z1);
    \path [->,>={Stealth[lightgray]}] (Z0) edge[lightgray] (Y1);
    \path [->,>={Stealth[lightgray]}] (Z0) edge[lightgray] (Y2);
    \path [->,>={Stealth[lightgray]}] (Z0) edge[lightgray] (D1);
    \path [->,>={Stealth[lightgray]}] (Z0) edge[lightgray] (D2);
    \path [->,>={Stealth[lightgray]}] (Z0) edge[lightgray] (Z1);
\end{scope}
\end{tikzpicture}
}
\subfloat[]{
\begin{tikzpicture}
\begin{scope}[every node/.style={thick,draw=none}]
    \node (A) at (-1,0) {$A$};
    \node (Ay) at (1,1) {$A_Y$};
	\node (Ad) at (1,-1) {$A_D$};
	\node (Y1) at (3,1) {$Y_1$};
    \node (D1) at (3,-1) {$D_1$};
    \node (Y2) at (6,1) {$Y_2$};
    \node (D2) at (6,-1) {$D_2$};
    \node (Z1) at (4,4) {$Z_1$};
    \node[text = gray] (Z0) at (-0.5,4) {$Z_0$};
\end{scope}

\begin{scope}[>={Stealth[black]},
              every node/.style={fill=white,circle},
              every edge/.style={draw=black,very thick}]
    \path [->] (A) edge[line width=0.85mm] (Ad);
    \path [->] (A) edge[line width=0.85mm] (Ay);
	\path [->] (Ad) edge (D1);
    \path [->] (Ad) edge[bend right] (D2);
	\path [->] (Ay) edge[bend left] (Y2);
    \path [->] (Ay) edge (Y1);	
    \path [->] (Z1) edge (D2);
    \path [->] (Z1) edge (Y2);
    \path [->] (Y1) edge (D2);
    \path [->] (Y1) edge (Y2);
    \path [->] (D1) edge (D2);
    \path [->] (D1) edge (Y1);
    \path [->] (D2) edge (Y2);
    \path [->] (Y1) edge (Z1);
    \path [->] (D1) edge (Z1);
    \path [->,>={Stealth[lightgray]}] (Z0) edge[lightgray] (Y1);
    \path [->,>={Stealth[lightgray]}] (Z0) edge[lightgray] (Y2);
    \path [->,>={Stealth[lightgray]}] (Z0) edge[lightgray] (D1);
    \path [->,>={Stealth[lightgray]}] (Z0) edge[lightgray] (D2);
    \path [->,>={Stealth[lightgray]}] (Z0) edge[lightgray] (Z1);
\end{scope}
\end{tikzpicture}
%\caption{Conventional causal diagram of the MR design}
}
%\caption{Conventional causal diagram of the MR design}
\caption{The causal DAG in (a) allows a pre-randomization common cause ($Z_0$) of $\underline{Y}_1$ and $\underline{D}_1$ and post-randomization common cause ($Z_1$) of $Y_2$ and $D_2$ but assumes $Z_1$ is not affected by treatment $A$. (b) is an extension of (a) satisfying full isolation.}
\label{fig:full isolation}
\end{figure}

%%%%% FIGURE full isolation
\begin{figure}
\setlength{\lineskip}{1ex}% increase spacing 
% \centering
\subfloat{
\begin{tikzpicture}
\begin{scope}[every node/.style={thick,draw=none}]
    \node (A) at (0,0) {$A$};
  	\node (Y1) at (3,1) {$Y_1$};
    \node (D1) at (3,-1) {$D_1$};
    \node (Y2) at (6,1) {$Y_2$};
    \node (D2) at (6,-1) {$D_2$};
    \node (Z1) at (4,4) {$Z_1$};
    \node[text = gray] (Z0) at (-0.5,4) {$Z_0$};
\end{scope}

\begin{scope}[>={Stealth[black]},
              every node/.style={fill=white,circle},
              every edge/.style={draw=black,very thick}]
	\path [->] (A) edge  (D1);
    \path [->] (A) edge[bend right] (D2);
	\path [->] (A) edge[bend left] (Y2);
    \path [->] (A) edge  (Y1);	
    \path [->] (Z1) edge (D2);
    \path [->] (Z1) edge (Y2);
    \path [->] (Y1) edge (D2);
    \path [->] (Y1) edge (Y2);
    \path [->] (D1) edge (D2);
    \path [->] (D1) edge (Y1);
    \path [->] (D2) edge (Y2);
    \path [->] (Y1) edge (Z1);
    \path [->] (D1) edge (Z1);
    \path [->,>={Stealth[blue]}] (A) edge[blue] (Z1);
    \path [->,>={Stealth[lightgray]}] (Z0) edge[lightgray] (Y1);
    \path [->,>={Stealth[lightgray]}] (Z0) edge[lightgray] (Y2);
    \path [->,>={Stealth[lightgray]}] (Z0) edge[lightgray] (D1);
    \path [->,>={Stealth[lightgray]}] (Z0) edge[lightgray] (D2);
    \path [->,>={Stealth[lightgray]}] (Z0) edge[lightgray] (Z1);
\end{scope}
\end{tikzpicture}
}

\caption{A causal DAG representing the assumption that $Z_1$, a common cause of $Y_2$ and $D_2$, may be affected by treatment $A$ (blue arrow).} % The graph in B shows a post-randomisation variable $Z_1$ that exerts direct effects on $Y_2$ and $D_2$. Both graphs satisfy full isolation.
%$\Delta$1, $\Delta$2, $\Delta$3a and $\Delta$3b hold
\label{fig:dagAarrowLk}
\end{figure}
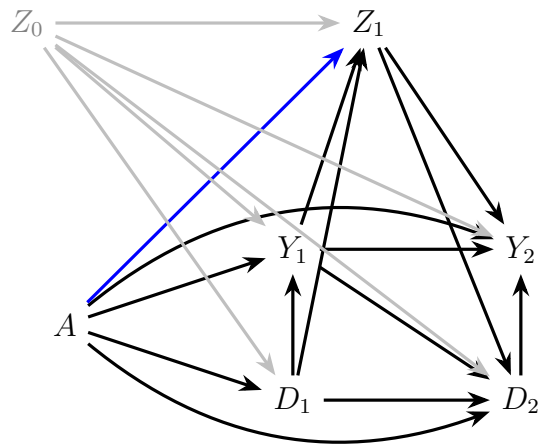

%%%%% FIGURE partial isolation
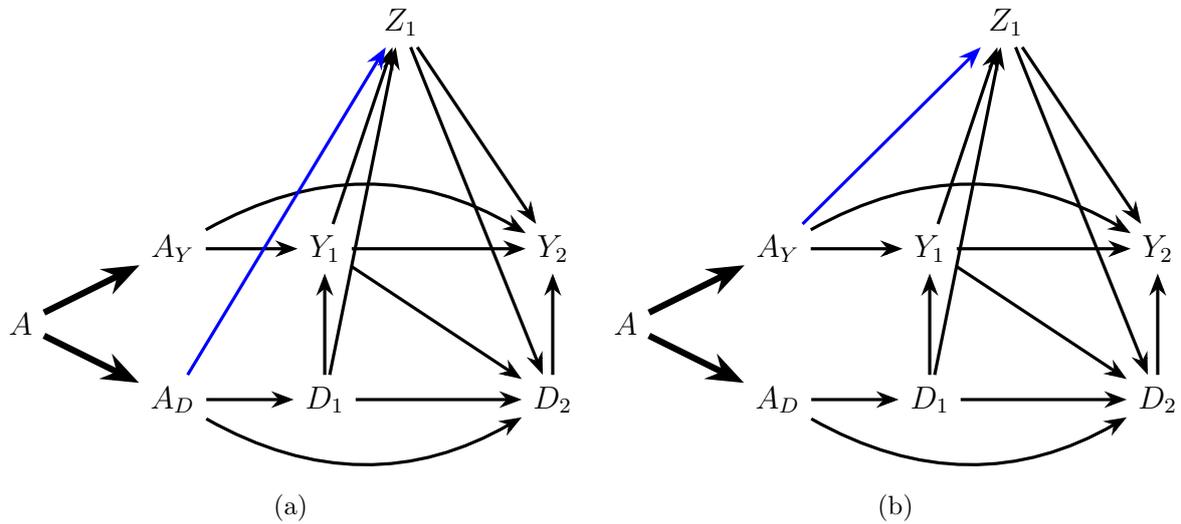
\begin{figure}
\setlength{\lineskip}{1ex}% increase spacing 
% \centering
\subfloat[]{
\begin{tikzpicture}
\begin{scope}[every node/.style={thick,draw=none}]
    \node (A) at (-1,0) {$A$};
    \node (Ay) at (1,1) {$A_Y$};
	\node (Ad) at (1,-1) {$A_D$};
	\node (Y1) at (3,1) {$Y_1$};
    \node (D1) at (3,-1) {$D_1$};
    \node (Y2) at (6,1) {$Y_2$};
    \node (D2) at (6,-1) {$D_2$};
    \node (Z1) at (4,4) {$Z_1$};
\end{scope}

\begin{scope}[>={Stealth[black]},
              every node/.style={fill=white,circle},
              every edge/.style={draw=black,very thick}]
    \path [->] (A) edge[line width=0.85mm] (Ad);
    \path [->] (A) edge[line width=0.85mm] (Ay);
	\path [->] (Ad) edge (D1);
    \path [->] (Ad) edge[bend right] (D2);
	\path [->] (Ay) edge[bend left] (Y2);
    \path [->] (Ay) edge (Y1);	
    \path [->] (Z1) edge (D2);
    \path [->] (Z1) edge (Y2);
    \path [->] (Y1) edge (D2);
    \path [->] (Y1) edge (Y2);
    \path [->] (D1) edge (D2);
    \path [->] (D1) edge (Y1);
    \path [->] (D2) edge (Y2);
    \path [->] (Y1) edge (Z1);
    \path [->] (D1) edge (Z1);
    \path [->,>={Stealth[blue]}] (Ad) edge[blue] (Z1);
\end{scope}
\end{tikzpicture}
}  
\subfloat[]{
\begin{tikzpicture}
\begin{scope}[every node/.style={thick,draw=none}]
    \node (A) at (-1,0) {$A$};
    \node (Ay) at (1,1) {$A_Y$};
	\node (Ad) at (1,-1) {$A_D$};
	\node (Y1) at (3,1) {$Y_1$};
    \node (D1) at (3,-1) {$D_1$};
    \node (Y2) at (6,1) {$Y_2$};
    \node (D2) at (6,-1) {$D_2$};
    \node (Z1) at (4,4) {$Z_1$};
\end{scope}

\begin{scope}[>={Stealth[black]},
              every node/.style={fill=white,circle},
              every edge/.style={draw=black,very thick}]
    \path [->] (A) edge[line width=0.85mm] (Ad);
    \path [->] (A) edge[line width=0.85mm] (Ay);
	\path [->] (Ad) edge (D1);
    \path [->] (Ad) edge[bend right] (D2);
	\path [->] (Ay) edge[bend left] (Y2);
    \path [->] (Ay) edge (Y1);	
    \path [->] (Z1) edge (D2);
    \path [->] (Z1) edge (Y2);
    \path [->] (Y1) edge (D2);
    \path [->] (Y1) edge (Y2);
    \path [->] (D1) edge (D2);
    \path [->] (D1) edge (Y1);
    \path [->] (D2) edge (Y2);
    \path [->] (Y1) edge (Z1);
    \path [->] (D1) edge (Z1);
    \path [->,>={Stealth[blue]}] (Ay) edge[blue] (Z1);
\end{scope}
\end{tikzpicture}
}
\caption{Extensions of the causal DAG in Figure \ref{fig:dagAarrowLk}  illustrating partial isolation. The blue arrow in (a) represents the $A \rightarrow Z_1$ relation in Figure \ref{fig:dagAarrowLk} under $A_Y$ partial isolation, and the blue arrow in (b) represents $A_D$ partial isolation.}
%$\Delta$1, $\Delta$2, $\Delta$3a and $\Delta$3b hold
\label{fig:partial isolation}
\end{figure}

\begin{figure}
\centering
\subfloat[]{
\begin{tikzpicture}
\begin{scope}[every node/.style={thick,draw=none}]
    \node (A) at (-1,0) {$A$};
    \node (Ay) at (1,1) {$A_Y$};
	\node (Ad) at (1,-1) {$A_D$};
	\node (Y1) at (3,1) {$Y_1$};
    \node (D1) at (3,-1) {$D_1$};
    \node (Y2) at (6,1) {$Y_2$};
    \node (D2) at (6,-1) {$D_2$};
    \node (Z1) at (4,4) {$Z_1$};
    \node (EMPTY_NODE) at (4,-4) {$ $}; %This node is only there for visual appeal
\end{scope}

\begin{scope}[>={Stealth[black]},
              every node/.style={fill=white,circle},
              every edge/.style={draw=black,very thick}]
    \path [->] (A) edge[line width=0.85mm] (Ad);
    \path [->] (A) edge[line width=0.85mm] (Ay);
	\path [->] (Ad) edge (D1);
    \path [->] (Ad) edge[bend right] (D2);
	\path [->] (Ay) edge[bend left] (Y2);
    \path [->] (Ay) edge (Y1);	
    \path [->] (Z1) edge (D2);
    \path [->] (Z1) edge (Y2);
    \path [->] (Y1) edge (D2);
    \path [->] (Y1) edge (Y2);
    \path [->] (D1) edge (D2);
    \path [->] (D1) edge (Y1);
    \path [->] (D2) edge (Y2);
    \path [->] (Y1) edge (Z1);
    \path [->] (D1) edge (Z1);
    \path [->,>={Stealth[blue]}] (Ay) edge[blue] (Z1);
    \path [->,>={Stealth[blue]}] (Ad) edge[blue] (Z1);
\end{scope}
\end{tikzpicture}
}\subfloat[]{
\begin{tikzpicture}
\begin{scope}[every node/.style={thick,draw=none}]
    \node (A) at (-1,0) {$A$};
    \node (Ay) at (1,1) {$A_Y$};
	\node (Ad) at (1,-1) {$A_D$};
	\node (Y1) at (3,1) {$Y_1$};
    \node (D1) at (3,-1) {$D_1$};
    \node (Y2) at (6,1) {$Y_2$};
    \node (D2) at (6,-1) {$D_2$};
    \node (Z1y) at (4,4) {$Z_{A_Y,1}$};
    \node (Z1d) at (4,-4) {$Z_{A_D,1}$};
\end{scope}

\begin{scope}[>={Stealth[black]},
              every node/.style={fill=white,circle},
              every edge/.style={draw=black,very thick}]
    \path [->] (A) edge[line width=0.85mm] (Ad);
    \path [->] (A) edge[line width=0.85mm] (Ay);
	\path [->] (Ad) edge (D1);
    \path [->] (Ad) edge[bend right] (D2);
	\path [->] (Ay) edge[bend left] (Y2);
    \path [->] (Ay) edge (Y1);	
    \path [->] (Z1y) edge (D2);
    \path [->] (Z1y) edge (Y2);
    \path [->] (Z1d) edge (D2);
    \path [->] (Z1d) edge (Y2);
    \path [->] (Y1) edge (D2);
    \path [->] (Y1) edge (Y2);
    \path [->] (D1) edge (D2);
    \path [->] (D1) edge (Y1);
    \path [->] (D2) edge (Y2);
    \path [->] (Y1) edge (Z1y);
    \path [->] (D1) edge (Z1y);
    \path [->] (Y1) edge (Z1d);
    \path [->] (D1) edge (Z1d);
    \path [->,>={Stealth[blue]}] (Ay) edge[blue] (Z1y);
    \path [->,>={Stealth[blue]}] (Ad) edge[blue] (Z1d);
\end{scope}
\end{tikzpicture}
} \\ 
%\caption{Conventional causal diagram of the MR design}
\caption{Causal graphs illustrating no isolation. (a) violates $Z_k$ partition while (b) satisfies $Z_k$ partition.}
\label{fig:no isolation}
\end{figure}
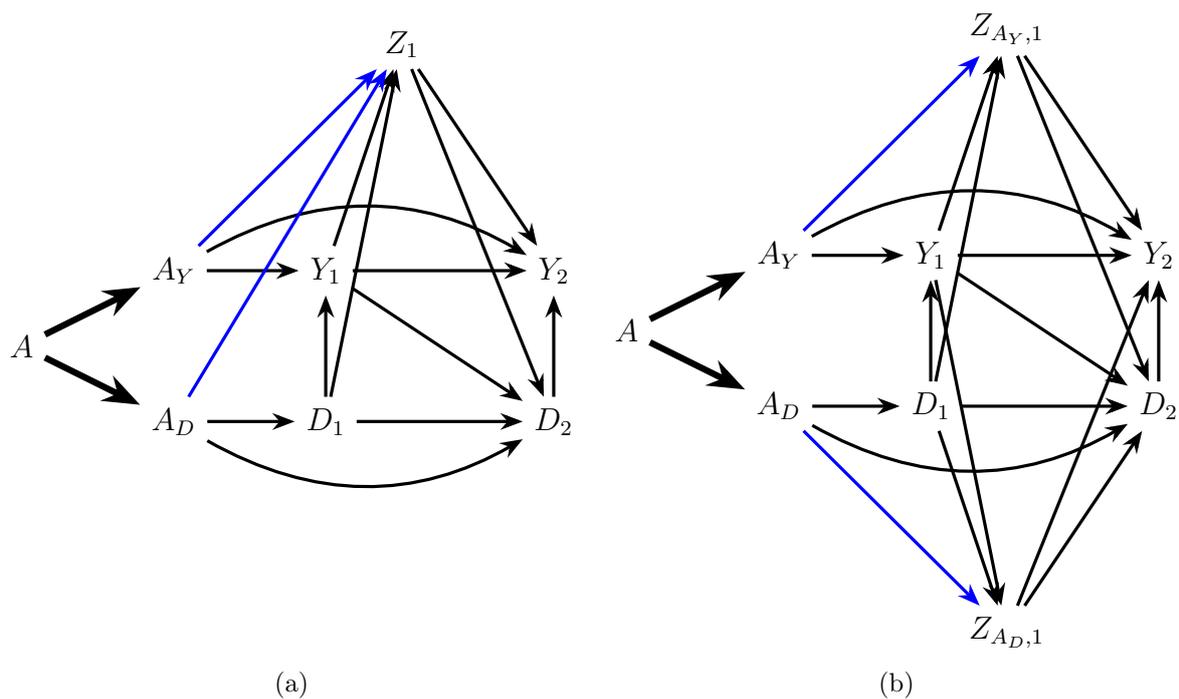

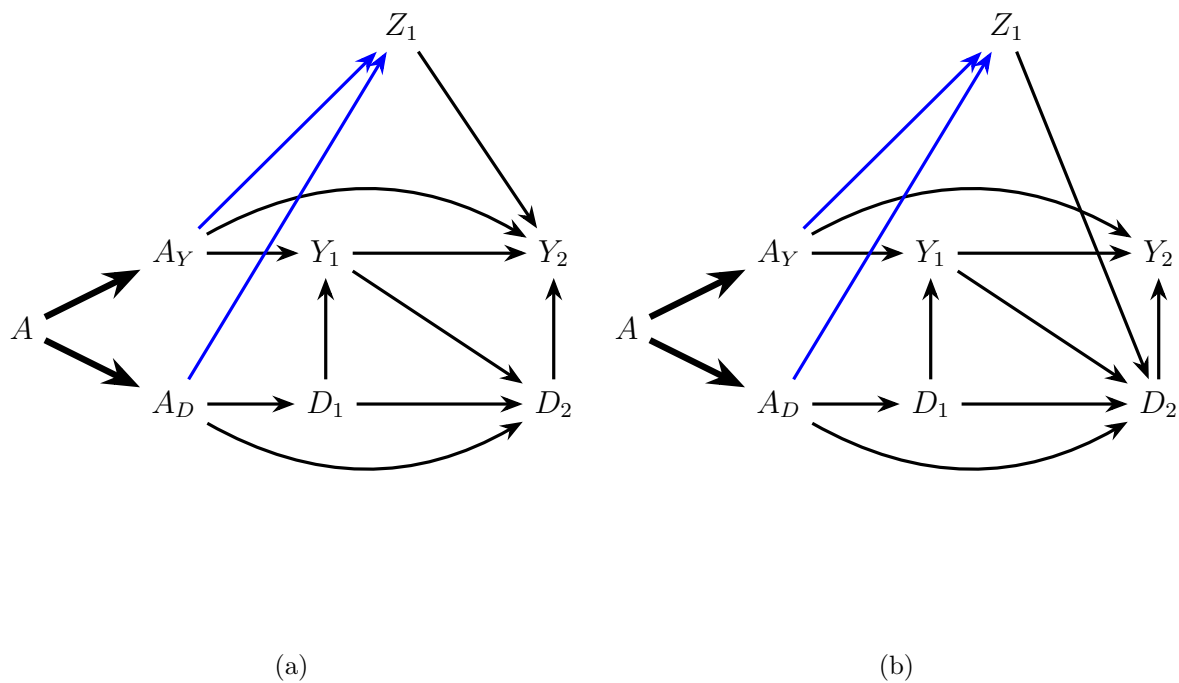
\begin{figure}
    \centering
    \subfloat[]{
\begin{tikzpicture}
\begin{scope}[every node/.style={thick,draw=none}]
    \node (A) at (-1,0) {$A$};
    \node (Ay) at (1,1) {$A_Y$};
	\node (Ad) at (1,-1) {$A_D$};
	\node (Y1) at (3,1) {$Y_1$};
    \node (D1) at (3,-1) {$D_1$};
    \node (Y2) at (6,1) {$Y_2$};
    \node (D2) at (6,-1) {$D_2$};
    \node (Z1) at (4,4) {$Z_1$};
    \node (EMPTY_NODE) at (4,-4) {$ $}; %This node is only there for visual appeal
\end{scope}

\begin{scope}[>={Stealth[black]},
              every node/.style={fill=white,circle},
              every edge/.style={draw=black,very thick}]
    \path [->] (A) edge[line width=0.85mm] (Ad);
    \path [->] (A) edge[line width=0.85mm] (Ay);
	\path [->] (Ad) edge (D1);
    \path [->] (Ad) edge[bend right] (D2);
	\path [->] (Ay) edge[bend left] (Y2);
    \path [->] (Ay) edge (Y1);	
    \path [->] (Z1) edge (Y2);
    \path [->] (Y1) edge (D2);
    \path [->] (Y1) edge (Y2);
    \path [->] (D1) edge (D2);
    \path [->] (D1) edge (Y1);
    \path [->] (D2) edge (Y2);
 %   \path [->] (Y1) edge (Z1);
 %   \path [->] (D1) edge (Z1);
    \path [->,>={Stealth[blue]}] (Ay) edge[blue] (Z1);
    \path [->,>={Stealth[blue]}] (Ad) edge[blue] (Z1);
\end{scope}
\end{tikzpicture}
}
  \subfloat[]{
\begin{tikzpicture}
\begin{scope}[every node/.style={thick,draw=none}]
    \node (A) at (-1,0) {$A$};
    \node (Ay) at (1,1) {$A_Y$};
	\node (Ad) at (1,-1) {$A_D$};
	\node (Y1) at (3,1) {$Y_1$};
    \node (D1) at (3,-1) {$D_1$};
    \node (Y2) at (6,1) {$Y_2$};
    \node (D2) at (6,-1) {$D_2$};
    \node (Z1) at (4,4) {$Z_1$};
    \node (EMPTY_NODE) at (4,-4) {$ $}; %This node is only there for visual appeal
\end{scope}

\begin{scope}[>={Stealth[black]},
              every node/.style={fill=white,circle},
              every edge/.style={draw=black,very thick}]
    \path [->] (A) edge[line width=0.85mm] (Ad);
    \path [->] (A) edge[line width=0.85mm] (Ay);
	\path [->] (Ad) edge (D1);
    \path [->] (Ad) edge[bend right] (D2);
	\path [->] (Ay) edge[bend left] (Y2);
    \path [->] (Ay) edge (Y1);	
    \path [->] (Z1) edge (D2);
    \path [->] (Y1) edge (D2);
    \path [->] (Y1) edge (Y2);
    \path [->] (D1) edge (D2);
    \path [->] (D1) edge (Y1);
    \path [->] (D2) edge (Y2);
    \path [->,>={Stealth[blue]}] (Ay) edge[blue] (Z1);
    \path [->,>={Stealth[blue]}] (Ad) edge[blue] (Z1);
\end{scope}
\end{tikzpicture}
}
\caption{Causal graphs illustrating partial isolation but violation of $Z_k$ partition.  $A_Y$ partial isolation holds in (a) and $A_D$ partial isolation holds in (b).}
    \label{fig: Z_k vs partial iso}
\end{figure}

\clearpage

%%%%% FIGURE second with splitting, panel of 4
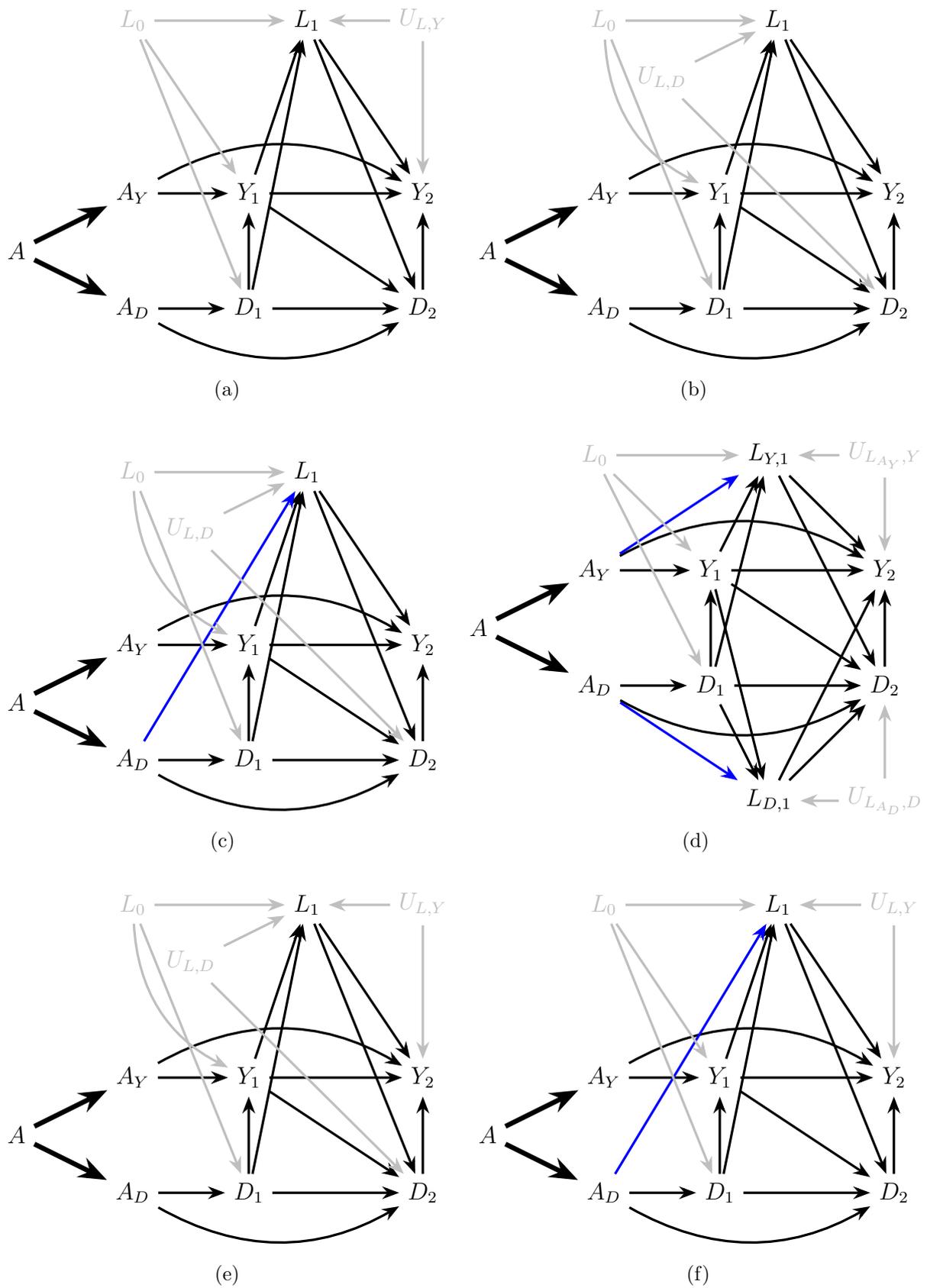
\begin{figure}
\setlength{\lineskip}{1ex}% increase spacing 
\subfloat[]{
\begin{tikzpicture}
\begin{scope}[every node/.style={thick,draw=none}]
    \node (A) at (-1,0) {$A$};
    \node (Ay) at (1,1) {$A_Y$};
	\node (Ad) at (1,-1) {$A_D$};
	\node (Y1) at (3,1) {$Y_1$};
    \node (D1) at (3,-1) {$D_1$};
    \node (Y2) at (6,1) {$Y_2$};
    \node (D2) at (6,-1) {$D_2$};
    \node (L2) at (4,4) {$L_1$};
    \node[lightgray] (L0) at (1,4) {$L_0$};
    \node[lightgray] (ULY) at (6,4) {$U_{L,Y}$};
\end{scope}

\begin{scope}[>={Stealth[black]},
              every node/.style={fill=white,circle},
              every edge/.style={draw=black,very thick}]
    \path [->] (A) edge[line width=0.85mm] (Ad);
    \path [->] (A) edge[line width=0.85mm] (Ay);
	\path [->] (Ad) edge (D1);
    \path [->] (Ad) edge[bend right] (D2);
	\path [->] (Ay) edge[bend left] (Y2);
    \path [->] (Ay) edge (Y1);	
    \path [->] (L2) edge (D2);
    \path [->] (L2) edge (Y2);
    \path [->] (Y1) edge (D2);
    \path [->] (Y1) edge (Y2);
    \path [->] (D1) edge (D2);
    \path [->] (D1) edge (Y1);
    \path [->] (D2) edge (Y2);
    \path [->] (Y1) edge (L2);
    \path [->] (D1) edge (L2);
    \path [->,>={Stealth[lightgray]}] (ULY) edge[lightgray] (L2);
    \path [->,>={Stealth[lightgray]}] (ULY) edge[lightgray] (Y2);
    \path [->,>={Stealth[lightgray]}] (L0) edge[lightgray] (Y1);
    \path [->,>={Stealth[lightgray]}] (L0) edge[lightgray] (D1);
    \path [->,>={Stealth[lightgray]}] (L0) edge[lightgray] (L2);
\end{scope}
\end{tikzpicture}
%\caption{Conventional causal diagram of the MR design}
}
\subfloat[]{
\begin{tikzpicture}
\begin{scope}[every node/.style={thick,draw=none}]
    \node (A) at (-1,0) {$A$};
    \node (Ay) at (1,1) {$A_Y$};
	\node (Ad) at (1,-1) {$A_D$};
	\node (Y1) at (3,1) {$Y_1$};
    \node (D1) at (3,-1) {$D_1$};
    \node (Y2) at (6,1) {$Y_2$};
    \node (D2) at (6,-1) {$D_2$};
    \node (L2) at (4,4) {$L_1$};
    \node[lightgray] (L0) at (1,4) {$L_0$};
    \node[lightgray] (ULD) at (2,3) {$U_{L,D}$};
\end{scope}

\begin{scope}[>={Stealth[black]},
              every node/.style={fill=white,circle},
              every edge/.style={draw=black,very thick}]
    \path [->] (A) edge[line width=0.85mm] (Ad);
    \path [->] (A) edge[line width=0.85mm] (Ay);
	\path [->] (Ad) edge (D1);
    \path [->] (Ad) edge[bend right] (D2);
	\path [->] (Ay) edge[bend left] (Y2);
    \path [->] (Ay) edge (Y1);	
    \path [->] (L2) edge (D2);
    \path [->] (L2) edge (Y2);
    \path [->] (Y1) edge (D2);
    \path [->] (Y1) edge (Y2);
    \path [->] (D1) edge (D2);
    \path [->] (D1) edge (Y1);
    \path [->] (D2) edge (Y2);
    \path [->] (Y1) edge (L2);
    \path [->] (D1) edge (L2);
    \path [->,>={Stealth[lightgray]}] (ULD) edge[lightgray] (L2);
    \path [->,>={Stealth[lightgray]}] (ULD) edge[lightgray] (D2);
    \path [->,>={Stealth[lightgray]}] (L0) edge[lightgray] (D1);
    \path [->,>={Stealth[lightgray]}] (L0) edge[lightgray] (L2);
    \path [->,>={Stealth[lightgray]}] (L0) edge[lightgray, bend right] (Y1);
\end{scope}
\end{tikzpicture}
%\caption{Conventional causal diagram of the MR design}
} \\
\subfloat[]{
\begin{tikzpicture}
\begin{scope}[every node/.style={thick,draw=none}]
    \node (A) at (-1,0) {$A$};
    \node (Ay) at (1,1) {$A_Y$};
	\node (Ad) at (1,-1) {$A_D$};
	\node (Y1) at (3,1) {$Y_1$};
    \node (D1) at (3,-1) {$D_1$};
    \node (Y2) at (6,1) {$Y_2$};
    \node (D2) at (6,-1) {$D_2$};
    \node (L2) at (4,4) {$L_1$};
    \node[lightgray] (L0) at (1,4) {$L_0$};
 %   \node[lightgray] (ULY) at (6,4) {$U_{L,Y}$};
    \node[lightgray] (ULD) at (2,3) {$U_{L,D}$};
\end{scope}

\begin{scope}[>={Stealth[black]},
              every node/.style={fill=white,circle},
              every edge/.style={draw=black,very thick}]
    \path [->] (A) edge[line width=0.85mm] (Ad);
    \path [->] (A) edge[line width=0.85mm] (Ay);
	\path [->] (Ad) edge (D1);
    \path [->] (Ad) edge[bend right] (D2);
	\path [->] (Ay) edge[bend left] (Y2);
    \path [->] (Ay) edge (Y1);	
    \path [->] (L2) edge (D2);
    \path [->] (L2) edge (Y2);
    \path [->] (Y1) edge (D2);
    \path [->] (Y1) edge (Y2);
    \path [->] (D1) edge (D2);
    \path [->] (D1) edge (Y1);
    \path [->] (D2) edge (Y2);
    \path [->] (Y1) edge (L2);
    \path [->] (D1) edge (L2);
    \path [->,>={Stealth[blue]}] (Ad) edge[blue] (L2);
%    \path [->,>={Stealth[lightgray]}] (ULY) edge[lightgray] (L2);
 %   \path [->,>={Stealth[lightgray]}] (ULY) edge[lightgray] (Y2);
    \path [->,>={Stealth[lightgray]}] (ULD) edge[lightgray] (L2);
    \path [->,>={Stealth[lightgray]}] (ULD) edge[lightgray] (D2);
    \path [->,>={Stealth[lightgray]}] (L0) edge[lightgray] (D1);
    \path [->,>={Stealth[lightgray]}] (L0) edge[lightgray] (L2);
    \path [->,>={Stealth[lightgray]}] (L0) edge[lightgray, bend right] (Y1);
\end{scope}
\end{tikzpicture}
}  
\subfloat[]{
\begin{tikzpicture}
\begin{scope}[every node/.style={thick,draw=none}]
    \node (A) at (-1,0) {$A$};
    \node (Ay) at (1,1) {$A_Y$};
	\node (Ad) at (1,-1) {$A_D$};
	\node (Y1) at (3,1) {$Y_1$};
    \node (D1) at (3,-1) {$D_1$};
    \node (Y2) at (6,1) {$Y_2$};
    \node (D2) at (6,-1) {$D_2$};
    \node (L2y) at (4,3) {$L_{Y,1}$};
    \node (L2d) at (4,-3) {$L_{D,1}$};
    \node[lightgray] (ULY) at (6,3) {$U_{L_{A_Y},Y}$};
    \node[lightgray] (ULD) at (6,-3) {$U_{L_{A_D},D}$};
    \node[lightgray] (L0) at (1,3) {$L_0$};
\end{scope}
\begin{scope}[>={Stealth[black]},
              every node/.style={fill=white,circle},
              every edge/.style={draw=black,very thick}]
    \path [->] (A) edge[line width=0.85mm] (Ad);
    \path [->] (A) edge[line width=0.85mm] (Ay);
	\path [->] (Ad) edge (D1);
    \path [->] (Ad) edge[bend right] (D2);
	\path [->] (Ay) edge[bend left] (Y2);
    \path [->] (Ay) edge (Y1);	
    \path [->] (L2y) edge (D2);
    \path [->] (L2y) edge (Y2);
    \path [->] (L2d) edge (D2);
    \path [->] (L2d) edge (Y2);
    \path [->] (Y1) edge (D2);
    \path [->] (Y1) edge (Y2);
    \path [->] (D1) edge (D2);
    \path [->] (D1) edge (Y1);
    \path [->] (D2) edge (Y2);
    \path [->] (Y1) edge (L2y);
    \path [->] (D1) edge (L2y);
    \path [->] (Y1) edge (L2d);
    \path [->] (D1) edge (L2d);
    \path [->,>={Stealth[blue]}] (Ay) edge[blue] (L2y);
    \path [->,>={Stealth[blue]}] (Ad) edge[blue] (L2d);
    \path [->,>={Stealth[lightgray]}] (ULY) edge[lightgray] (L2y);
    \path [->,>={Stealth[lightgray]}] (ULY) edge[lightgray] (Y2);
    \path [->,>={Stealth[lightgray]}] (ULD) edge[lightgray] (L2d);
    \path [->,>={Stealth[lightgray]}] (ULD) edge[lightgray] (D2);
    \path [->,>={Stealth[lightgray]}] (L0) edge[lightgray] (D1);
    \path [->,>={Stealth[lightgray]}] (L0) edge[lightgray] (L2y);
    \path [->,>={Stealth[lightgray]}] (L0) edge[lightgray] (Y1);
\end{scope}
\end{tikzpicture}
}
\\
% \centering
\subfloat[]{
\begin{tikzpicture}
\begin{scope}[every node/.style={thick,draw=none}]
    \node (A) at (-1,0) {$A$};
    \node (Ay) at (1,1) {$A_Y$};
	\node (Ad) at (1,-1) {$A_D$};
	\node (Y1) at (3,1) {$Y_1$};
    \node (D1) at (3,-1) {$D_1$};
    \node (Y2) at (6,1) {$Y_2$};
    \node (D2) at (6,-1) {$D_2$};
    \node (L2) at (4,4) {$L_1$};
    \node[lightgray] (ULY) at (6,4) {$U_{L,Y}$};
    \node[lightgray] (ULD) at (2,3) {$U_{L,D}$};        \node[lightgray] (L0) at (1,4) {$L_0$};
\end{scope}

\begin{scope}[>={Stealth[black]},
              every node/.style={fill=white,circle},
              every edge/.style={draw=black,very thick}]
    \path [->] (A) edge[line width=0.85mm] (Ad);
    \path [->] (A) edge[line width=0.85mm] (Ay);
	\path [->] (Ad) edge (D1);
    \path [->] (Ad) edge[bend right] (D2);
	\path [->] (Ay) edge[bend left] (Y2);
    \path [->] (Ay) edge (Y1);	
    \path [->] (L2) edge (D2);
    \path [->] (L2) edge (Y2);
    \path [->] (Y1) edge (D2);
    \path [->] (Y1) edge (Y2);
    \path [->] (D1) edge (D2);
    \path [->] (D1) edge (Y1);
    \path [->] (D2) edge (Y2);
    \path [->] (Y1) edge (L2);
    \path [->] (D1) edge (L2);
    \path [->,>={Stealth[lightgray]}] (ULY) edge[lightgray] (L2);
    \path [->,>={Stealth[lightgray]}] (ULY) edge[lightgray] (Y2);
    \path [->,>={Stealth[lightgray]}] (ULD) edge[lightgray] (L2);
    \path [->,>={Stealth[lightgray]}] (ULD) edge[lightgray] (D2);
    \path [->,>={Stealth[lightgray]}] (L0) edge[lightgray] (D1);
    \path [->,>={Stealth[lightgray]}] (L0) edge[lightgray] (L2);
    \path [->,>={Stealth[lightgray]}] (L0) edge[lightgray, bend right] (Y1);
\end{scope}
\end{tikzpicture}
%\caption{Conventional causal diagram of the MR design}

\label{fig:AllDeltaHoldNoL0}
}
\subfloat[]{
\begin{tikzpicture}
\begin{scope}[every node/.style={thick,draw=none}]
    \node (A) at (-1,0) {$A$};
    \node (Ay) at (1,1) {$A_Y$};
	\node (Ad) at (1,-1) {$A_D$};
	\node (Y1) at (3,1) {$Y_1$};
    \node (D1) at (3,-1) {$D_1$};
    \node (Y2) at (6,1) {$Y_2$};
    \node (D2) at (6,-1) {$D_2$};
    \node (L2) at (4,4) {$L_1$};
    \node[lightgray] (ULY) at (6,4) {$U_{L,Y}$};
%    \node[lightgray] (ULD) at (2,3) {$U_{L,D}$};
    \node[lightgray] (L0) at (1,4) {$L_0$};
\end{scope}

\begin{scope}[>={Stealth[black]},
              every node/.style={fill=white,circle},
              every edge/.style={draw=black,very thick}]
    \path [->] (A) edge[line width=0.85mm] (Ad);
    \path [->] (A) edge[line width=0.85mm] (Ay);
	\path [->] (Ad) edge (D1);
    \path [->] (Ad) edge[bend right] (D2);
	\path [->] (Ay) edge[bend left] (Y2);
    \path [->] (Ay) edge (Y1);	
    \path [->] (L2) edge (D2);
    \path [->] (L2) edge (Y2);
    \path [->] (Y1) edge (D2);
    \path [->] (Y1) edge (Y2);
    \path [->] (D1) edge (D2);
    \path [->] (D1) edge (Y1);
    \path [->] (D2) edge (Y2);
    \path [->] (Y1) edge (L2);
    \path [->] (D1) edge (L2);
    \path [->,>={Stealth[blue]}] (Ad) edge[blue] (L2);
    \path [->,>={Stealth[lightgray]}] (ULY) edge[lightgray] (L2);
    \path [->,>={Stealth[lightgray]}] (ULY) edge[lightgray] (Y2);
    \path [->,>={Stealth[lightgray]}] (L0) edge[lightgray] (D1);
    \path [->,>={Stealth[lightgray]}] (L0) edge[lightgray] (L2);
    \path [->,>={Stealth[lightgray]}] (L0) edge[lightgray] (Y1);
 %   \path [->,>={Stealth[lightgray]}] (ULD) edge[lightgray] (L2);
  %  \path [->,>={Stealth[lightgray]}] (ULD) edge[lightgray] (D2);
\end{scope}
\end{tikzpicture}
} 
\caption{Extended graphs that explicitly depict measured and unmeasured variables.  The dismissible component conditions are violated in (e)-(f).}
%satisfied in (a)-(d), and 
%$\Delta$1, $\Delta$2, $\Delta$3a and $\Delta$3b hold
\label{fig:partial isolationExpanded}
\end{figure}
\clearpage

%%%%% Figure Transformation to G
\begin{figure}
    \centering
\subfloat[]{    
\begin{tikzpicture}
\begin{scope}[every node/.style={thick,draw=none}]
    \node (Ay) at (1,1) {$A_Y(G)$};
	\node (Ad) at (1,-1) {$A_D(G)$};
	\node (Y1) at (3,1) {$Y_1(G)$};
    \node (D1) at (3,-1) {$D_1(G)$};
    \node (Y2) at (6,1) {$Y_2(G)$};
    \node (D2) at (6,-1) {$D_2(G)$};
    \node (L1) at (4,4) {$L_1(G)$};
    \node (EMPTY) at (4,-4) {$ $};
\end{scope}

\begin{scope}[>={Stealth[black]},
              every node/.style={fill=white,circle},
              every edge/.style={draw=black,very thick}]
	\path [->] (Ad) edge (D1);
    \path [->] (Ad) edge[bend right] (D2);
	\path [->] (Ay) edge[bend left] (Y2);
    \path [->] (Ay) edge (Y1);	
    \path [->] (L1) edge (D2);
    \path [->] (L1) edge (Y2);
    \path [->] (Y1) edge (D2);
    \path [->] (Y1) edge (Y2);
    \path [->] (D1) edge (D2);
    \path [->] (D1) edge (Y1);
    \path [->] (D2) edge (Y2);
    \path [->] (Y1) edge (L1);
    \path [->] (D1) edge[bend right=10] (L1);
    \path [->,>={Stealth[blue]}] (Ad) edge[blue] (L1);
\end{scope}
\end{tikzpicture}
}
\subfloat[]{    
\begin{tikzpicture}
\begin{scope}[every node/.style={thick,draw=none}]
    \node (Ay) at (1,1) {$A_Y(G)$};
	\node (Ad) at (1,-1) {$A_D(G)$};
	\node (Y1) at (3,1) {$Y_1(G)$};
    \node (D1) at (3,-1) {$D_1(G)$};
    \node (Y2) at (6,1) {$Y_2(G)$};
    \node (D2) at (6,-1) {$D_2(G)$};
    \node (LY1) at (4,4) {$L_{A_Y,1}(G)$};
    \node (LD1) at (4,-4) {$L_{A_D,1}(G)$};
\end{scope}

\begin{scope}[>={Stealth[black]},
              every node/.style={fill=white,circle},
              every edge/.style={draw=black,very thick}]
	\path [->] (Ad) edge (D1);
    \path [->] (Ad) edge[bend right] (D2);
	\path [->] (Ay) edge[bend left] (Y2);
    \path [->] (Ay) edge (Y1);
    \path [->,>={Stealth[blue]}] (Ay) edge[blue] (LY1);
    \path [->,>={Stealth[blue]}] (Ad) edge[blue] (LD1);
    \path [->] (LY1) edge (D2);
    \path [->] (LY1) edge (Y2);
    \path [->] (LD1) edge (D2);
    \path [->] (LD1) edge (Y2);
    \path [->] (Y1) edge (D2);
    \path [->] (Y1) edge (Y2);
    \path [->] (D1) edge (D2);
    \path [->] (D1) edge (Y1);
    \path [->] (D2) edge (Y2);
    \path [->] (Y1) edge (LY1);
    \path [->] (D1) edge[bend right=10] (LY1);
    \path [->] (Y1) edge[bend left=10] (LD1);
    \path [->] (D1) edge[bend right=10] (LD1);
\end{scope}
\end{tikzpicture}
}
    \caption{The graph in (a) is a successive transformation of Figure \ref{fig:partial isolation}a for $L_1=Z_1$ that represents a hypothetical trial $G$ in which both $A_Y$ and $A_D$ are randomly assigned. The graph in (b) is a transformation of Figure \ref{fig:no isolation}b, in which $L_{A_Y,1}(G) \equiv Z_{A_Y,1}(G), L_{A_D,1}(G) \equiv Z_{A_D,1}(G)$. All dismissible component conditions hold in both graphs.}
    \label{fig:transformG}
\end{figure}
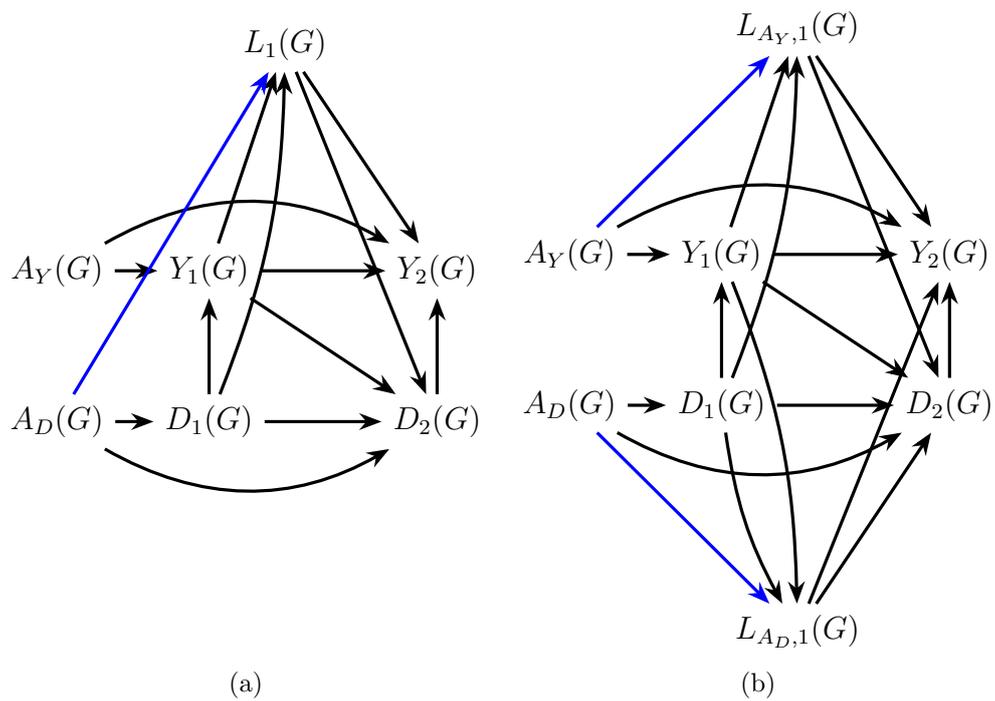

\clearpage

\begin{figure}
\centering
\begin{minipage}{.5\linewidth}
\centering
\subfloat[]{\label{main:a}\includegraphics[scale=.45]{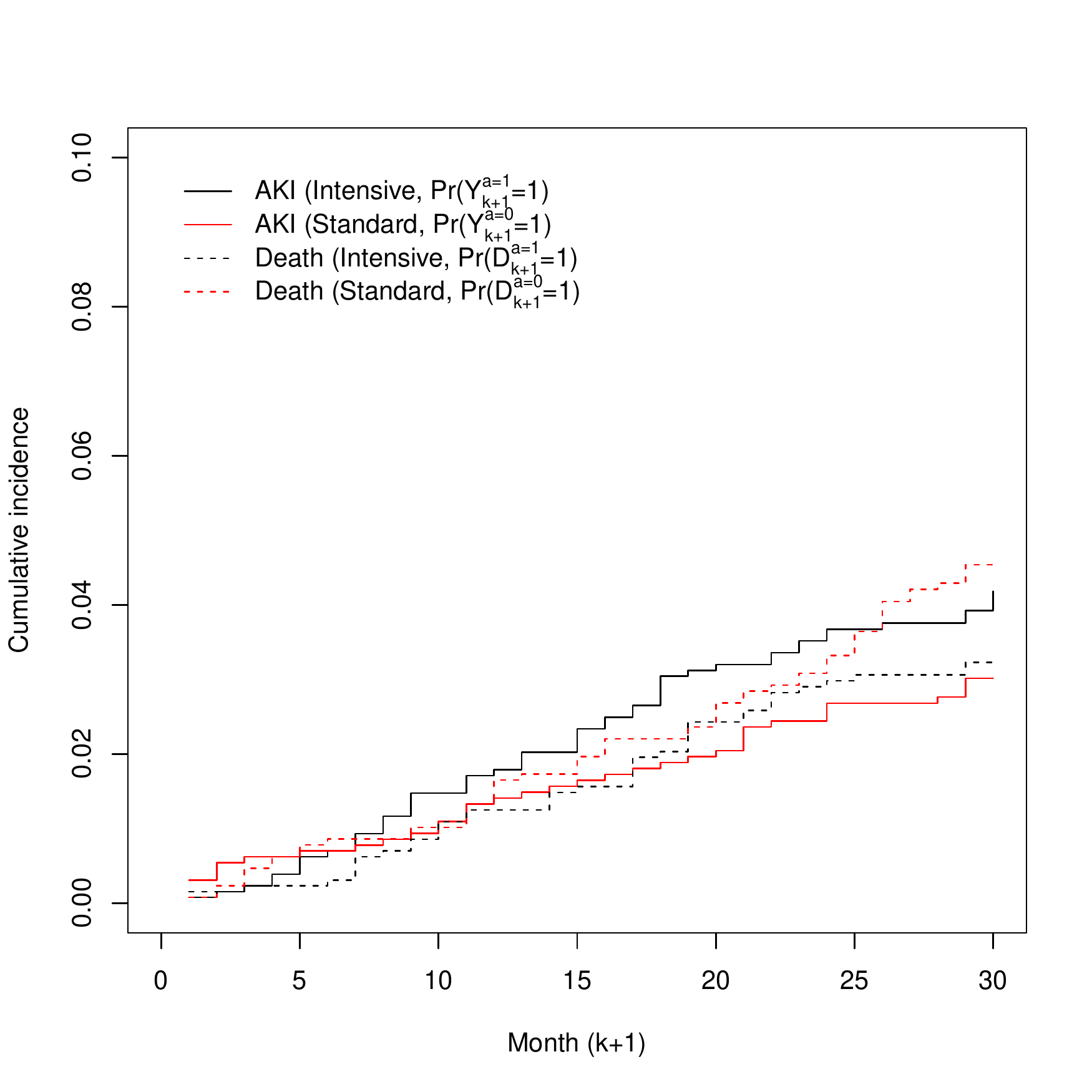}}
\end{minipage}%
\begin{minipage}{.5\linewidth}
\subfloat[]{\label{main:b}\includegraphics[scale=.45]{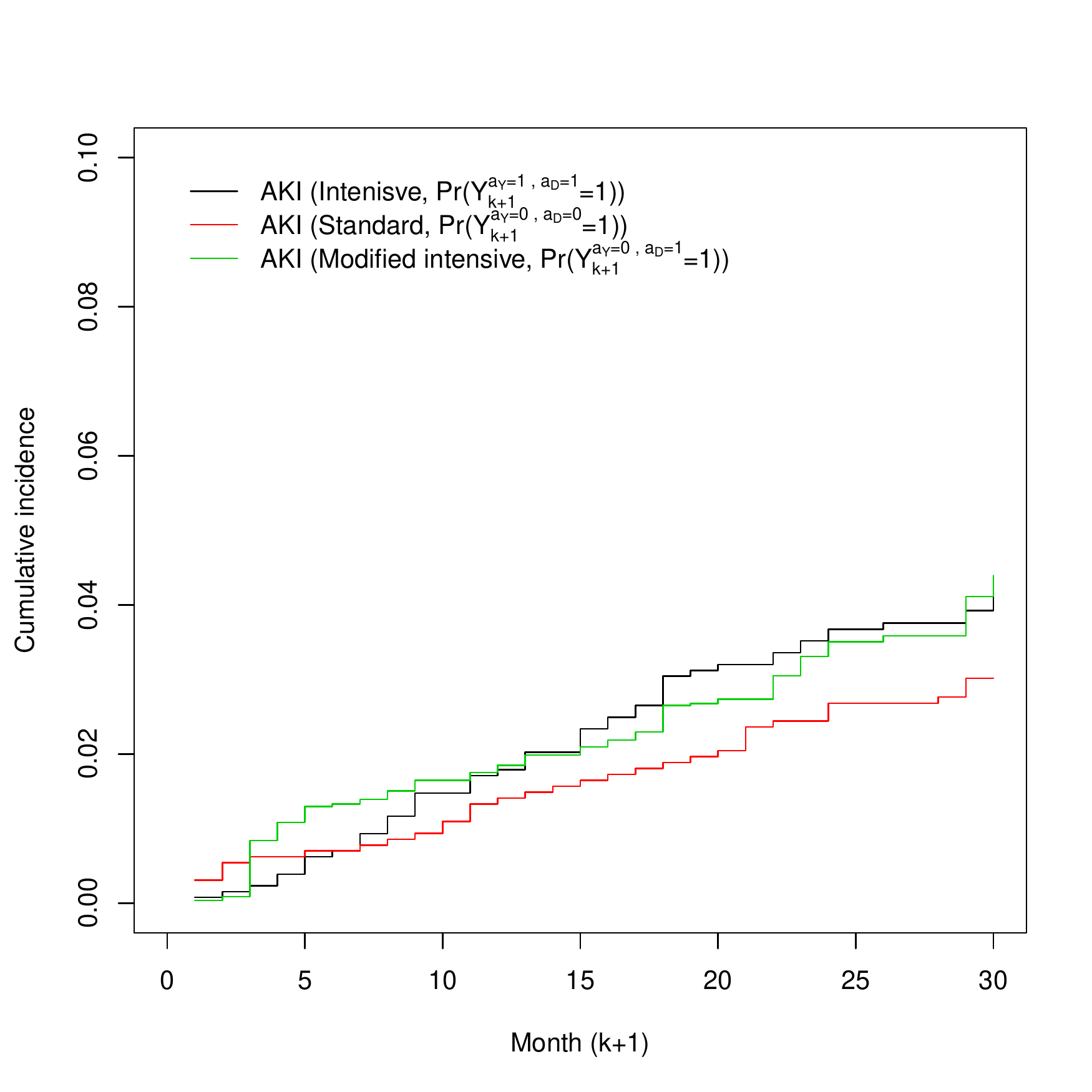}}
\end{minipage}
\caption{(a) Weighted Aalen-Johansen estimates of the cumulative incidence functions for acute kidney injury (AKI, solid lines) and death (dashed lines) under intensive ($a=1$, red) and standard ($a=0$, black) treatment. (b) Estimates of AKI cumulative incidence based on methods of Section \ref{sec: estimation} under a modified treatment containing only the $A_D$ component ($a_D=1,a_Y=0$, green). Cumulative incidence estimates under the original intensive ($a=a_Y=a_D=1$, red) and the standard ($a=a_Y=a_D=0$, black) of (a) are overlaid.}
%Month K
\label{fig: ci plot}
\end{figure}
%, where the full line is derived from estimator $\hat{\nu}_{1,a*=0,a=1}$ and the dashed line is derived from estimator $\hat{\nu}_{2,a*=0,a=1}$

\clearpage

\appendix

\section{Modified treatment assumption}
\label{sec: modified treatment}
To define the generalized decomposition assumption in Section \ref{sec: def decomposition}, we considered a decomposition of treatment $A$ into different components, $A_Y$ and $A_D$, satisfying \eqref{assumption: Determinsm}. Yet, a physical decomposition of $A$ into components $A_Y$ and $A_D$ is not necessary for the validity of our results on identification and estimation of separable effects in Sections \ref{sec: identifiability conditions}-\ref{sec: estimation}.  Specifically, the proofs in Appendix \ref{sec: proof of idenditifiability}  only require condition \eqref{eq: definition A=Ay=Ad} of the generalized decomposition assumption, which may also hold for treatments $A_Y$ and $A_D$ that are not components of $A$.  

In this case, the separable effects can still be meaningfully interpreted as the effects of joint assignment to alternative treatments $A_Y$ and $A_D$ in place of assignment to $A$.  The isolation conditions of Section \ref{sec: isolation and interpret} still constitute additional mechanistic assumptions on how these alternative treatments operate on the event of interest and the competing event, as defined relative to a $G$ transformation; that is, relative to a trial in which $A_Y$ and $A_D$ are randomly assigned. The isolation conditions can thus be evaluated in $G$ transformation graphs, as opposed to extended causal DAGs. Note that in the case of a treatment decomposition, the isolation conditions can be evaluated with respect to either extended causal DAGs or their $G$ transformations and the same conclusions will be reached.  

However, when treatments $A_Y$ and $A_D$ are not components of $A$, we require additional assumptions beyond \eqref{eq: definition A=Ay=Ad} for the separable effects to \textit{explain} the mechanism by which the original treatment $A$ exerts its effects on $Y_{k+1}$ for $k \in \{0, \dots, K \}$. The following is an alternative assumption to \eqref{assumption: Determinsm} that, when coupled with \eqref{eq: definition A=Ay=Ad}, is sufficient for the separable effects to explain the total effect of the original treatment $A$ on the event of interest when $A_Y$ and $A_D$ are not a decomposition of $A$.  For variables, $M_Y$ and $M_D$, consider the following assumption:
\begin{align}
& \text{$A_Y$ and $A_D$ exert all their effects through $M_Y$ and $M_D$, respectively, and} \nonumber \\
& \quad M^{a_Y=a,a_D}_Y = M^{a}_Y \quad \text{ for } a_D \in \{0,1\} \nonumber \\ 
& \quad M^{a_Y,a_D=a}_D = M^{a}_D \quad  \text{ for } a_Y \in \{0,1\} \label{ass: m equal},
\end{align}
where the counterfactuals on the left and right hand side of the equality refer to assignment to $A=a$ and no level of $A_Y$ and $A_D$, and assignment to ($A_Y=a_Y$, $A_D=a_D$) and no level of $A$, respectively. A transformed DAG that is consistent with \eqref{ass: m equal} is shown in Figure \ref{fig: modified treatment ass} where $G'$ refers to a six arm trial in which subjects are either randomly assigned to $A$ (and no level of $A_Y$ and $A_D$) or a combination of $A_Y=a_Y$ and $A_D=a_D$ (and no level of $A$).
%%  are not assigned to any level (e.g.\ these treatments are not available in study). 

Note that assumptions \eqref{eq: definition A=Ay=Ad} and \eqref{ass: m equal}  can, in principle, be falsified in a future randomized experiment. For example, by randomly assigning individuals to $A$ or a joint treatment $(A_Y,A_D)$, we can assess whether $E(W \mid A_Y=a, A_D=a)\neq E(W \mid A=a)$, for any $W \in \{Y_{1},\dots,Y_{K+1},D_{1},\dots,D_{k+1},Z_{1},\dots,Z_{k+1}, M_Y, M_D\}$.
%%, such as diethylstilbestrol (DES),

To fix ideas, consider a study of estrogen therapy versus placebo in men with prostate cancer, which was the running example in Stensrud et al \cite{stensrud2019separable}. Estrogen therapy is thought to reduce death due to prostate cancer, because it reduces testosterone levels and thus prevents the cancer cells from growing. However, there is concern that estrogen therapy may also increase mortality due to cardiovascular disease, e.g.\ through estrogen-induced synthesis of coagulation factors \cite{turo2014diethylstilboestrol}. Stensrud et al  \cite{stensrud2019separable} used this example to motivate the separable direct and indirect effects under full isolation, and suggested that alternative treatments, such as castration and luteinizing hormone releasing hormone (LHRH) antagonists, can have the same effect as estrogen on testosterone reduction ($M_Y$), but, unlike estrogen, these treatments do not exert effects on the coagulation factors ($M_D$). Whereas Stensrud et al \cite{stensrud2019separable} did not formally define the variables $M_Y$ and $M_D$, providing the story that includes these additional variables, satisfying \eqref{ass: m equal}, is essential to connect the effect of e.g.\ $A_Y=1$ (here, assigning testosterone or LHRH antagonists) to the separable direct and indirect effects of estrogen therapy ($A$) itself. %; assumptions \eqref{eq: definition A=Ay=Ad} and \eqref{ass: m equal}, together with 

\begin{figure}
\centering
\begin{tikzpicture}
\begin{scope}[every node/.style={thick,draw=none}]
    \node (A) at (-1.5,0) {$A(G')$};
    \node (Ay) at (-1,1) {$A_{Y}(G')$};
	\node (Ad) at (-1,-1) {$A_{D} (G')$};
    \node (My) at (1.5,1) {$M_{Y} (G')$};
	\node (Md) at (1.5,-1) {$M_{D} (G')$};
	\node (Y1) at (4,1) {$Y_1 (G') $};
    \node (D1) at (4,-1) {$D_1 (G')$};
    \node (Y2) at (7,1) {$Y_2 (G')  $};
    \node (D2) at (7,-1) {$D_2 (G') $};
    \node (L) at (2.5,4) {$L (G')$};
\end{scope}

\begin{scope}[>={Stealth[black]},
              every node/.style={fill=white,circle},
              every edge/.style={draw=black,very thick}]
    % \path [->] (A) edge (Ad);
    % \path [->] (A) edge (Ay);
	\path [->] (Md) edge (D1);
    \path [->] (Md) edge[bend right] (D2);
	\path [->] (My) edge[bend left] (Y2);
    \path [->] (My) edge (Y1);	
    \path [->] (Ay) edge (My);
    \path [->] (Ad) edge (Md);
    \path [->] (A) edge (My);
    \path [->] (A) edge (Md);
    \path [->] (L) edge (Y1);
    \path [->] (L) edge (Y2);
    \path [->] (L) edge[bend right] (D1);
    \path [->] (L) edge (D2);
    \path [->] (Y1) edge (D2);
    \path [->] (Y1) edge (Y2);
    \path [->] (D1) edge (D2);
    \path [->] (D1) edge (Y1);
    \path [->] (D2) edge (Y2);
\end{scope}
\end{tikzpicture}
\caption{Modified DAG including the additional variables $M_{Y}$ and $M_{D}$ and their relation to $A$, $A_{Y}$ and $A_{D}$.}
\label{fig: modified treatment ass}
\end{figure}
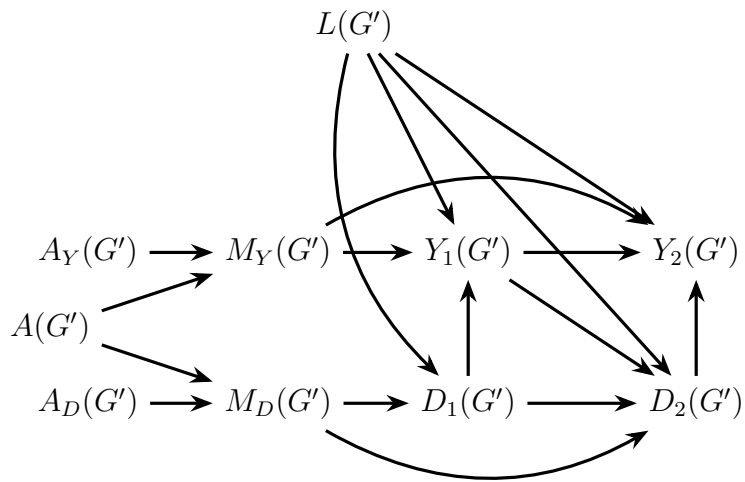

\clearpage

\section{Proof of identifiability}
\label{sec: proof of idenditifiability}
Before we provide a proof of identification formula \eqref{eq: identifying formula}, consider the following identifiability conditions that generalize the conditions from Section \ref{sec: identifiability conditions} to allow for censoring.
\begin{enumerate}
\item[1. ] Exchangeability: 
\begin{align}
%& \textbf{E1}: \textbf{E2}: 
&\bar{Y}_{K+1}^{a, \bar{c}=0},\bar{D}_{K+1}^{a, \bar{c}=0},\bar{L}^{a, \bar{c}=0}_{K+1} \independent A \mid L_{0} \label{ass: E1 app} \\
&\underline{Y}^{a, \bar{c}=0}_{k+1}, \underline{D}^{a, \bar{c}=0}_{k+1}, \underline{L}^{a, \bar{c}=0}_{k+1}  \independent C_{k+1} \mid Y_k = D_k = \bar{C}_k = 0, \bar{L}_k, A \label{ass: E2 app}
\end{align}
Condition \eqref{ass: E1 app} holds when $A_Y$ and $A_D$ are randomly assigned at baseline, possibly conditional on $L_0$. Condition \eqref{ass: E2 app} requires that losses to follow-up are independent of future counterfactual events, given the measured past; this assumption does not hold by design in a randomised trial, as losses to follow-up are not randomly assigned in practice.
\item[2.] Positivity:  
\begin{align}
& \Pr(L_0=l_0)>0\implies   \nonumber \\
& \quad \Pr (A=a\mid  L_0=l_0)>0, \label{eq: positivity 1 app} \\
& f_{\overline{L}_k,D_{k+1},C_{k+1},Y_k}(\overline{l}_k,0,0,0) > 0  \implies \nonumber \\ 
& \quad  \Pr(A=a|D_{k+1}=C_{k+1}=Y_k=0,\overline{L}_k=\overline{l}_k)>0 \label{eq: positivity 2 app} \\
& \Pr(A = a,Y_k=0,D_k=0,\bar{C}_k=0,\bar{L}_k=\overline{l}_k) > 0 \implies \nonumber \\
&\quad \Pr(C_{k+1}=0\mid Y_k=0,D_k=0,\bar{C}_k=0,\bar{L}_k=\overline{l}_k,A=a) > 0 \label{eq: positivity 3 app}
\end{align}
for $a\in\{0,1\}$, $k \in \{0,  \dots ,K\}$ and $L_k \in \mathcal{L}$. Conditions \eqref{eq: positivity 1 app} and \eqref{eq: positivity 2 app} were described in the main text. Condition 
\eqref{eq: positivity 3 app} requires that for any possible history of treatment assignment and covariates among those who are event-free and uncensored at $k$, some subjects will remain uncensored at $k+1$.
\item[3.] Consistency: 
\begin{align}
  & \text{if } A=a \text{ and } \bar{C}_{k+1} = 0, \nonumber \\
  & \text{then } \bar{Y}_{k+1} = \bar{Y}^{a,  \bar{c}=0}_{k+1}, \bar{D}_{k+1} = \bar{D}^{a,  \bar{c}=0}_{k+1} \text{ and } \bar{L}_{k+1} = \bar{L}^{a,\bar{c}=0}_{k+1}.
  \label{ass: consistency cens}
\end{align}
Consistency is satisfied if any individual who has data history consistent with the intervention under a counterfactual scenario, would have an observed outcome that is equal to the counterfactual outcome. 
\item[4.] Dismissible component conditions: 
\begin{align}
& Y^{\bar{c}=0}_{k+1}(G) \independent A_D(G) \mid A_Y(G),D^{\bar{c}=0}_{k+1}(G)= Y^{\bar{c}=0}_{k}(G)=0, \bar{L}^{\bar{c}=0}_k(G), \label{ass: delta 1 app} \\
& D^{\bar{c}=0}_{k+1}(G) \independent A_Y(G) \mid A_D(G), D^{\bar{c}=0}_{k}(G)=Y^{\bar{c}=0}_{k}(G)=0, \bar{L}^{\bar{c}=0}_{k}(G), \label{ass: delta 2 app}  \\
& L^{\bar{c}=0}_{A_Y,k}(G) \independent A_D(G) \mid A_Y(G), Y^{\bar{c}=0}_{k}(G)=D^{\bar{c}=0}_{k}(G)=0, \bar{L}^{\bar{c}=0}_{k-1}(G),L^{\bar{c}=0}_{A_D,k}(G), \label{ass: delta 3a app}  \\
& L^{\bar{c}=0}_{A_D,k}(G) \independent A_Y(G) \mid A_D(G), D^{\bar{c}=0}_{k}(G)=Y^{\bar{c}=0}_{k}(G)=0. \bar{L}^{\bar{c}=0}_{k-1}(G) \label{ass: delta 3b app}. 
\end{align}
The dismissible component conditions are identical to the conditions in Section \ref{sec: identifiability conditions}, but the superscript ${\bar{c}=0}$ is included to emphasize that we consider outcomes in a setting in which loss to follow-up is eliminated even under $G$.
\end{enumerate}

\begin{lemma} \label{lemma: delta conditions}
Under a FFRCISTG model, the dismissible component conditions \eqref{ass: delta 1 app}-\eqref{ass: delta 3b app} imply the following equalities for $a_Y,a_D \in \{0,1\}$: 
\begin{align}
 & \Pr(Y^{a_Y,a_D=0,\bar{c}= 0}_{k+1} = 1 \mid  Y^{a_Y,a_D=0,\bar{c}= 0}_k = 0, D^{a_Y,a_D=0,\bar{c}= 0}_{k+1} = 0,\bar{L}^{a_Y,a_D=0,\bar{c}= 0}_{k} = \bar{l}_{k}) \label{eq: ineq y} \\
=&\Pr(Y^{a_Y,a_D=1,\bar{c}= 0}_{k+1} = 1 \mid  Y^{a_Y,a_D=1,\bar{c}=0}_k = D^{a_Y,a_D=1,\bar{c}=0}_{k+1} = 0 , \bar{L}^{a_Y,a_D=1,\bar{c}= 0}_{k} = \bar{l}_{k} ), \nonumber \\
 & \Pr(D^{a_Y=0,a_D,\bar{c}= 0}_{k+1} = 1 \mid  Y^{a_Y=0,a_D,\bar{c}= 0}_k = D^{a_Y=0,a_D,\bar{c}= 0}_k = 0, \bar{L}^{a_Y=0,a_D,\bar{c}= 0}_{k} = \bar{l}_{k}) \label{eq: ineq d}  \\
=&\Pr(D^{a_Y=1,a_D,\bar{c}= 0}_{k+1} = 1 \mid  Y^{a_Y=1,a_D,\bar{c}=0}_k = 0, D^{a_Y=1,a_D,\bar{c}=0}_k = 0 ,  \bar{L}^{a_Y=1,a_D,\bar{c}= 0}_{k} = \bar{l}_{k} ), \nonumber \\ 
   & \Pr(L^{a_Y,a_D=1,\bar{c}= 0}_{A_Y,k+1} = l^{a_Y,a_D=1,\bar{c}= 0}_{A_Y,k+1} \mid  Y^{a_Y,a_D=1,\bar{c}= 0}_{k+1} = D^{a_Y,a_D=1,\bar{c}= 0}_{k+1} = 0, \label{eq: ineq ly} \\ 
   & \quad \bar{L}^{a_Y,a_D=1,\bar{c}= 0}_{k} = \bar{l}_{k},L^{a_Y,a_D=1,\bar{c}= 0}_{A_D,k+1} = l_{A_D,k+1}) \nonumber \\
   =& \Pr (L^{a_Y,a_D=0,\bar{c}= 0}_{A_Y,k+1} = l^{a_Y,a_D=0,\bar{c}= 0}_{A_Y,k+1} \mid  Y^{a_Y,a_D=0,\bar{c}= 0}_{k+1} = D^{a_Y,a_D=0,\bar{c}= 0}_{k+1} = 0, \nonumber \\ 
   & \quad \bar{L}^{a_Y,a_D=0,\bar{c}= 0}_{k} = \bar{l}_{k},L^{a_Y,a_D=0,\bar{c}= 0}_{A_D,k+1} = l_{A_D,k+1}), \nonumber \\
  & \Pr(L^{a_Y=0,a_D,\bar{c}= 0}_{A_D,k+1} = l^{a_Y=0,a_D,\bar{c}= 0}_{A_D,k+1} \mid  Y^{a_Y=0,a_D,\bar{c}= 0}_{k+1} = D^{a_Y=0,a_D,\bar{c}= 0}_{k+1} = 0, \bar{L}^{a_Y=0,a_D,\bar{c}= 0}_{k} = \bar{l}_{k}) \label{eq: ineq ld} \\
   =& \Pr(L^{a_Y=1,a_D,\bar{c}= 0}_{A_D,k+1} = l^{a_Y=1,a_D,\bar{c}= 0}_{A_D,k+1} \mid  Y^{a_Y=1,a_D,\bar{c}= 0}_{k+1} = D^{a_Y=1,a_D,\bar{c}= 0}_{k+1} = 0, \bar{L}^{a_Y=1,a_D,\bar{c}= 0}_{k} = \bar{l}_{k}). \nonumber
\end{align}
\end{lemma}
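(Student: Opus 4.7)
The plan is to exploit the correspondence, built into the FFRCISTG plus the generalized decomposition assumption, between counterfactuals under joint interventions on $(A_Y, A_D)$ and outcomes in the hypothetical trial $G$ conditional on the randomly assigned values $A_Y(G) = a_Y, A_D(G) = a_D$. I would first argue by induction on $k$ that the joint law of $(Y^{a_Y, a_D, \bar{c}=0}_{k+1}, D^{a_Y, a_D, \bar{c}=0}_{k+1}, L^{a_Y, a_D, \bar{c}=0}_{k+1})$ coincides with the conditional joint law of $(Y^{\bar{c}=0}_{k+1}(G), D^{\bar{c}=0}_{k+1}(G), L^{\bar{c}=0}_{k+1}(G))$ given $A_Y(G) = a_Y, A_D(G) = a_D$. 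This follows from recursive substitution of the intervention values into the structural equations of the FFRCISTG, together with the mutual independence of the exogenous errors and the fact that, under the decomposition, the extended structural equations reduce to the original ones when $A_Y = A_D = A$.

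Once this correspondence is in hand, each of \eqref{eq: ineq y}--\eqref{eq: ineq ld} becomes a one-line application of the matching dismissible component condition. For \eqref{eq: ineq y}, I would rewrite both the $a_D = 0$ and $a_D = 1$ sides as conditional probabilities in $G$ of the form
\[
\Pr\bigl(Y^{\bar{c}=0}_{k+1}(G) = 1 \,\big|\, A_Y(G) = a_Y, A_D(G) = a_D, D^{\bar{c}=0}_{k+1}(G) = Y^{\bar{c}=0}_k(G) = 0, \bar{L}^{\bar{c}=0}_k(G) = \bar{l}_k\bigr),
\]
and then invoke \eqref{ass: delta 1 app} to conclude that these two probabilities agree. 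Equations \eqref{eq: ineq d}, \eqref{eq: ineq ly}, and \eqref{eq: ineq ld} follow by the same two-step recipe, appealing respectively to \eqref{ass: delta 2 app}, \eqref{ass: delta 3a app}, and \eqref{ass: delta 3b app}; in the case of \eqref{eq: ineq ly}, one additionally conditions on $L^{\bar{c}=0}_{A_D, k+1}(G) = l_{A_D, k+1}$ in the $G$-world probability before applying \eqref{ass: delta 3a app}.

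The main technical obstacle is the rigorous verification of the counterfactual-to-$G$ correspondence. Intuitively it is transparent because the observed world is embedded in the $G$ world through the determinism $A \equiv A_Y \equiv A_D$, but a careful argument must unfold the structural equations recursively and check that an intervention setting $(A_Y, A_D) = (a_Y, a_D)$ induces the same joint law on downstream variables as does randomly assigning $(A_Y(G), A_D(G))$ and then conditioning on $(A_Y(G), A_D(G)) = (a_Y, a_D)$. Appropriately generalized positivity ensures the relevant conditioning events have positive probability under $G$. Once that step is carried out, the remainder of the proof is essentially syntactic; a closely related reduction appears in Robins and Richardson's \cite{robins2010alternative} graphical development of extended DAGs, which I would cite rather than reprove from scratch.
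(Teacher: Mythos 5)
Your proposal is correct and follows essentially the same route as the paper: the paper's proof likewise rewrites each counterfactual probability as a conditional probability in the hypothetical four-arm trial $G$ (``by definition of $G$''), applies the matching dismissible component condition \eqref{ass: delta 1 app}--\eqref{ass: delta 3b app} to switch the value of the other treatment component, and translates back, with the remaining three equalities handled analogously. The only differences are cosmetic: you spell out the counterfactual-to-$G$ correspondence via recursive substitution (note that under the FFRCISTG one should phrase this in terms of the single-world independences guaranteed by randomization in $G$ rather than ``mutual independence of exogenous errors,'' which is an NPSEM-IE-style condition), whereas the paper treats that step as immediate from the definition of $G$.
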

\begin{proof}
%Consider \eqref{eq: ineq y} (the other conditions follow from an analogous argument).  

%Without loss of generality, let $A_Y(G)=a^{'},A_D(G)=a^{''}$ be realizations of $A_Y$ an $A_D$ in the four-arm trial $G$. Then, $Y^{a^{'},a^{''},\bar{c}= 0}_k = Y^{\bar{c}=0}_k(G)$, $D^{a^{'},a^{''},\bar{c}= 0}_k = D^{\bar{c}=0}_k(G)$, $L^{a^{'},a^{''},\bar{c}= 0}_k = L^{\bar{c}=0}_k(G)$, $k = 0,\dots, K+1$. 

%Let $a' \neq a$, where $a', a \in \{0,1\}$. Then,
\begin{align*}
& \Pr(Y^{a_Y,a_D=0,\bar{c}= 0}_{k+1} = 1 \mid  Y^{a_Y,a_D=0,\bar{c}= 0}_k = 0, D^{a_Y,a_D=0,\bar{c}= 0}_{k+1} = 0,\bar{L}^{a_Y,a_D=0,\bar{c}= 0}_{k} = \bar{l}_{k} ) \\ 
= & \Pr (Y^{\bar{c}= 0}_{k+1}(G)=1  \mid Y^{\bar{c}= 0}_{k}(G)=0,D^{\bar{c}= 0}_{k+1}(G)=0, \bar{L}^{\bar{c}= 0}_{k}(G)= \bar{l}_{k}, A_Y(G)=a_Y,A_D(G)=0)  \text{ by def.\ of } G \nonumber \\ 
  =& \Pr (Y^{\bar{c}= 0}_{k+1}(G)=1  \mid Y^{\bar{c}= 0}_{k}(G)=0,D^{\bar{c}= 0}_{k+1}(G)=0, \bar{L}^{\bar{c}= 0}_{k}(G)= \bar{l}_{k}, A_Y(G)=a_Y,A_D(G)=1) \text{ due to \eqref{ass: delta 1 app}} \nonumber \\
= & \Pr(Y^{a_Y,a_D=1,\bar{c}= 0}_{k+1} = 1 \mid  Y^{a_Y,a_D=1,\bar{c}= 0}_k = 0, D^{a_Y,a_D=1,\bar{c}= 0}_{k+1} = 0,\bar{L}^{a_Y,a_D=1,\bar{c}= 0}_{k} = \bar{l}_{k} ) \text{ by def.\ of } G, \nonumber \\ 
\end{align*} 
which shows that equality \eqref{eq: ineq y} holds, and \eqref{eq: ineq d}-\eqref{eq: ineq ld} can be shown from analogous arguments, where we use conditions \eqref{ass: delta 2 app}-\eqref{ass: delta 3b app} in the second step, respectively, instead of \eqref{ass: delta 1 app}. 
%By definition of $Z_k$-partition, for any individual with $Y^{\bar{c}= 0}_k(G) = D^{\bar{c}= 0}_k(G) = 0$, then  
%Assume that $Y^{a^{'},a^{''},\bar{c}= 0}_k = D^{a^{'},a^{''},\bar{c}= 0}_k = 0$ and $\bar{L}^{a^{'},a^{''},\bar{c}= 0}_k = \bar{l}_k$.
\end{proof}

\begin{lemma} \label{lemma: conditional counter to obs}
Suppose that conditions \eqref{ass: E1 app}-\eqref{ass: consistency cens} hold.
Then, for $s = 0,\dots, K$ and $a_Y,a_D \in \{0,1\}$, 
\begin{align}
  \Pr & (Y^{a_Y=a_D=a,  \bar{c}=0}_{s+1}=1 \mid D^{a_Y=a_D=a,  \bar{c}=0}_{s+1}= Y^{a_Y=a_D=a,  \bar{c}=0}_{s}=0,\bar{L}^{a_Y=a_D=a,  \bar{c}=0}_{s} = \bar{l}_{s}) \label{eq: counterfactual to obs y} \\
= & \Pr(Y_{s+1}=1 \mid C_{s+1}= D_{s+1}= Y_{s}=0,\bar{L}_{s} = \bar{l}_{s}, A = a), \nonumber \\
  \Pr& (D^{a_Y=a_D=a, \bar{c}=0}_{s+1}=1 \mid D^{a_Y=a_D=a, \bar{c}=0}_{s}= Y^{a_Y=a_D=a, \bar{c}=0}_{s}=0,\bar{L}^{a_Y=a_D=a, \bar{c}=0}_{s} = \bar{l}_{s}) \label{eq: counterfactual to obs d} \\
=& \Pr(D_{s+1}=1 \mid C_{s+1} = D_{s}= Y_{s}=0,\bar{L}_{s} = \bar{l}_{s}, A=a), \nonumber \\
    \Pr(&L^{a_Y=a_D=a,  \bar{c}=0}_{A_Y, s} = l_{A_Y,s} \mid  Y^{a_Y=a_D=a,  \bar{c}=0}_{s} = D^{a_Y=a_D=a,  \bar{c}=0}_{s} = 0, \bar{L}^{a_Y=a_D=a,  \bar{c}=0}_{s} = \bar{l}_{s}) \label{eq: counterfactual to obs ly} \\
    =&\Pr(L_{A_Y,s} = \bar{l}_{A_Y,s} \mid  C_{s} = D_{s} = Y_{s} = 0, \bar{L}_{s-1}=\bar{l}_{s-1}, A=a), \nonumber \\
  \Pr(&L^{a_Y=a_D=a, \bar{c}=0}_{A_D, s} = l_{A_Y,s} \mid  Y^{a_Y=a_D=a, \bar{c}=0}_{s} = D^{a_Y=a_D=a, \bar{c}=0}_{s} = 0, \bar{L}^{a_Y=a_D=a, \bar{c}=0}_{s} = \bar{l}_{s}) \label{eq: counterfactual to obs ld} \\
   =& \Pr(L_{A_Y,s} = l_{A_D,s} \mid  C_{s} = D_{s} = Y_{s} = 0, \bar{L}_{s-1}=\bar{l}_{s-1}, A = a) \nonumber.
\end{align}
\end{lemma}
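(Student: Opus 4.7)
The plan is to exploit the decomposition consistency \eqref{eq: definition A=Ay=Ad} to first collapse the joint counterfactual indexed by $(a_Y=a, a_D=a)$ into the single-treatment counterfactual indexed by $a$, and then run a standard sequential identification argument using exchangeability, consistency and positivity. Specifically, by \eqref{eq: definition A=Ay=Ad} applied componentwise to $Y$, $D$ and $L$, we have for every $s$ and every measurable event
\begin{align*}
\Pr(Y^{a_Y=a_D=a,\bar{c}=0}_{s+1}=1 \mid D^{a_Y=a_D=a,\bar{c}=0}_{s+1}=Y^{a_Y=a_D=a,\bar{c}=0}_{s}=0, \bar{L}^{a_Y=a_D=a,\bar{c}=0}_{s}=\bar{l}_s) \\
= \Pr(Y^{a,\bar{c}=0}_{s+1}=1 \mid D^{a,\bar{c}=0}_{s+1}=Y^{a,\bar{c}=0}_{s}=0, \bar{L}^{a,\bar{c}=0}_{s}=\bar{l}_s),
\end{align*}
and similarly for the statements about $D$, $L_{A_Y}$ and $L_{A_D}$. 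Thus it suffices to show the right-hand side equals the corresponding observed conditional probability.

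To do this for \eqref{eq: counterfactual to obs y}, I would first use baseline exchangeability \eqref{ass: E1 app} to add $A=a$ to the conditioning set: since the conditioning event depends only on $(\underline{Y}^{a,\bar{c}=0}, \underline{D}^{a,\bar{c}=0}, \underline{L}^{a,\bar{c}=0})$ and, by \eqref{ass: E1 app}, this vector is independent of $A$ given $L_0$ (which is part of $\bar{L}_s$), the conditional distribution is unchanged when we further condition on $A=a$. I would then sequentially insert $C_1=0, C_2=0, \ldots, C_{s+1}=0$ into the conditioning set one at a time, at each step applying the censoring exchangeability \eqref{ass: E2 app} conditional on $(\bar{C}_j=0, Y_j=D_j=0, \bar{L}_j, A=a)$; this is legitimate because the conditioning events in \eqref{eq: counterfactual to obs y} depend only on counterfactual variables under $\bar{c}=0$, which are independent of each $C_{j+1}$ given the past by \eqref{ass: E2 app}. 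Positivity \eqref{eq: positivity 1 app}--\eqref{eq: positivity 3 app} guarantees that every conditioning event introduced along the way has positive probability, so each conditional probability is well-defined.

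After both $A=a$ and $\bar{C}_{s+1}=0$ are in the conditioning set, consistency \eqref{ass: consistency cens} lets me replace the counterfactual variables $Y^{a,\bar{c}=0}, D^{a,\bar{c}=0}, L^{a,\bar{c}=0}$ with their observed counterparts $Y, D, L$, producing exactly the observed conditional probability on the right of \eqref{eq: counterfactual to obs y}. The same three-step template (decomposition collapse via \eqref{eq: definition A=Ay=Ad}; add $A=a$ then successively add $C_j=0$ via \eqref{ass: E1 app}--\eqref{ass: E2 app}; apply \eqref{ass: consistency cens}) yields \eqref{eq: counterfactual to obs d}--\eqref{eq: counterfactual to obs ld} essentially verbatim, with only the variable in the numerator changing.

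The main obstacle is the bookkeeping in the sequential censoring step: one has to verify at each $j$ that the conditioning event up to that point still only involves variables under $\bar{c}=0$ (so that \eqref{ass: E2 app} applies cleanly) and that the relevant positivity statement \eqref{eq: positivity 3 app} holds so the added event has positive probability; once this induction is laid out carefully, the rest is a direct application of consistency and \eqref{eq: definition A=Ay=Ad}.
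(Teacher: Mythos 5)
Your overall skeleton matches the paper's proof: collapse the $(a_Y=a,a_D=a)$ counterfactual to the $a$ counterfactual via \eqref{eq: definition A=Ay=Ad}, bring $A=a$ into the conditioning set with \eqref{ass: E1 app}, insert the censoring indicators one at a time with \eqref{ass: E2 app}, and invoke positivity so every conditional probability is well defined. However, there is a genuine gap in how you handle the sequential censoring step. You propose to insert $C_1=0,\dots,C_{s+1}=0$ first and apply consistency \eqref{ass: consistency cens} only once, at the very end. But \eqref{ass: E2 app} states that $\underline{Y}^{a,\bar c=0}_{j+1},\underline{D}^{a,\bar c=0}_{j+1},\underline{L}^{a,\bar c=0}_{j+1}$ are independent of $C_{j+1}$ given the \emph{observed} past $Y_j=D_j=\bar C_j=0,\bar L_j,A$: it covers only counterfactual variables indexed $j+1$ onward, and it conditions on observed, not counterfactual, history. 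At the point where you want to insert $C_{j+1}=0$ (for $j\geq 1$), your conditioning event still contains the counterfactual variables $Y^{a,\bar c=0}_i, D^{a,\bar c=0}_i, L^{a,\bar c=0}_i$ for $i\le j$; these lagged counterfactuals are neither part of \eqref{ass: E2 app}'s conditioning set nor among the variables it declares independent of $C_{j+1}$, so the insertion is not justified as you state it. Your closing remark that one must check the conditioning event ``still only involves variables under $\bar c=0$'' has the requirement backwards: what is needed is precisely that the portion of the event up to time $j$ has already been rewritten in terms of observed variables.

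The paper's proof resolves this by interleaving consistency with each exchangeability step: after $\bar C_{j}=0$ and $A=a$ are in the conditioning set, \eqref{ass: consistency cens} converts $Y^{a,\bar c=0}_{j},D^{a,\bar c=0}_{j},L^{a,\bar c=0}_{j}$ into the observed $Y_{j},D_{j},L_{j}$, which puts the conditioning event exactly into the form $Y_j=D_j=\bar C_j=0,\bar L_j=\bar l_j,A=a$ required to apply \eqref{ass: E2 app} at index $j$ and insert $C_{j+1}=0$; iterating this alternation up to $j=s$ and then applying consistency once more yields \eqref{eq: counterfactual to obs y}, and the other three equalities follow analogously. Your argument becomes correct once you move the consistency applications inside the induction in this way; as written, the repeated use of \eqref{ass: E2 app} is the step that fails.
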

\begin{proof}
Consider first \eqref{eq: counterfactual to obs y},
\begin{align*}
 & \Pr(Y^{a_Y=a_D=a,  \bar{c}=0}_{s+1}=1 \mid D^{a_Y=a_D=a,  \bar{c}=0}_{s+1}= Y^{a_Y=a_D=a,  \bar{c}=0}_{s}=0,\bar{L}^{a_Y=a_D=a,  \bar{c}=0}_{s} = \bar{l}_{s}) \nonumber \\
=  & \Pr(Y^{a,  \bar{c}=0}_{s+1}=1 \mid D^{a,  \bar{c}=0}_{s+1}= Y^{a,  \bar{c}=0}_{s}=0,\bar{L}^{a,  \bar{c}=0}_{s} = \bar{l}_{s}) \nonumber \\
= & \Pr(Y^{a,  \bar{c}=0}_{s+1}=1 \mid D^{a,  \bar{c}=0}_{s+1}= Y^{a,  \bar{c}=0}_{s}=Y_{0}=D_{0}=\bar{C}_{0}=0,L_0,\bar{L}^{a, \bar{c}=0}_{s} = \bar{l}_{s}) \nonumber \\
=& \frac{\Pr(Y^{a,  \bar{c}=0}_{s+1}=1, \bar{D}^{a,  \bar{c}=0}_{s+1}= \bar{Y}^{a,  \bar{c}=0}_{s}=0, \bar{L}^{a,  \bar{c}=0}_{s} = \bar{l}_{s} \mid Y_{0}=D_{0}=\bar{C}_{0}=0,L_0, A=a)}{\Pr(\bar{D}^{a,  \bar{c}=0}_{s+1}= \bar{Y}^{a,  \bar{c}=0}_{s}=0, \bar{L}^{a,  \bar{c}=0}_{s} = \bar{l}_{s} \mid Y_{0}=D_{0}=\bar{C}_{0}=0,L_0, A=a)},  \nonumber \\
\end{align*}
where we used the fact that all subjects are event-free and uncensored at $t=0$, laws of probability and \eqref{ass: E1 app}. Using \eqref{ass: E2 app} and positivity,
\begin{align}
% & \Pr(Y^{a,  \bar{c}=0}_{s+1}=1 \mid D^{a,  \bar{c}=0}_{s+1}= Y^{a,  \bar{c}=0}_{s}=0,\bar{L}^{a,  \bar{c}=0}_{s} = \bar{l}_{s}) \nonumber \\
& \frac{\Pr(Y^{a,  \bar{c}=0}_{s+1}=1, \bar{D}^{a,  \bar{c}=0}_{s+1}= \bar{Y}^{a,  \bar{c}=0}_{s}=0, \bar{L}^{a,  \bar{c}=0}_{s} = \bar{l}_{s} \mid Y_{0}=D_{0}=\bar{C}_{1}=0,L_0, A=a)}{\Pr(\bar{D}^{a,  \bar{c}=0}_{s+1}= \bar{Y}^{a,  \bar{c}=0}_{s}=0, \bar{L}^{a,  \bar{c}=0}_{s} = \bar{l}_{s} \mid Y_{0}=D_{0}=\bar{C}_{1}=0,L_0, A=a)},  \nonumber \\
 =& \Pr(Y^{a,  \bar{c}=0}_{s+1}=1 \mid D^{a,  \bar{c}=0}_{s+1}= Y^{a,  \bar{c}=0}_{s}=0,\bar{L}^{a,  \bar{c}=0}_{s} = \bar{l}_{s}, Y_{0}=D_{0}=\bar{C}_{1}=0,L_0, A=a). \nonumber \\
\end{align}
For $s=0$, under consistency, 
\begin{align}
& \Pr(Y^{a,  \bar{c}=0}_{s+1}=1 \mid D^{a,  \bar{c}=0}_{s+1}= Y^{a,  \bar{c}=0}_{s}=0,\bar{L}^{a,  \bar{c}=0}_{s} = \bar{l}_{s}, Y_{0}=D_{0}=\bar{C}_{1}=0,L_0, A=a) \nonumber \\
=& \Pr(Y^{a,  \bar{c}=0}_{s+1}=1 \mid D^{a,  \bar{c}=0}_{s+1}= Y^{a,  \bar{c}=0}_{s}=0,\bar{L}^{a,  \bar{c}=0}_{s} = \bar{l}_{s}, Y_{0}=D_{1}=\bar{C}_{1}=0,\bar{L}_0, A=a)
\nonumber \\
=& \Pr(Y^{a,  \bar{c}=0}_{s+1}=1 \mid  Y_{0}=D_{1}=\bar{C}_{1}=0,\bar{L}_0, A=a)
\nonumber \\
\end{align}
which proves the lemma for $s=0$.

Further, for $s>1$, using consistency,
\begin{align}
& \Pr(Y^{a,  \bar{c}=0}_{s+1}=1 \mid D^{a,  \bar{c}=0}_{s+1}= Y^{a,  \bar{c}=0}_{s}=0,\bar{L}^{a,  \bar{c}=0}_{s} = \bar{l}_{s}, Y_{0}=D_{0}=\bar{C}_{1}=0,L_0, A=a) \nonumber \\
=& \Pr(Y^{a,  \bar{c}=0}_{s+1}=1 \mid D^{a,  \bar{c}=0}_{s+1}= Y^{a,  \bar{c}=0}_{s}=0,\bar{L}^{a,  \bar{c}=0}_{s} = \bar{l}_{s}, Y_{1}=D_{1}=\bar{C}_{1}=0,\bar{L}_1, A=a)
\nonumber \\
\end{align}

Apply \eqref{ass: E2 app} and positivity again,
\begin{align}
    & \Pr(Y^{a,  \bar{c}=0}_{s+1}=1 \mid D^{a,  \bar{c}=0}_{s+1}= Y^{a,  \bar{c}=0}_{s}=0,\bar{L}^{a,  \bar{c}=0}_{s} = \bar{l}_{s}, Y_{1}=D_{1}=\bar{C}_{1}=0,\bar{L}_1, A=a) \nonumber \\
    =& \frac{\Pr(Y^{a,  \bar{c}=0}_{s+1}=1, \bar{D}^{a,  \bar{c}=0}_{s+1}= \bar{Y}^{a,  \bar{c}=0}_{s}=0, \bar{L}^{a,  \bar{c}=0}_{s} = \bar{l}_{s} \mid Y_{1}=D_{1}=\bar{C}_{2}=0,\bar{L}_1, A=a)}{\Pr(\bar{D}^{a,  \bar{c}=0}_{s+1}= \bar{Y}^{a,  \bar{c}=0}_{s}=0, \bar{L}^{a,  \bar{c}=0}_{s} = \bar{l}_{s} \mid Y_{1}=D_{1}=\bar{C}_{2}=0,\bar{L}_1, A=a)},  \nonumber \\
 =& \Pr(Y^{a,  \bar{c}=0}_{s+1}=1 \mid D^{a,  \bar{c}=0}_{s+1}= Y^{a \bar{c}=0}_{s}=0,\bar{L}^{a,  \bar{c}=0}_{s} = \bar{l}_{s}, Y_{1}=D_{1}=\bar{C}_{2}=0,\bar{L}_1, A=a). \nonumber \\
\end{align}
Using consistency,
\begin{align}
& \Pr(Y^{a,  \bar{c}=0}_{s+1}=1 \mid D^{a,  \bar{c}=0}_{s+1}= Y^{a,  \bar{c}=0}_{s}=0,\bar{L}^{a,  \bar{c}=0}_{s} = \bar{l}_{s}, Y_{1}=D_{1}=\bar{C}_{2}=0,\bar{L}_1, A=a) \nonumber \\
=& \Pr(Y^{a,  \bar{c}=0}_{s+1}=1 \mid D^{a,  \bar{c}=0}_{s+1}= Y^{a,  \bar{c}=0}_{s}=0,\bar{L}^{a,  \bar{c}=0}_{s} = \bar{l}_{s}, Y_{2}=D_{2}=\bar{C}_{2}=0,\bar{L}_2, A=a)
\nonumber \\
\end{align}
Arguing iteratively, we find that
\begin{align}
 & \Pr(Y^{a,  \bar{c}=0}_{s+1}=1 \mid D^{a,  \bar{c}=0}_{s+1}= Y^{a,  \bar{c}=0}_{s}=0,\bar{L}^{a,  \bar{c}=0}_{s} = \bar{l}_{s}) \nonumber \\
= & \Pr(Y_{s+1}=1 \mid C_{s+1}= D_{s+1}= Y_{s}=0,\bar{L}_{s} = \bar{l}_{s}, A=a).
\end{align}
Analogous arguments can be used to show \eqref{eq: counterfactual to obs d}-\eqref{eq: counterfactual to obs ld}. 
\end{proof}

\begin{theorem} \label{theorem: identification formula}
Suppose conditions \eqref{ass: E1 app}-\eqref{ass: delta 3b app} hold. 
Then, for $a_Y,a_D \in \{0,1\}$, 
%\begin{align*}& \Pr (Y_{k+1}^{a^{* },a,\bar{c}=0}=1) \text{ for } a_Y,a_D \in \{0,1\}\end{align*}% 
\begin{align*}
 \Pr ( & Y_{k+1}^{a_Y,a_D,\bar{c}=0}=1) \\
       = & \sum_{\bar{l}_K}  \Big[ \sum_{s=0}^{K} \Pr(Y_{s+1}=1 \mid C_{s+1}= D_{s+1}= Y_{s}=0, \bar{L}_{s} = \bar{l}_{s}, A = a_Y) \nonumber \\ 
       &  \prod_{j=0}^{s}  \big\{ \Pr(D_{j+1}=0 \mid C_{j+1}=D_{j}= Y_{j}=0, \bar{L}_{j} = \bar{l}_{j},  A = a_D)  \nonumber \\
      &  \times \Pr(Y_{j}=0 \mid C_{j}=D_{j}= Y_{j-1}=0, \bar{L}_{j-1} = \bar{l}_{j-1},  A = a_Y) \nonumber \\
    &\times \Pr(L_{A_Y,j} = l_{A_Y,j} \mid C_{j}= Y_{j} = D_{j} = 0, \bar{L}_{j-1} = \bar{l}_{j-1}, L_{A_D,j} = l_{A_D,j}, A = a_Y) \nonumber \\
    &\times \Pr(L_{A_D,j} = l_{A_D,j} \mid C_{j}= Y_{j} = D_{j} = 0, \bar{L}_{j-1} = \bar{l}_{j-1},  A = a_D) \big\} \Big].
%\label{eq: identifying formula app}
\end{align*}
\end{theorem}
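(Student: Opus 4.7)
The plan is to telescope across the first time the event of interest occurs, factor the resulting joint counterfactual density by the assumed temporal order, then use the dismissible component conditions (Lemma \ref{lemma: delta conditions}) to reduce every factor to a single intervention index, and finally apply Lemma \ref{lemma: conditional counter to obs} to translate each counterfactual conditional into the corresponding observed-data conditional.

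First I would write
\begin{equation*}
\Pr(Y_{k+1}^{a_Y,a_D,\bar c=0}=1) = \sum_{s=0}^{k} \Pr\bigl(Y_{s+1}^{a_Y,a_D,\bar c=0}=1,\, D_{s+1}^{a_Y,a_D,\bar c=0}=0,\, Y_s^{a_Y,a_D,\bar c=0}=0\bigr),
\end{equation*}
using the monotonicity of $Y$ and the fact that $D_{s+1}=1$ precludes $Y_{s+1}=1$. For each $s$ I would then introduce $\bar L_s^{a_Y,a_D,\bar c=0}=\bar l_s$ by summing over $\bar l_s$ and factor the joint counterfactual density through the temporal order $(D_j, Y_j, L_{A_D,j}, L_{A_Y,j})$ as a product of one-step conditionals under the joint intervention $(a_Y,a_D)$, producing a $Y$-hazard factor at time $s{+}1$ multiplied by $\prod_{j=0}^{s}$ of ($D$-survival)$\times$($Y$-survival)$\times$($L_{A_D}$-density)$\times$($L_{A_Y}$-density given $L_{A_D}$) factors.

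Next, for each conditional factor I would swap the irrelevant superscript using the dismissible component conditions of Lemma \ref{lemma: delta conditions}: equation \eqref{eq: ineq y} removes $a_D$-dependence from any $Y$-factor, \eqref{eq: ineq d} removes $a_Y$-dependence from any $D$-factor, \eqref{eq: ineq ly} removes $a_D$-dependence from the $L_{A_Y,j}$ factor conditional on $L_{A_D,j}$, and \eqref{eq: ineq ld} removes $a_Y$-dependence from the $L_{A_D,j}$ factor. This is executed as a left-to-right hybrid telescoping argument in which one factor at a time is rewritten so that both the outcome and its conditioning event carry matching single-index superscripts; Lemma \ref{lemma: conditional counter to obs} then identifies each rewritten factor with its observed-data counterpart evaluated at $A=a_Y$ or $A=a_D$, as appropriate. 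Reassembling the factors and summing over $\bar l_s$ reproduces the g-formula displayed in the theorem.

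The main obstacle is the bookkeeping during the hybrid step: after swapping $a_D$ to $a_Y$ in one factor, the conditioning events of subsequent factors must be reconciled with the new superscripts carried by the preceding ones. This is justified by noting that each application of Lemma \ref{lemma: delta conditions} asserts equality of the full conditional distribution under the index swap, so the joint law restricted to the event $\{Y=0, D=0, \bar L=\bar l\}$ coincides under either value of the swapped index, and the argument can proceed one factor at a time without breaking the conditioning structure. Positivity \eqref{eq: positivity 1 app}--\eqref{eq: positivity 3 app} is invoked throughout to guarantee that the conditional probabilities are well-defined, while consistency \eqref{ass: consistency cens} and exchangeability \eqref{ass: E1 app}--\eqref{ass: E2 app} enter exclusively through Lemma \ref{lemma: conditional counter to obs} to link the single-index counterfactual conditionals with observed-data conditionals.
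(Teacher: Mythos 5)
Your proposal is correct and follows essentially the same route as the paper's proof: telescope over the first occurrence of the event of interest, factor the joint counterfactual law under the temporal order, swap the irrelevant intervention index factor-by-factor via Lemma \ref{lemma: delta conditions}, and then convert each single-index counterfactual conditional to its observed-data counterpart via Lemma \ref{lemma: conditional counter to obs} (which is where exchangeability, consistency and positivity enter). The only differences are cosmetic (the paper sums over $\bar{l}_K$ before telescoping, and treats the trivial case $a_Y=a_D$ separately), and your worry about reconciling conditioning events after a swap dissolves for the same reason the paper's factor-by-factor substitution works: each factor is a conditional probability evaluated at the fixed summation values $\bar{l}_j$, so the swap is just replacement of equal numbers.
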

\begin{proof}
If $a_Y=a_D \in \{0,1\}$, it is straightforward to use laws of probability to show that the theorem holds. Consider now the case where $a_Y \neq a_D$. In particular, let $a_Y=1 \neq a_D=0$. Using laws of probability and Lemma \ref{lemma: delta conditions}, 
\begin{align}
    \Pr & (Y^{a_Y=1,a_D=0, \bar{c}=0}_{K+1}=1)  \nonumber \\
  =& \sum_{\bar{l}_K}  \Pr(Y^{a_Y=1,a_D=0, \bar{c}=0}_{K+1}=1 \mid  \bar{L}^{a_Y=1,a_D=0, \bar{c}=0}_K) \Pr(\bar{L}^{a_Y=1,a_D=0, \bar{c}=0}_K=\bar{L}^{a_Y=1,a_D=0, \bar{c}=0}_K = \bar{l}_K) \nonumber \\
    =& \sum_{\bar{l}_K}  \Big[ \sum_{s=0}^{K} \Pr(Y^{a_Y=1,a_D=0, \bar{c}=0}_{s+1}=1 \mid D^{a_Y=1,a_D=0, \bar{c}=0}_{s+1}= Y^{a_Y=1,a_D=0, \bar{c}=0}_{s}=0, \bar{L}^{a_Y=1,a_D=0, \bar{c}=0}_{s} = \bar{l}_{s}) \nonumber \\
     &  \prod_{j=0}^{s}  \big\{ \Pr(D^{a_Y=1,a_D=0, \bar{c}=0}_{j+1}=0 \mid D^{a_Y=1,a_D=0, \bar{c}=0}_{j}= Y^{a_Y=1,a_D=0, \bar{c}=0}_{j}=0,  \bar{L}^{a_Y=1,a_D=0, \bar{c}=0}_{j} = \bar{l}_{j})  \nonumber \\
      &  \times \Pr(Y^{a_Y=1,a_D=0, \bar{c}=0}_{j}=0 \mid D^{a_Y=1,a_D=0, \bar{c}=0}_{j}= Y^{a_Y=1,a_D=0, \bar{c}=0}_{j-1}=0, \bar{L}^{a_Y=1,a_D=0, \bar{c}=0}_{j} = \bar{l}_{j})  \nonumber \\
        &\times \Pr(L^{a_Y=1,a_D=0,\bar{c}= 0}_{A_Y,j} = l_{A_Y,j} \mid  Y^{a_Y=1,a_D=0,\bar{c}= 0}_{j} = D^{a_Y=1,a_D=0,\bar{c}= 0}_{j} = 0, \bar{L}^{a_Y=1,a_D=0,\bar{c}= 0}_{j-1} =  \bar{l}_{j-1},  \nonumber \\ 
   & \quad \quad L^{a_Y=1,a_D=0,\bar{c}= 0}_{A_D,j} = l_{A_D,j})  \nonumber \\
            &\times \Pr(L^{a_Y=1,a_D=0,\bar{c}= 0}_{A_D, j} = l_{A_D,j} \mid  Y^{a_Y=1,a_D=0,\bar{c}= 0}_{j} = D^{a_Y=1,a_D=0,\bar{c}= 0}_{j} = 0, \bar{L}^{a_Y=1,a_D=0,\bar{c}= 0}_{j-1} =  \bar{l}_{j-1}) \big\} \Big] \nonumber \\
       =& \sum_{\bar{l}_K}  \Big[ \sum_{s=0}^{K} \Pr(Y^{a=1,  \bar{c}=0}_{s+1}=1 \mid D^{a=1,  \bar{c}=0}_{s+1}= Y^{a=1,  \bar{c}=0}_{s}=0, \bar{L}^{A_Y =  A_D = a1,  \bar{c}=0}_{s} ) \nonumber \\ 
       &  \prod_{j=0}^{s}  \big\{ \Pr(D^{a=0, \bar{c}=0}_{j+1}=0 \mid D^{a=0, \bar{c}=0}_{j}= Y^{a=0, \bar{c}=0}_{j}=0, \bar{L}^{a=0, \bar{c}=0}_{j} = \bar{l}_{j})  \nonumber \\
      &  \times \Pr(Y^{a=1,  \bar{c}=0}_{j}=0 \mid D^{a=1,  \bar{c}=0}_{j}= Y^{a=1,  \bar{c}=0}_{j-1}=0, \bar{L}^{a=1,  \bar{c}=0}_{j}= \bar{l}_{j}) \nonumber \\
    &\times \Pr(L^{a=1,  \bar{c}=0}_{A_Y,j} = l_{A_Y,j} \mid  Y^{a=1,  \bar{c}=0}_{j} = D^{a=1,  \bar{c}=0}_{j} = 0, \bar{L}^{a=1,  \bar{c}=0}_{j-1} = \bar{l}_{j-1}, L^{a=1,\bar{c}= 0}_{A_D,j} = l_{A_D,j}) \nonumber \\
       &\times \Pr(L^{a=0, \bar{c}=0}_{A_D,j} = l_{A_D,j} \mid  Y^{a=0, \bar{c}=0}_{j} = D^{a=0, \bar{c}=0}_{j} = 0, \bar{L}^{a=0, \bar{c}=0}_{j-1} = \bar{l}_{j-1}) \big\} \Big], \nonumber \\
\label{eq: counterfactual id formula}
\end{align}
where $Y^{a_Y,a_D,  \bar{c}=0}_{-1}$, and $L^{a_Y,a_D,  \bar{c}=0}_{-1}$ are empty sets.
Using Lemma \ref{lemma: conditional counter to obs}, we can substitute the terms in the last equality in \eqref{eq: counterfactual id formula}, 
\begin{align*}
       =& \sum_{\bar{l}_K}  \Big[ \sum_{s=0}^{K} \Pr(Y^{a=1,  \bar{c}=0}_{s+1}=1 \mid D^{a=1,  \bar{c}=0}_{s+1}= Y^{a=1,  \bar{c}=0}_{s}=0, \bar{L}^{A_Y =  A_D = a_Y,  \bar{c}=0}_{s} ) \nonumber \\ 
       &  \prod_{j=0}^{s}  \big\{ \Pr(D^{a=0, \bar{c}=0}_{j+1}=0 \mid D^{a=0, \bar{c}=0}_{j}= Y^{a=0, \bar{c}=0}_{j}=0, \bar{L}^{a=0, \bar{c}=0}_{j} = \bar{l}_{j})  \nonumber \\
      &  \times \Pr(Y^{a=1,  \bar{c}=0}_{j}=0 \mid D^{a=1,  \bar{c}=0}_{j}= Y^{a=1,  \bar{c}=0}_{j-1}=0, \bar{L}^{a=1,  \bar{c}=0}_{j}= \bar{l}_{j}) \nonumber \\
    &\times \Pr(L^{a=1,  \bar{c}=0}_{A_Y,j} = l_{A_Y,j} \mid  Y^{a=1,  \bar{c}=0}_{j} = D^{a=1,  \bar{c}=0}_{j} = 0, \bar{L}^{a=1,  \bar{c}=0}_{j-1} = \bar{l}_{j-1}, L^{a=1,\bar{c}= 0}_{A_D,j} = l_{A_D,j}) \nonumber \\
       &\times \Pr(L^{a=0, \bar{c}=0}_{A_D,j} = l_{A_D,j} \mid  Y^{a=0, \bar{c}=0}_{j} = D^{a=0, \bar{c}=0}_{j} = 0, \bar{L}^{a=0, \bar{c}=0}_{j-1} = \bar{l}_{j-1}) \big\} \Big] \nonumber \\
      = & \sum_{\bar{l}_K}  \Big[ \sum_{s=0}^{K} \Pr(Y_{s+1}=1 \mid C_{s+1}= D_{s+1}= Y_{s}=0, \bar{L}_{s} = \bar{l}_{s}, A = a_Y) \nonumber \\ 
       &  \prod_{j=0}^{s}  \big\{ \Pr(D_{j+1}=0 \mid C_{j+1}=D_{j}= Y_{j}=0, \bar{L}_{j} = \bar{l}_{j},  A = a_D)  \nonumber \\
      &  \times \Pr(Y_{j}=0 \mid C_{j}=D_{j}= Y_{j-1}=0, \bar{L}_{j-1} = \bar{l}_{j-1},  A = a_Y) \nonumber \\
    &\times \Pr(L_{A_Y,j} = l_{A_Y,j} \mid C_{j}= Y_{j} = D_{j} = 0, \bar{L}_{j-1} = \bar{l}_{j-1}, L_{A_D,j} = l_{A_D,j}, A = a_Y) \nonumber \\
    &\times \Pr(L_{A_D,j} = l_{A_D,j} \mid C_{j}= Y_{j} = D_{j} = 0, \bar{L}_{j-1} = \bar{l}_{j-1},  A = a_D) \big\} \Big].
\end{align*}
\end{proof}

\clearpage
\section{$Z_k$ partition and the dismissible component conditions}
\label{sec: appendix lemmas}

\begin{lemma}
\label{lemma: recanting witness}
$Z_k$ partition fails if and only if any of the following statements are true for some $k \in \{1,\dots, K\} $: (i) there is a direct arrow from $A_Y$ into $D_{k+1}$,  (ii) there is a direct arrow from $A_D$ into $Y_{k+1}$, or (iii) there exists a node $W \in \bar{Z}_k$ such that there are direct arrows from both $A_Y$ and $A_D$ into $W$. 
\end{lemma}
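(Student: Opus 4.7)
I would prove the biconditional by separately handling the two directions.

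For \emph{sufficiency} of (i)--(iii), I would show that each of the three conditions is a direct obstruction to the existence of any valid partition of $\bar{Z}_K$. If (i) holds, the single arrow $A_Y \to D_{k+1}$ is a directed causal path containing no intermediate vertex, so it cannot intersect any element of $\{Y_j, Z_{A_Y,j} : j \leq k\}$ and therefore violates~\eqref{def: Ay partitioning Z} regardless of how the partition is chosen. Case (ii) is symmetric and violates~\eqref{def: Ad partitioning Z}. If (iii) holds, then any $W \in Z_{k_0}$ with both $A_Y \to W$ and $A_D \to W$ must, under any partition, be placed into either $Z_{A_D,k_0}$ or $Z_{A_Y,k_0}$; the first choice makes $A_Y \to W$ a causal path to $Z_{A_D,k_0}$ that violates~\eqref{def: Ay partitioning Z} at $k=k_0-1$, while the second choice makes $A_D \to W$ violate~\eqref{def: Ad partitioning Z}. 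So no partition works.

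For \emph{necessity}, I would construct an explicit partition when (i)--(iii) all fail. Define $\mathcal{C}_Y := \{W \in \bar{Z}_K : A_Y \to W \text{ is a direct arrow}\}$ and $\mathcal{C}_D$ analogously for $A_D$. Failure of (iii) gives $\mathcal{C}_Y \cap \mathcal{C}_D = \emptyset$, so
\[
Z_{A_Y,k} := Z_k \cap \mathcal{C}_Y, \qquad Z_{A_D,k} := Z_k \setminus Z_{A_Y,k}
\]
defines a valid partition with $\mathcal{C}_Y \subseteq \bigcup_k Z_{A_Y,k}$ and $\mathcal{C}_D \subseteq \bigcup_k Z_{A_D,k}$. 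To verify~\eqref{def: Ay partitioning Z}, consider any causal path from $A_Y$ to an endpoint in $\{D_{k+1}\} \cup Z_{A_D,k+1}$. Its first edge $A_Y \to V_1$ goes to a child of $A_Y$; failure of (i) excludes $V_1 \in \{D_j\}$, so $V_1$ is either some $Y_{j'}$ or an element of $\mathcal{C}_Y \subseteq \bigcup_k Z_{A_Y,k}$. The within-interval ordering $(D_j,Y_j,L_j)$ together with the requirement that $V_1$ be an ancestor of the endpoint forces $j' \leq k$, placing $V_1$ in the allowed intermediate set. Condition~\eqref{def: Ad partitioning Z} follows by a mirror argument using failure of (ii).

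\textbf{Main obstacle.} The substantive step is the temporal-index bookkeeping in the necessity direction: verifying that the first intermediate $V_1$ on every such path has time index at most $k$, so that it lies in the allowed set $\{Y_{j'}, Z_{A_Y,j'} : j' \leq k\}$. This requires leaning on the within-interval ordering to preclude configurations in which a node at time $k+1$ could causally precede $D_{k+1}$ or $Z_{A_D,k+1}$. Sufficiency is essentially syntactic, since a direct arrow has no intermediate nodes to route through and (iii) creates an unresolvable placement dilemma; the real combinatorial content of the lemma lies in the necessity construction and the verification that placing every direct child of $A_Y$ into $Z_{A_Y}$ simultaneously respects the symmetric requirement for $A_D$.
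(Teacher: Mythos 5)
Your sufficiency argument is essentially the paper's: a direct arrow $A_Y\rightarrow D_{k+1}$ or $A_D\rightarrow Y_{k+1}$ violates \eqref{def: Ay partitioning Z} or \eqref{def: Ad partitioning Z} outright, and a common child $W$ of $A_Y$ and $A_D$ creates the same placement dilemma (whichever side of the partition receives $W$, one of the two conditions fails). On the converse you genuinely diverge. The paper's printed argument for that direction assumes that the partition holds \emph{together with} one of (i)--(iii) and derives failure, which only re-establishes the sufficiency direction; the substantive claim, that failure of the partition forces (i), (ii) or (iii), is what your contrapositive construction actually supplies: put every child of $A_Y$ inside $\bar{Z}_K$ on the $Z_{A_Y}$ side and everything else on the $Z_{A_D}$ side, use failure of (iii) to make this well defined, and use failures of (i)--(ii) to control the first edge of any path from $A_Y$ (resp.\ $A_D$) to a forbidden endpoint. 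This explicit partition is the content the paper's write-up leaves implicit, so your route is, if anything, the more complete one on this direction.

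The step that does not go through exactly as you state it is the temporal bookkeeping claim that the first intermediate $V_1$ always has index $j'\le k$. That is correct when the endpoint is $D_{k+1}$, since $Y_{k+1}$ and $Z_{k+1}$ follow $D_{k+1}$ in the within-interval order, but not when the endpoint is a component of $Z_{A_D,k+1}$: the path $A_Y\rightarrow Y_{k+1}\rightarrow Z_{A_D,k+1}$ is temporally consistent because $Y_{k+1}$ precedes $Z_{k+1}$ in interval $k+1$, so $j'=k+1$ can occur, and the mirror configuration $A_D\rightarrow Z_{A_D,k+1}\rightarrow Z_{A_Y,k+1}$ arises when you verify \eqref{def: Ad partitioning Z}. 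Whether such paths are violations depends on reading the range ``$j=0,\dots,k$'' in \eqref{def: Ay partitioning Z}--\eqref{def: Ad partitioning Z} literally, or as ``intercepted by a $Y$ or $Z_{A_Y}$ (resp.\ $D$ or $Z_{A_D}$) node preceding the endpoint''; the latter reading is the one the paper itself requires (under full isolation in Figure \ref{fig:full isolation}b the path $A_Y\rightarrow Y_1\rightarrow Z_1$ is present for any partition, yet the $Z_k$ partition is asserted to hold trivially), and under it your interception argument closes once you add the cases $V_1=Y_{k+1}$ and $V_1\in Z_{A_D,k+1}$ explicitly. You also inherit from the lemma's statement the minor indexing gap that direct arrows $A_Y\rightarrow D_1$ or $A_D\rightarrow Y_1$ are not covered by (i)--(ii) as written even though they violate the partition conditions at $k=0$; the paper's own proof glosses this identically, so it is not a defect specific to your argument, but it deserves a sentence if you write the proof out in full.
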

\begin{proof}

%\textcolor{red}{None of the logic was wrong in If A then B part I'm just wordsmithing, hope you agree.}
First we directly show that if (i), (ii) or (iii) holds then $Z_k$ partition fails. If (i) holds then \eqref{def: Ay partitioning Z} is violated by the presence of a causal path $A_y\rightarrow D_{k+1}$ and therefore $Z_k$ partition is violated.  If (ii) holds then \eqref{def: Ad partitioning Z} is  violated by the presence of a causal path $A_D\rightarrow Y_{k+1}$ and therefore $Z_k$ partition is violated.  Now suppose  (iii) is true and define a partition of $Z_k$ such that $W \in \bar{Z}_{A_D,k}$. Then a causal path $A_Y\rightarrow W $ will exist and \eqref{def: Ay partitioning Z}, and therefore $Z_k$ partition, is violated.  Alternatively suppose (iii) is true and define a partition of $Z_k$ such that, instead, $W \in \bar{Z}_{A_Y,k}$  Then a causal path $A_D\rightarrow W$ will exist and \eqref{def: Ad partitioning Z}, and therefore $Z_k$ partition, is violated.

Next we show by contradiction that if $Z_k$ partition fails then (i), (ii) or (iii) must hold. Suppose $Z_k$ partition holds and (i), (ii) or (iii) also holds. If (i), (ii) or (iii) holds then one of the following causal paths must be present: $A_Y\rightarrow D_{k+1}$, for some $k \in \{1,  \dots ,K\} $; $A_D\rightarrow Y_{k+1}$, for some $k \in \{1,  \dots ,K\}$; $A_Y\rightarrow W$ for some $W \in \bar{Z}_{A_D,K}$; or $A_D\rightarrow W$ for some $W \in \bar{Z}_{A_Y,K}$.  The presence of any of these paths violates either \eqref{def: Ay partitioning Z} or \eqref{def: Ad partitioning Z} such that $Z_k$ partition fails.  Thus, we have a contradiction and we are done.  

%Consider an arbitrary node $W$ in a causal DAG where (i), (ii) and (iii) fail. If there is a direct arrow from $A_Y$ into $W$, then there cannot be an arrow from $A_D$ into $W$, because (i) would hold if $W=Y_{k+1}$, (ii)  would hold if $W=D_{k+1}$ and (iii) would hold if $W$ is any other node. However, if there is no arrow from $A_D$ into $W$, we can define $W \in \bar{Z}_{A_Y,K}$ and $W$ would not violate \eqref{def: Ay partitioning Z} and \eqref{def: Ad partitioning Z}. 

%Similarly, if there is a direct arrow from $A_D$ into $W$, then it is impossible that $A_Y$ has an arrow into $W$ unless (i), (ii) or (iii) hold, and if $A_Y$ does not have an arrow into $W$, we let $W \in \bar{Z}_{A_D,K}$ which would not violate \eqref{def: Ay partitioning Z} and \eqref{def: Ad partitioning Z}. 

%Finally, assume that neither $A_D$ or $A_Y$ has direct arrow into $W$. If $W=D_{k+1}$ or $W=Y_{k+1}$, $Z_k$ partition will not be violated. If $W \in Z_k$, we could either define $W \in \bar{Z}_{A_D,K}$ or $W \in \bar{Z}_{A_Y,K}$, without violating $Z_k$ partition.  %Hence, $Z_k$ partition cannot fail due to $W$ unless both $A_D$ and $A_Y$ exert effects on $W$. 

% Hence, it is impossible that $Z_k$ partition holds if both $A_Y$ and $A_D$ exert direct effects on $W$.
\end{proof}

\begin{lemma}
\label{lemma: dismissible and Zk partition}
If $Z_k$ partition fails, then at least one of the dismissible component conditions fail. 
\end{lemma}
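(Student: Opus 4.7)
The plan is to argue by contrapositive using Lemma \ref{lemma: recanting witness}, which provides three exhaustive witnesses of the failure of $Z_k$ partition: (i) a direct arrow $A_Y \to D_{k+1}$, (ii) a direct arrow $A_D \to Y_{k+1}$, or (iii) a node $W \in \bar{Z}_k$ with direct arrows from both $A_Y$ and $A_D$. In each case I would exhibit a $d$-connecting path, in the $G$-transformation of the extended causal DAG, that violates one of \eqref{ass: delta 1}--\eqref{ass: delta 3b}. The FFRCISTG assumption combined with faithfulness to the graph then converts the graphical $d$-connection into a genuine statistical dependence, contradicting the corresponding dismissible component condition.

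Cases (i) and (ii) are essentially immediate. The single edge $A_Y(G) \to D_{k+1}(G)$ is not blocked by conditioning on $A_D(G)$, $\bar{L}_k(G)$, $D_k(G)=Y_k(G)=0$, so \eqref{ass: delta 2} fails; symmetrically, (ii) violates \eqref{ass: delta 1}. The substantive work lies in case (iii), where I would fix an arbitrary partition $(L_{A_Y,\cdot}, L_{A_D,\cdot})$ of the measured covariates and show that, regardless of the partition, at least one condition fails. If $W$ is measured and placed in $L_{A_Y,j}$, then the arrow $A_D \to W$ is an unblocked path from $A_D(G)$ to $L_{A_Y,j}(G)$ given the conditioning set of \eqref{ass: delta 3a} at time $j$, so \eqref{ass: delta 3a} fails. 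Symmetrically, if $W \in L_{A_D,j}$, the arrow $A_Y \to W$ yields failure of \eqref{ass: delta 3b}.

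The main obstacle is the subcase where $W$ is unmeasured, because one cannot simply invoke a single edge. Since $W \in \bar{Z}_k$, it has a directed path into $\underline{Y}_{k+1} \cup \underline{D}_{k+1}$. I would walk along any such path out of $W$ and let $v^{*}$ be the first descendant of $W$ that is either a $Y$-variable, a $D$-variable, or a component of the measured $L$. By construction every strictly intermediate node on $W \to \cdots \to v^{*}$ is an unmeasured $Z$-variable that is neither a $Y$ nor a $D$, and therefore lies outside every conditioning set appearing in \eqref{ass: delta 1}--\eqref{ass: delta 3b}. Prepending $A_D \to W$ (or $A_Y \to W$) yields a directed chain from a treatment component to $v^{*}$ in which no non-collider is conditioned on. Reading off the resulting $d$-connection gives the violation: $v^{*}=Y_{j'}$ violates \eqref{ass: delta 1} at time $j'-1$, $v^{*}=D_{j'}$ violates \eqref{ass: delta 2} at time $j'-1$, and if $v^{*}$ is a measured covariate, the argument of the previous paragraph applied with $v^{*}$ in place of $W$ delivers failure of \eqref{ass: delta 3a} or \eqref{ass: delta 3b}.

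The remaining care is bookkeeping: verifying for each of the four conditions that the additional conditioning events in its statement ($Y_{j'-1}=D_{j'}=0$, $A_Y(G)$, $L_{A_D,j}(G)$, and so on) neither sit on the exhibited path as non-colliders nor open a competing collider detour. Since none of these events coincides with any chain node on the sub-path $A_{\cdot} \to W \to \cdots \to v^{*}$ that I have constructed, the checks are routine and the case analysis closes, completing the contrapositive.
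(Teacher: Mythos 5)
Your proposal is correct and follows essentially the same route as the paper's proof: reduce via Lemma \ref{lemma: recanting witness} to the three witnesses, dispose of the direct arrows $A_Y \to D_{k+1}$ and $A_D \to Y_{k+1}$ by single-edge violations of \eqref{ass: delta 2} and \eqref{ass: delta 1}, and split the common-child case according to whether the witness is measured (violating \eqref{ass: delta 3a} or \eqref{ass: delta 3b} depending on the chosen partition) or unmeasured. Your device of truncating the directed path at the first $Y$-, $D$-, or measured-$L$ descendant $v^{*}$ is a modest refinement that makes the unmeasured-witness subcase airtight where the paper's statement that the condition fails "for any choice of $\overline{L}_k$" glosses over paths passing through conditioned measured covariates, but it does not change the overall argument.
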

%\textcolor{red}{Again, just proof writing wordsmithing}
\begin{proof}
Suppose $Z_k$ partition fails.  Then, by lemma \ref{lemma: recanting witness}, (i), (ii) or (iii) must hold such that at least one of the following paths must be present for $W \in \bar{Z}_{j}$, $j \leq k$, a cause of $Y_{k+1}$ and/or $D_{k+1}$, for some $k \in \{1,  \dots ,K\}$: $A_Y \rightarrow D_{k+1}$; $A_D\rightarrow Y_{k+1}$; $A_Y \rightarrow W\rightarrow \ldots \rightarrow D_{k+1}$ and $A_D\rightarrow W \rightarrow \ldots \rightarrow D_{k+1}$; or $A_D\rightarrow W\rightarrow \ldots \rightarrow Y_{k+1}$ and $A_Y\rightarrow W\rightarrow \ldots \rightarrow Y_{k+1}$.  

If the path $A_Y \rightarrow D_{k+1}$ is present then the dismissible component condition\eqref{ass: delta 2} fails for any choice of $\overline{L}_k$.   If the path $A_D \rightarrow Y_{k+1}$ is present then the dismissible component condition \eqref{ass: delta 1} fails for any choice of $\overline{L}_k$. 

Suppose $W \notin \bar{L}_K$, ($W$ is unmeasured).  If the path $A_Y\rightarrow W\rightarrow \ldots \rightarrow D_{k+1}$ is present then the dismissible component condition \eqref{ass: delta 2} fails for any choice of $\overline{L}_k$. If the path $A_D\rightarrow W\rightarrow Y_{k+1}$ is present then the dismissible component condition \eqref{ass: delta 1} fails for any choice of $\overline{L}_k$.  

Suppose $W \in \bar{L}_{j+1}$ for some $j \in \{0,  \dots ,K\}$ ($W$ is measured). If the paths $A_Y\rightarrow W\rightarrow \ldots \rightarrow D_{k+1}$ and $A_D\rightarrow W\rightarrow  \ldots \rightarrow D_{k+1}$ are present then, no matter our choice of $\overline{L}_j$, if we choose $W \in \bar{L}_{A_Y,j+1}$ the dismissible component condition \eqref{ass: delta 3a} fails and if we choose $W \in \bar{L}_{A_D,j+1}$ the dismissible component condition \eqref{ass: delta 3b} fails. Similarly, if the paths $A_D\rightarrow W\rightarrow \ldots \rightarrow Y_{k+1}$ and $A_Y\rightarrow W\rightarrow \ldots \rightarrow Y_{k+1}$ are present then, no matter our choice of $\overline{L}_j$, if we choose $W \in \bar{L}_{A_Y,j+1}$ the dismissible component condition \eqref{ass: delta 3a} fails and if we choose $W \in \bar{L}_{A_D,j+1}$ the dismissible component condition \eqref{ass: delta 3b} fails.  

%Hence, if \eqref{def: Ay partitioning Z}, it is impossible that all the dismissible component conditions hold. 

%An analogous argument can be used for the settings in which condition \eqref{def: Ad partitioning Z} fails.
\end{proof}

\begin{lemma}
\label{lemma: Ly empty and AY partial}
If the dismissible component conditions hold when we define $\overline{L}_{A_Y,K}=\emptyset$, then $A_Y$ partial isolation holds. 
\end{lemma}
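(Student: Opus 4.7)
I will argue by contradiction, using faithfulness to translate the dismissible component conditions into d-separation statements on the $G$-transformed extended DAG, and then into structural conclusions about directed paths in the original extended DAG. Directed paths originating at $A_Y$ are identical in the two graphs, since the $G$-transformation only deletes $A$ and its deterministic outgoing edges and $A_Y$ is not a descendant of $A$. Under the choice $\overline{L}_{A_Y,K}=\emptyset$, we have $L_{A_D,k}=L_k$, so condition \eqref{ass: delta 3a} is vacuous and the nontrivial conditions available are \eqref{ass: delta 1}, \eqref{ass: delta 2}, and \eqref{ass: delta 3b} at every $k$.

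Suppose $A_Y$ partial isolation \eqref{def: Ay partial iso} fails. Then there exist $k \in \{0,\ldots,K\}$ and a directed path $\pi$ from $A_Y$ to $D_{k+1}$ whose intermediate nodes contain no element of $\{Y_0,\ldots,Y_k\}$. Let $k^*$ be the smallest such $k$ and fix a corresponding $\pi$. Faithfulness applied to \eqref{ass: delta 2} at index $k^*$ yields that $A_Y$ is d-separated from $D_{k^*+1}$ by $\{A_D,D_{k^*}{=}0,Y_{k^*}{=}0,\bar L_{k^*}\}$. Since $\pi$ is directed it has no colliders, so some intermediate node of $\pi$ must lie in this conditioning set. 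No intermediate can be an element of $\bar Y_{k^*}$ (by construction) nor $A_D$ (which is not a descendant of $A_Y$), so there must be an intermediate in $\bar D_{k^*}\cup\bar L_{k^*}$. Let $V$ be the \emph{first} such intermediate encountered along $\pi$ starting from $A_Y$.

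Case A: $V=D_j$ for some $j\leq k^*$. Then the subpath $A_Y\to\cdots\to D_j$ is a directed path to $D_j$ whose intermediates, being intermediates of $\pi$, include no $Y_l$ for $l\leq j-1$. This is a ``bad'' path for $D_j$ with $j-1<k^*$, contradicting minimality of $k^*$. Case B: $V=L_j$ for some $j\leq k^*$. Consider the subpath $\sigma\colon A_Y\to\cdots\to L_j$. By the choice of $V$ as the first element of $\bar D_{k^*}\cup\bar L_{k^*}$ encountered on $\pi$, no intermediate of $\sigma$ lies in $\bar D_{k^*}\cup\bar L_{k^*}$; together with the hypothesis on $\pi$ and the fact that $A_D$ is not a descendant of $A_Y$, no intermediate of $\sigma$ lies in the conditioning set $\{A_D,D_j{=}0,Y_j{=}0,\bar L_{j-1}\}$ of \eqref{ass: delta 3b} at index $j$ (under $L_{A_D,k}=L_k$). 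Since $\sigma$ is directed (no colliders) and no intermediate is in this set, $\sigma$ is unblocked, which contradicts \eqref{ass: delta 3b} via faithfulness. In both cases we reach a contradiction, so \eqref{def: Ay partial iso} holds.

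The principal technical obstacle is selecting $V$ correctly so that the truncated subpath in Case B has no blocking intermediates; the choice of $V$ as the \emph{earliest} element of $\bar D_{k^*}\cup\bar L_{k^*}$ on $\pi$ is what makes \eqref{ass: delta 3b} bite. The minimality of $k^*$ is what closes Case A, and faithfulness is used throughout to pass from the conditional-independence statements of the dismissible component conditions to graphical d-separation.
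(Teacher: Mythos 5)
Your proof is correct and follows essentially the same route as the paper's: a proof by contradiction in which a violating directed path from $A_Y$ to some $D_{k+1}$ is shown to contradict either \eqref{ass: delta 2} or, when it first passes through a measured covariate, \eqref{ass: delta 3b} under the partition $L_{A_D,k}=L_k$; your minimal-$k^*$ and first-blocking-node device simply makes explicit the bookkeeping that the paper's terser argument (its split into a direct arrow, a measured $W$, and an unmeasured $W$) leaves implicit. One trivial wording slip: the invariance of $A_Y$-rooted directed paths under the $G$-transformation holds because $A$ is not a descendant of $A_Y$, not because ``$A_Y$ is not a descendant of $A$'' (in the extended DAG $A_Y$ is a child of $A$).
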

\begin{proof}
We give a proof by contradiction. Suppose the dismissible component conditions hold under $L_{A_Y,k}=\emptyset$, but $A_Y$ partial isolation does not hold. Then, if there is a direct arrow $A_Y \rightarrow D_{k+1}$ for any $k \in \{0,  \dots ,K\}$, this arrow would violate \eqref{ass: delta 2}, which is a contradiction. Alternatively, $A_Y$ partial isolation can only be violated if there exists a $W$ such that $A_Y \rightarrow W \rightarrow ... \rightarrow D_{k+1}$ for any $k \in \{0,  \dots ,K\}$. However, if $W$ is measured then $W \in \overline{L}_{A_D,k}$ (because $\overline{L}_{A_Y,k}=\emptyset$), and then \eqref{ass: delta 3b} is violated, which is a contradiction. If $W$ is unmeasured, then either $\eqref{ass: delta 2}$ is violated or $\eqref{ass: delta 3b}$ is violated, which is a contradiction. %In other words, if $A_Y$ partial isolation fails, it is impossible to define $L_{A_Y,k}=\emptyset$ such that the dismissible component conditions hold.
\end{proof}

\begin{lemma}
\label{lemma: Ld empty and Ad partial}
If the dismissible component conditions hold when we define $L_{A_D,k}=\emptyset$, then $A_D$ partial isolation holds. 
\end{lemma}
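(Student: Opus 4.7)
The plan is to mirror the proof of Lemma \ref{lemma: Ly empty and AY partial} by contradiction, swapping the roles of $A_Y$ and $A_D$ throughout and replacing the partition choice $\overline{L}_{A_Y,K}=\emptyset$ with $\overline{L}_{A_D,K}=\emptyset$ (so that every measured covariate now sits in $\overline{L}_{A_Y,k}$). Under this partition, condition \eqref{ass: delta 3b} becomes trivial and the relevant dismissible component conditions controlling effects of $A_D$ on variables outside the competing-event pathway are \eqref{ass: delta 1} (for $Y_{k+1}$) and \eqref{ass: delta 3a} (for measured intermediates).

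I would begin by assuming for contradiction that the dismissible component conditions hold under $L_{A_D,k}=\emptyset$ for all $k\in\{1,\dots,K\}$ but that $A_D$ partial isolation, i.e.\ \eqref{def: Ad partial iso}, fails. By the definition of \eqref{def: Ad partial iso}, failure means there is some causal path from $A_D$ to $Y_{k+1}$ not intersected by any $D_{j+1}$, $j=0,\dots,k$. I would split this into two cases: either (a) there is a direct arrow $A_D\rightarrow Y_{k+1}$ for some $k$, or (b) there is some intermediate node $W$ with $A_D\rightarrow W\rightarrow \cdots \rightarrow Y_{k+1}$ where the sub-path from $W$ to $Y_{k+1}$ avoids all $D_{j+1}$.

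In case (a), the direct arrow yields an open path from $A_D(G)$ to $Y_{k+1}(G)$ conditional on $\{A_Y(G), D_{k+1}(G)=Y_k(G)=0,\overline{L}_k(G)\}$ in the $G$-transformation graph (using d-separation rules as in Section \ref{sec: identifiability conditions}), which immediately violates \eqref{ass: delta 1}, a contradiction. In case (b), if $W$ is unmeasured, then again the path from $A_D$ through $W$ to $Y_{k+1}$ is unblocked in the $G$-transformation graph conditional on the variables in \eqref{ass: delta 1}, contradicting \eqref{ass: delta 1}. If $W$ is measured then by hypothesis $W\in\overline{L}_{A_Y,j+1}$ for some $j$ (since $\overline{L}_{A_D,\cdot}=\emptyset$), and the arrow $A_D\rightarrow W$ produces an unblocked path from $A_D(G)$ into $L_{A_Y,j+1}(G)$ conditional on the variables in \eqref{ass: delta 3a}, contradicting \eqref{ass: delta 3a}.

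The only subtle step is verifying that the path witnessing the failure of $A_D$ partial isolation can always be routed through the first intermediate $W$ in a way that remains unblocked given the conditioning sets in \eqref{ass: delta 1} or \eqref{ass: delta 3a}; this is the direct analogue of the corresponding step in Lemma \ref{lemma: Ly empty and AY partial} and follows from the same d-separation reading off the $G$-transformation of the extended DAG. I expect this routing argument to be the main (modest) obstacle, but it is handled exactly as in the preceding lemma, so the proof goes through symmetrically.
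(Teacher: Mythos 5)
Your proposal is correct and takes essentially the same route as the paper, whose entire proof of this lemma is the remark that it is analogous to the preceding $A_Y$ lemma with the roles of $A_Y/A_D$, $Y/D$, and conditions \eqref{ass: delta 1}/\eqref{ass: delta 2} and \eqref{ass: delta 3a}/\eqref{ass: delta 3b} interchanged — exactly the contradiction argument you spell out. The only nuance, which the paper's $A_Y$ version words as ``either \ldots or \ldots'', is that when the intermediate $W$ is unmeasured but the offending path later passes through a measured covariate (necessarily in $\overline{L}_{A_Y}$), the violated condition is \eqref{ass: delta 3a} rather than \eqref{ass: delta 1}; your closing remark about routing through the conditioning sets of \eqref{ass: delta 1} or \eqref{ass: delta 3a} effectively covers this.
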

\begin{proof}
The proof is analogous to the proof of lemma \ref{lemma: Ly empty and AY partial}.
\end{proof}

\begin{lemma}
\label{lemma: full iso}
If the dismissible component conditions hold for both the partition $L_{A_D,k}=\emptyset, L_{A_Y,k}=L_k$ and the partition $L_{A_Y,k}=L_k, L_{A_D,k}=\emptyset$, then full isolation holds. 
\end{lemma}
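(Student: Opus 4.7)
The plan is to obtain this statement as an immediate corollary of the two lemmas that directly precede it. The hypothesis specifies that the dismissible component conditions hold under two different partitions of $L_k$; reading this through Section \ref{isodis}, these are the two symmetric partitions (i) $L_{A_D,k}=L_k$, $L_{A_Y,k}=\emptyset$ and (ii) $L_{A_Y,k}=L_k$, $L_{A_D,k}=\emptyset$ from the discussion there.

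First, I would apply Lemma \ref{lemma: Ly empty and AY partial} to the hypothesis instance with $L_{A_Y,k}=\emptyset$ for all $k$; its conclusion is precisely $A_Y$ partial isolation, i.e.\ condition \eqref{def: Ay partial iso}: every causal path from $A_Y$ to $D_{k+1}$ is intersected by some $Y_j$ with $j\le k$. Next, by symmetry, I would apply Lemma \ref{lemma: Ld empty and Ad partial} to the hypothesis instance with $L_{A_D,k}=\emptyset$ for all $k$; its conclusion is $A_D$ partial isolation, i.e.\ condition \eqref{def: Ad partial iso}: every causal path from $A_D$ to $Y_{k+1}$ is intersected by some $D_{j+1}$ with $j\le k$.

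Finally, full isolation is \emph{defined} in Section \ref{sec: full iso} as the conjunction of conditions \eqref{def: Ay partial iso} and \eqref{def: Ad partial iso}, so the two partial isolation conclusions combine directly to give full isolation, which is what we wanted.

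Since the real graphical work has already been done in the two preceding lemmas, there is no substantive mathematical obstacle; the argument is a two-line corollary. The only point that calls for care is interpreting the hypothesis correctly: the two partitions listed in the statement are presented in the paper in notation that looks nearly identical, but the intended meaning—fixed by the discussion surrounding points (i)--(iii) of Section \ref{isodis} and by the structure of Lemmas \ref{lemma: Ly empty and AY partial} and \ref{lemma: Ld empty and Ad partial}—is that the two partitions are the two symmetric choices, each assigning all of $L_k$ to one side and $\emptyset$ to the other. Once this is acknowledged, the conclusion is immediate from chaining the two previous lemmas.
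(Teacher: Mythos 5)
Your proof is correct and is exactly the paper's own argument: the paper likewise obtains the result immediately by combining Lemma \ref{lemma: Ly empty and AY partial} (applied with $\overline{L}_{A_Y,k}=\emptyset$) and Lemma \ref{lemma: Ld empty and Ad partial} (applied with $L_{A_D,k}=\emptyset$), using that full isolation is by definition the conjunction of \eqref{def: Ay partial iso} and \eqref{def: Ad partial iso}. Your reading of the two hypothesized partitions as the two symmetric choices (the statement as printed repeats the same partition, evidently a typo) is also the intended one, consistent with points (i)--(iii) of Section \ref{isodis}.
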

\begin{proof}
It follows immediately from Lemma \ref{lemma: Ly empty and AY partial} and Lemma \ref{lemma: Ld empty and Ad partial}, because full isolation holds by definition if both $A_Y$ partial isolation and $A_D$ partial isolation holds.
\end{proof}

\begin{lemma}
If the dismissible component conditions hold for both the partition $L_{A_D,k}=\emptyset, L_{A_Y,k}=L_k$ and the partition $L_{A_Y,k}=L_k, L_{A_D,k}=\emptyset$, then $$L_{k+1} \independent A \mid  D_{k+1}=Y_{k+1}=0, \bar{L}_{k}.$$
\end{lemma}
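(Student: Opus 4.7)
The plan is to read off the two conditional independencies supplied by the dismissible component conditions under the two specified partitions of $L_k$, combine them in the hypothetical four-arm trial $G$, and transfer the resulting joint independence back to observed data via the generalized decomposition assumption and consistency.

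First, under the partition in which $L_{A_Y,k}\equiv L_k$ and $L_{A_D,k}\equiv\emptyset$ for all $k$, condition \eqref{ass: delta 3a} specializes to
\begin{equation*}
L_{k+1}(G)\independent A_D(G)\mid A_Y(G), Y_{k+1}(G)=D_{k+1}(G)=0, \bar{L}_k(G),
\end{equation*}
while under the partition in which $L_{A_D,k}\equiv L_k$ and $L_{A_Y,k}\equiv\emptyset$, condition \eqref{ass: delta 3b} specializes to
\begin{equation*}
L_{k+1}(G)\independent A_Y(G)\mid A_D(G), D_{k+1}(G)=Y_{k+1}(G)=0, \bar{L}_k(G).
\end{equation*}
Letting $p(l\mid a_Y,a_D)$ denote $\Pr(L_{k+1}(G)=l\mid A_Y(G)=a_Y, A_D(G)=a_D, Y_{k+1}(G)=D_{k+1}(G)=0, \bar{L}_k(G))$, the first display says $p$ does not depend on $a_D$, and the second says it does not depend on $a_Y$. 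Hence $p(l\mid a_Y,a_D)$ is constant in $(a_Y,a_D)$; in particular $p(l\mid a,a)=p(l\mid a',a')$ for all $a,a'\in\{0,1\}$.

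Finally, I will pass from the $G$ world back to observed data. By the generalized decomposition assumption \eqref{eq: definition A=Ay=Ad}, conditioning on $A=a$ in observed data corresponds (together with consistency applied to $L_{k+1}, Y_{k+1}, D_{k+1}, \bar{L}_k$) to conditioning on $A_Y(G)=A_D(G)=a$ in $G$, so
\begin{equation*}
\Pr(L_{k+1}=l\mid A=a, D_{k+1}=Y_{k+1}=0, \bar{L}_k)=p(l\mid a,a),
\end{equation*}
which by the previous step does not depend on $a$; the desired conditional independence $L_{k+1}\independent A\mid D_{k+1}=Y_{k+1}=0, \bar{L}_k$ follows. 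The main obstacle is precisely this last translation from $G$ to observed data, since the two dismissible-component specializations live in the hypothetical four-arm trial while the conclusion concerns the observed two-arm experiment; however, it is a routine consequence of \eqref{eq: definition A=Ay=Ad} together with consistency, so no new machinery is required.
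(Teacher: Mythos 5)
Your reduction of \eqref{ass: delta 3a} and \eqref{ass: delta 3b} under the two partitions, and the chaining $p(l\mid 1,1)=p(l\mid 1,0)=p(l\mid 0,0)$, are fine (modulo the usual caveat that the conditional laws must be well defined for all four arms of $G$ given the history $\bar l_k$, a positivity-type requirement in $G$). The gap is in your last step. Consistency \eqref{ass: consistency} together with \eqref{eq: definition A=Ay=Ad} only gives $\Pr(L_{k+1}=l\mid A=a, D_{k+1}=Y_{k+1}=0,\bar L_k=\bar l_k)=\Pr(L^{a_Y=a,a_D=a}_{k+1}=l\mid D^{a,a}_{k+1}=Y^{a,a}_{k+1}=0,\bar L^{a,a}_k=\bar l_k, A=a)$, i.e.\ a counterfactual conditional \emph{within the observed stratum} $A=a$, whereas $p(l\mid a,a)$ from the four-arm trial $G$ is the corresponding population-level quantity. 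Equating the two requires exchangeability \eqref{ass: exchangeability 1} (randomization of $A$ given $L_0$; since $L_0\subseteq\bar L_k$, the conditional version suffices) — without some such link the conclusion, which is an observed-data independence, cannot follow from the dismissible component conditions alone, because a confounded $A$ could be associated with $L_{k+1}$ even when neither component affects it. So you should invoke \eqref{ass: exchangeability 1} explicitly (it is available in the paper's two-arm randomized setting), and also note that \eqref{eq: definition A=Ay=Ad} is stated for $Y,D,Z$, so you need the analogous statement for the measured covariates $L$ (as is used implicitly in the "by def.\ of $G$" steps of Appendix \ref{sec: proof of idenditifiability}). "Consistency plus the decomposition assumption, no new machinery" is therefore not accurate as written.

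Beyond that, your route is genuinely different from the paper's. The paper argues by contradiction, graphically: assuming $L_k\not\independent A\mid D_k=Y_k=0,\bar L_{k-1}$, it enumerates the possible sources of d-connection (a direct arrow from $A$ — hence from $A_Y$ or $A_D$ — into $L_k$, an unmeasured common cause of $A$ and $L_k$, a collider path opened by the conditioning, or a reversed arrow excluded by temporal order) and shows each would contradict \eqref{ass: delta 3a} or \eqref{ass: delta 3b} under one of the two partitions. Your probabilistic argument works directly with the conditional-independence form of the dismissible component conditions, so the collider bookkeeping disappears and no appeal to faithfulness is needed (the paper's passage between statistical dependence and d-connection leans on the faithfulness assumption of Section \ref{sec: defining treatment eff}). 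The price is exactly the extra assumptions flagged above — exchangeability, consistency, and positivity across the four arms of $G$ — which the graphical argument sidesteps by reasoning on the causal DAG of the observed data structure; with those invoked, your proof goes through.
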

\begin{proof}

We give a proof by contradiction. Suppose that the dismissible component conditions hold for both the partition $L_{A_D,k}=\emptyset, L_{A_Y,k}=L_k$ and the partition $L_{A_Y,k}=L_k, L_{A_D,k}=\emptyset$, and there is a conditional dependence such that
$$L_{k} \not\!\perp\!\!\!\perp A \mid  D_{k}=Y_{k}=0, \bar{L}_{k-1},$$ for at least one $k=0, \ldots, K$. Using the rules of d-separation \cite{pearl2009causality}, we will consider the 4 possible ways in which $A$ and $L_k$ can be d-connected, conditional on $D_{k}=Y_{k}=0, \bar{L}_{k-1}$.

Suppose that the conditional dependence is due to a direct arrow from $A$ into $L_k$. Then, under the generalized decomposition assumption, either there is a direct arrow from $A_D$ into $L_k$ or a direct arrow from $A_Y$ into $L_k$, and these arrows would, repsectively, violate \eqref{ass: delta 3a} under the partition $L_{A_Y,k}=L_k, L_{A_D,k}=\emptyset$ and \eqref{ass: delta 3b} under the partition $L_{A_D,k}=L_k, L_{A_Y,k}=\emptyset$, which is a contradiction.

Suppose that the conditional dependence is due the unmeasured common cause $W$ of of $A$ and $L_k$. Then, under the generalized decomposition assumption, either (i) $W$ is a common cause of either $A_Y$ and $L_k$ or (ii) $W$ is a common cause of $A_D$ and $L_k$. However, (i) or (ii) would violate dismissible component condition \eqref{ass: delta 3a} or \eqref{ass: delta 3b}, which is a contradiction. 

Suppose that the conditional dependence is due to an unblocked path due to conditioning on $D_{k}=0, Y_{k}=0$ and $\bar{L}_{k-1}$, that is, by conditioning on a collider or a descendant of a collider. Then, under the generalized decomposition assumption, this path would lead to a conditional dependence between either $L_k$ and $A_Y$ or $L_k$ and $A_D$. Any such path would violate \eqref{ass: delta 3a} or \eqref{ass: delta 3b}, which is a contradiction of the result in Lemma \ref{lemma: full iso}. %Furthermore, if $X$ was a desdscendant of a collider, then there would also be a contradiction of the result in Lemma \ref{lemma: full iso}. 
%This is unblocked if there is an unmeasured common cause $W$ of $L_k$ and $X$, where $X \in  (D_j, Y_j,\bar{L}_{j-1})$, $j \leq k$ such that $X$ is a collider. Alternatively, this path is unblocked if $W$ is a common cause of $L_k$ and $Q$, where $Q$ has an arrow into $X$, such that we condition on a descendant of a collider. 

Finally, suppose that the conditional dependence is due to a direct arrow from $L_k$ where $k =1,\ldots, K$ into $A$. This would violate our assumption of a temporal order, that is, it would imply that $A$ occurs after $L_k$, which is a contradiction.
\end{proof}

%\begin{lemma}
%\label{lemma: Ay partial and id}
%If $A_Y$ partial isolation and the dismissible component %conditions hold, then we can define $ L_{A_D,k}=L_k \notin %Z_{A_Y,k}$ and $ L_{A_Y,k}=\emptyset$. 
%\end{lemma}
%\begin{proof}
%We consider a partition of the observed data $\bar{L}_K = (\bar{L}'_K,\bar{L}''_k)$, such that $\bar{L}'_K \in \bar{Z}_K$ and $\bar{L}''_K \notin \bar{Z}_K$. By definition of $A_Y$ partial isolation, we can define $\bar{L}'_K \in \bar{Z}_{A_D,K}$. By definition of $Z_k$, $\bar{L}''_k$ is not a cause of either $D_k$ or $Y_k$ $k = 1,\dots, K$. 

%When $A_Y$ partial isolation holds, then there exist a $Z_k$ partition such that $ Z_{k} = Z_{A_D,k}$.
%\end{proof}

\clearpage
  
\section{Proof of weighted representation of \eqref{eq: identifying formula}}   
\label{sec: proof of alternative id}
First, using laws of probability we can re-formulate the weights $W_{L_{A_Y},k} $ and $W_{L_{A_D},k}$,
\begin{align*}
W_{L_{A_D},k} (a_Y,a_D) & = \frac{\prod_{j=0}^{k} \Pr(A = a_D \mid C_{j}= Y_{j} = D_{j} = 0, L_{A_D,j}, \bar{L}_{j-1} = \bar{l}_{j-1})}{\prod_{j=0}^{k} \Pr(A = a_Y \mid C_{j}= Y_{j} = D_{j} = 0, L_{A_D,j}, \bar{L}_{j-1} = \bar{l}_{j-1})} \\
& \times \frac{\prod_{j=0}^{k} \Pr(A = a_Y \mid C_{j}= Y_{j} = D_{j} = 0, \bar{L}_{j-1} = \bar{l}_{j-1})}{\prod_{j=0}^{k} \Pr(A = a_D \mid C_{j}= Y_{j} = D_{j} = 0, \bar{L}_{j-1} = \bar{l}_{j-1})} \\
& = \frac{ \prod_{j=0}^{k} \frac{\Pr (A = a_D \mid C_{j}= Y_{j} = D_{j} = 0, L_{A_D,j}, \bar{L}_{j-1} = \bar{l}_{j-1}) \Pr( C_{j}= Y_{j} = D_{j} = 0, L_{A_D,j}, \bar{L}_{j-1} = \bar{l}_{j-1})   }{   \Pr(C_{j}= Y_{j} = D_{j} = 0,\bar{L}_{j-1} = \bar{l}_{j-1},  A = a_D) } }{ \prod_{j=0}^{k} \frac{\Pr (A = a_Y \mid C_{j}= Y_{j} = D_{j} = 0, L_{A_D,j}, \bar{L}_{j-1} = \bar{l}_{j-1}) f( C_{j}= Y_{j} = D_{j} = 0, L_{A_D,j}, \bar{L}_{j-1} = \bar{l}_{j-1})   }{   \Pr(C_{j}= Y_{j} = D_{j} = 0,\bar{L}_{j-1} = \bar{l}_{j-1},  A = a_Y) } } \\
& = \frac{\prod_{j=0}^{k}   \Pr(L_{A_D,j} = l_{A_D,j} \mid C_{j}= Y_{j} = D_{j} = 0, \bar{L}_{j-1} = \bar{l}_{j-1},  A = a_D) }{ \prod_{j=0}^{k}   \Pr(L_{A_D,j} = l_{A_D,j} \mid C_{j}= Y_{j} = D_{j} = 0, \bar{L}_{j-1} = \bar{l}_{j-1},  A = a_Y) }, \\
\end{align*}
and 
\begin{align*}
W_{L_{A_Y},k} (a_Y,a_D) &  = \frac{\prod_{j=0}^{k} \Pr(A = a_Y \mid C_{j}= Y_{j} = D_{j} = 0, \bar{L}_{j} = \bar{l}_{j})}{\prod_{j=0}^{k}\Pr(A = a_D \mid C_{j}= Y_{j} = D_{j} = 0, \bar{L}_{j} = \bar{l}_{j})} \\
& \times \frac{\prod_{j=0}^{k} \Pr(A = a_D \mid C_{j}= Y_{j} = D_{j} = 0,L_{A_D,j}, \bar{L}_{j-1} = \bar{l}_{j-1})}{\prod_{j=0}^{k} \Pr(A = a_Y \mid C_{j}= Y_{j} = D_{j} = 0,L_{A_D,j}, \bar{L}_{j-1} = \bar{l}_{j-1})} \\
& = \frac{ \prod_{j=0}^{k} \frac{\Pr (A = a_Y \mid C_{j}= Y_{j} = D_{j} = 0, \bar{L}_{j} = \bar{l}_{j}) \Pr( C_{j}= Y_{j} = D_{j} = 0, \bar{L}_{j} = \bar{l}_{j})   }{   f(C_{j}= Y_{j} = D_{j} = 0,L_{A_D,j}, \bar{L}_{j-1} = \bar{l}_{j-1},  A = a_Y) } }{ \prod_{j=0}^{k} \frac{\Pr (A = a_D \mid C_{j}= Y_{j} = D_{j} = 0, \bar{L}_{j} = \bar{l}_{j}) f( C_{j}= Y_{j} = D_{j} = 0, \bar{L}_{j} = \bar{l}_{j})   }{   f(C_{j}= Y_{j} = D_{j} = 0,L_{A_D,j}, \bar{L}_{j-1} = \bar{l}_{j-1},  A = a_D) } } \\
& = \frac{\prod_{j=0}^{k}   f(L_{A_Y,j} = l_{A_Y,j} \mid C_{j}= Y_{j} = D_{j} = 0,L_{A_D,j}, \bar{L}_{j-1} = \bar{l}_{j-1},  A = a_Y) }{ \prod_{j=0}^{k}   f(L_{A_Y,j} = l_{A_Y,j} \mid C_{j}= Y_{j} = D_{j} = 0,L_{A_D,j}, \bar{L}_{j-1} = \bar{l}_{j-1},  A = a_D) }. \\
\end{align*}

Define
\begin{align*}
    W'_{C,k} (a_Y) = \frac{1 }{ \prod_{j=0}^{k}  \Pr(C_{j+1}=0 \mid C_{j}=D_{j}= Y_{j}=0, \bar{L}_{j} = \bar{l}_{j},  A = a_D) }.
\end{align*}

Consider the expression
\begin{align*}
    E & [ W_{C,k}(a_Y) W_{D,k}(a_Y,a_D) W_{L_{A_D},k}(a_Y,a_D) Y_{k+1} (1-Y_{k}) (1-D_{k+1}) \mid A=a_Y] \\ %\mid C_{k+1} = 0 \\ 
    =  & E [ W'_{C,k} (a_Y) W_{D,k}(a_Y,a_D) W_{L_{A_D},k}(a_Y,a_D) Y_{k+1} (1-Y_{k}) (1-D_{k+1}) (1-C_{k+1}) \mid A=a_Y] \\ %\mid C_{k+1} = 0 \\ 
    =& \sum_{\bar{l}_k}  \sum_{\bar{y}_{k+1}} \sum_{\bar{d}_{k+1}} [ f(\bar{y}_{k+1}, d_{k+1},c_{k+1},\bar{l}_k \mid A = a_Y) W'_{C,k} (a) W_{D,k}(a_Y,a_D) W_{L_{A_D},k}(a_Y,a_D) \\
    & \times y_{k+1} (1-y_{k}) (1-d_{k+1}) (1-c_{k+1}) ]  \\ %\textbf{} \text{(by definition of the expected value)}
    =& \sum_{\bar{l}_k}  [\Pr(Y_{k+1}=1,Y_k=D_{k+1}=C_{k+1}=0,\bar{l}_k \mid A = a_Y) W'_{C,k} (a_Y) W_{D,k}(a_Y,a_D) W_{L_{A_D},k}(a_Y,a_D)]  \\
    =& \sum_{\bar{l}_k}  [ \Pr(Y_{k+1}=1 \mid Y_k= D_{k+1}=C_{k+1}=0,\bar{l}_k, A=a_Y) \Pr( D_{k+1}=0 \mid \bar{C}_{k+1}= \bar{D}_k= \bar{Y}_k =0,\bar{l}_k,A=a_Y )    \\
    & \times \Pr(C_{k+1}= 0 \mid \bar{D}_k= \bar{Y}_{k}= \bar{C}_{k} =0,\bar{l}_{k}, A=a_Y) f( \bar{l}_{k} \mid \bar{C}_{k}= \bar{D}_k= \bar{Y}_{k} =0,  A=a_Y)    \\
    & \times \Pr(\bar{Y}_{k}=\bar{D}_{k}=\bar{C}_{k}=0 \mid  A= a_Y  ) \\
    & \times W'_{C,k} (a_Y) W_{D,k}(a_Y,a_D) W_{L_{A_D},k}(a_Y,a_D)]  \\ 
   =& \sum_{\bar{l}_k}  [ \Pr(Y_{k+1}=1 \mid Y_k= D_{k+1}=C_{k+1}=0,\bar{l}_k,A=a_Y ) \Pr( D_{k+1}=0 \mid \bar{C}_{k+1}= \bar{D}_k= \bar{Y}_k =0,\bar{l}_k,A=a_Y )    \\
    & \times \Pr(C_{k+1}= 0 \mid \bar{D}_k= \bar{Y}_{k}= \bar{C}_{k} =0,\bar{l}_{k}, A= a_Y)  f(l_{k} \mid \bar{Y}_{k}=\bar{D}_{k}=\bar{C}_{k}= 0,\bar{l}_{k-1}, A= a_Y) \\
      & \times \Pr(\bar{Y}_{k}=\bar{D}_{k}=\bar{C}_{k}=0,\bar{L}_{k-1}={l}_{k-1} \mid A= a_Y) \\
     & \times W'_{C,k} (a_Y) W_{D,k}(a_Y,a_D) W_{L_{A_D},k}(a_Y,a_D)],  \\ 
\end{align*}
where we use the definition of expected value in the second equation, the fact that $Y_k$ and $D_k$ are binary in the third equation, laws of probability in the fourth and fifth equation. 

We use laws of probability to express $ f(\bar{Y}_{k}=\bar{D}_{k}=\bar{C}_{k}=0, \bar{l}_{k-1}  \mid A= a_Y )$ as
\begin{align*}
 &   \Pr(Y_{k}=0 \mid C_{k}=D_{k}= Y_{k-1}=0, \bar{l}_{k-1},  A = a_Y) \nonumber \\
   & \times \Pr(D_{k}=0 \mid C_{k}=D_{k-1}= Y_{k-1}=0, \bar{l}_{k-1},  A = a_Y)  \nonumber \\
    & \times \Pr(C_{k}= 0 \mid D_{k-1} = Y_{k-1}= C_{k-1} =0,\bar{l}_{k-1}, A = a_Y)  \\
   & \times f(l_{k-1} \mid C_{k-1}=D_{k-1}= Y_{k-1}=0,\bar{l}_{k-2}, A = a_Y)  \\
    & \times f(\bar{Y}_{k-1}=\bar{D}_{k-1}=0, \bar{l}_{k-2}, \bar{C}_{k-1}=0 \mid  A= a_Y ), \\
\end{align*}
where any variable indexed with a number $m < 0$ is defined to be the empty set.

Arguing iteratively for $k-1,k-2,...,0$ we find that
\begin{align*}
    E  [   W'_{C,k}  & (a_Y)  W_{D,k}(a_Y,a_D) W_{L_{A_D},k}(a_Y,a_D) Y_{k+1} (1-Y_{k}) (1-D_{k+1}) (1- C_{k+1}) \mid A=a_Y ] \\ 
        = \sum_{\bar{l_k}} & \Big[ \Pr(Y_{k+1}=1 \mid Y_k= D_{k+1}=C_{k+1}=0,\bar{l}_k, A=a_Y) \\
         \prod_{j=0}^{k} &  \big\{ \Pr(D_{j+1}=0 \mid C_{j+1}=D_{j}= Y_{j}=0, \bar{L}_{j} = \bar{l}_{j},  A = a_Y)  \nonumber \\
      &  \times \Pr(Y_{j}=0 \mid C_{j}=D_{j}= Y_{j-1}=0, \bar{L}_{j-1} = \bar{l}_{j-1},  A = a_Y) \nonumber \\
     & \times \Pr(C_{j+1}= 0 \mid \bar{D}_j= \bar{Y}_{j}= \bar{C}_{j} =0,\bar{L}_{j} = \bar{l}_{j},  a_Y)  \\
   & \times \Pr(L_{j}=l_{j} \mid C_{j}=D_{j}= Y_{j}=0,\bar{L}_{j-1} = \bar{l}_{j-1}, A = a_Y) \big\}  \\
     \times  & W'_{C,k} (a_Y) W_{D,k}(a_Y,a_D) W_{L_{A_D},k}(a_Y,a_D) \Big]  \\
        = \sum_{\bar{l_k}} & \Big[ \Pr(Y_{k+1}=1 \mid Y_k= D_{k+1}=C_{k+1}=0,\bar{l}_k, A=a_Y) \\
        \prod_{j=0}^{k} &  \big\{ \Pr(D_{j+1}=0 \mid C_{j+1}=D_{j}= Y_{j}=0, \bar{L}_{j} = \bar{l}_{j},  A = a_Y)  \nonumber \\
      &  \times \Pr(Y_{j}=0 \mid C_{j}=D_{j}= Y_{j-1}=0, \bar{L}_{j-1} = \bar{l}_{j-1},  A = a_Y) \nonumber \\
     &  \times \Pr(C_{j+1}= 0 \mid \bar{D}_j= \bar{Y}_{j}= \bar{C}_{j} =0,\bar{L}_{j} = \bar{l}_{j},  A = a_Y)  \\
    &\times \Pr(L_{A_Y,j} = l_{A_Y,j} \mid C_{j}= Y_{j} = D_{j} = 0, \bar{L}_{j-1} = \bar{l}_{j-1}, L_{A_D,j} = l_{A_D,j}, A = a_Y) \nonumber \\
   & \Pr(L_{A_D,j} = l_{A_D,j} \mid C_{j}= Y_{j} = D_{j} = 0, \bar{L}_{j-1} = \bar{l}_{j-1},  A = a_Y) \big\} \\
     \times & W'_{C,k} (a_Y) W_{D,k}(a_Y,a_D) W_{L_{A_D},k}(a_Y,a_D) \Big], \\
\end{align*}
where we use that $L_k = (L_{A_Y,k},L_{A_D,k})$ in the second equality.

By plugging in the expression for $ W'_{C,k} (a_Y) $, we get 
\begin{align*}
        =& \sum_{\bar{l_k}}  [ \Pr(Y_{k+1}=1 \mid Y_k= D_{k+1}=C_{k+1}=0,\bar{L}_k=\bar{l}_k, A=a_Y) \\
        \times  & \prod_{j=0}^{k}  \big\{ \Pr(D_{j+1}=0 \mid C_{j+1}=D_{j}= Y_{j}=0, \bar{L}_{j} = \bar{l}_{j},  A = a_Y)  \nonumber \\
      &  \times \Pr(Y_{j}=0 \mid C_{j}=D_{j}= Y_{j-1}=0, \bar{L}_{j} = \bar{l}_{j},  A = a_Y) \nonumber \\
    &\times \Pr(L_{A_Y,j} = l_{A_Y,j} \mid C_{j}= Y_{j} = D_{j} = 0, \bar{L}_{j-1} = \bar{l}_{j-1}, L_{A_D,j} = l_{A_D,j}, A = a_Y) \nonumber \\
    &\times \Pr(L_{A_D,j} = l_{A_D,j} \mid C_{j}= Y_{j} = D_{j} = 0, \bar{L}_{j-1} = \bar{l}_{j-1},  A = a_Y) \big\}  \\
  & \times W_{D,k}(a_Y,a_D) W_{L_{A_D},k}(a_Y,a_D)],  \\
\end{align*}
By plugging in the expression for the weights $W_{L_{A_D},k}(a_Y,a_D)$ and $W_{D,k}(a_Y,a_D)$ we obtain 
\begin{align*}
        =& \sum_{\bar{l_k}}  [ \Pr(Y_{k+1}=1 \mid Y_k= D_{k+1}=C_{k+1}=0,\bar{L}_k=\bar{l}_k, A=a_Y) \\
        \times  & \prod_{j=0}^{k}  \big\{ \Pr(D_{j+1}=0 \mid C_{j+1}=D_{j}= Y_{j}=0, \bar{L}_{j} = \bar{l}_{j},  A = a_D)  \nonumber \\
      &  \times \Pr(Y_{j}=0 \mid C_{j}=D_{j}= Y_{j-1}=0, \bar{L}_{j} = \bar{l}_{j},  A = a_Y) \nonumber \\
    &\times \Pr(L_{A_Y,j} = l_{A_Y,j} \mid C_{j}= Y_{j} = D_{j} = 0, \bar{L}_{j-1} = \bar{l}_{j-1}, L_{A_D,j} = l_{A_D,j}, A = a_Y) \nonumber \\
    &\times \Pr(L_{A_D,j} = l_{A_D,j} \mid C_{j}= Y_{j} = D_{j} = 0, \bar{L}_{j-1} = \bar{l}_{j-1},  A = a_D) \big\},  \\
\end{align*}
and the final expression is equal to \eqref{eq: identifying formula}.

\clearpage

\section{Treatment decomposition of $A$ into $A_Y$, $A_D$ and $A_Z$}
\label{sec: A_Z decomposition}
Hitherto we have described settings in which the treatment is decomposed into 2 components, $A_D$ and $A_Y$. Consider now a hypothetical treatment decomposition into 3 components $A_D$, $A_Y$ and $A_Z$, as illustrated in Figure \ref{fig: 3 comp decomp}, which is similar to Robins and Richardson's decomposition in a mediation setting \cite[Figure 6(d)]{robins2010alternative}. Analogous to the 2 way decomposition, we define a generalized decomposition assumption:
\begin{enumerate}
    \item[] \underline{3 way generalized decomposition assumption}: The treatment $A$ can be decomposed into three binary components $A_Y\in \{0,1\}$, $A_D\in \{0,1\}$ and $A_Z\in \{0,1\}$ such that, in the observed data, the following determinism holds 
\begin{align}
    A\equiv A_D\equiv A_Y \equiv A_Z,
    \label{assumption: Determinsm 3 way}
\end{align}
but, in a future study, $A_Y$,$A_D$ and $A_Z$ could be assigned different values under a hypothetical intervention. For any individual in the study population and for $k \in \{0,\ldots,K\}$, let $Y_{k+1}^{a_Y,a_D,a_Z}$ be the indicator of the event of interest by interval $k+1$ had, possibly contrary to fact, he/she been assigned to $A_Y=a_Y$, $A_D=a_D$ and  $A_Z=a_Z$, where $a_Y,a_D,a_Z\in \{0,1\}$. We assume that an  intervention that assigns $A=a$ results in the same outcome as an intervention that assigns $A_Y=A_D=A_Z=a$, that is, 
\begin{align}
Q_{k+1}^{a_Y=a_D=a_Z=a}=Q_{k+1}^{a},
\label{eq: definition A=Ay=Ad=Az}
\end{align}
\end{enumerate}
for $Q_{k+1} \in \{Y_{k+1},D_{k+1},Z_{k+1}\}$. Analogous to the 2 way decomposition, the 3 way decomposition may be practically interesting in settings where we can conceive interventions on all 3 components of $A$. Furthermore, in settings where $Z_k$ partition fails, it may be possible to define a 3 way decomposition that allows identifiability of separable effects. For example, Figure \ref{fig: 3 comp decomp} can represent an alternative decomposition of the setting described in Figure \ref{fig:no isolation}a, where $Z_k$ partition fails.

To define identifiability conditions that apply to settings with 3 way decompositions, we continue to use superscripts to denote counterfactuals and for notational simplicity we consider settings without censoring, such that e.g.\ $Y_{k+1}^{a_Y,a_D,a_Z}$ is the counterfactual value of $Y_{k+1}$ if, possibly contrary to fact, $A_Y= a_Y, A_D= a_D,A_Z = a_Z \in \{0,1\}$. 

%Here we will only consider results under the following conditions that are analogous to full isolation under a 2 way decomposition, 

Here we will only consider settings that satisfy the following assumptions: 
\begin{align}
& \text{the only causal paths from } A_Y \text{ to } D_{k+1} \text{ and } Z_{k+1}, k \in \{0,\ldots,K\} \text{ are through } Y_{j}, \nonumber  \\
&  j=0,...,k,  \label{def: 3 part Ay iso}  \\
& \text{the only causal paths from } A_D \text{ to } Y_{k+1} \text{ and } Z_{k+1},  k=0,\dots,K \text{ are through } D_{j+1},  \nonumber \\ 
& j=0,\dots,k. \label{def: 3 part Ad iso}  \\
& \text{the only causal paths from } A_Z \text{ to } Y_{k+1} \text{ and } D_{k+1},  k=0,\dots,K \text{ are through } Z_{j+1}, \nonumber \\ 
& j=0,\dots,k.  \label{def: 3 part Al iso}
\end{align}

For $k = 0, \ldots, K $, consider the separable effects
\begin{equation}
\Pr (Y_{k+1}^{a_Y=1,a_D,a_Z}=1)\text{ vs. }\Pr
(Y_{k+1}^{a_Y=0,a_D,a_Z }=1),
\label{eq: dir effect A}
\end{equation}
$\text{ for } a_D,a_Z \in \{0,1\}$,

%that is, the separable effect of treatment $A$ on $Y_k$ not through $\bar{D}_k$ or $\bar{Z}_k$ when $A_D,A_Z$ are fixed to $a_D,a_Z$; thus, under \eqref{def: 3 part Ay iso}-\eqref{def: 3 part Al iso}, this can be interpreted as a direct effect.

%We can also define different indirect effects through $D_k$; the separable indirect effect of treatment on the event of interest only \textit{through} $D_k$ is
\begin{equation}
\Pr (Y_{k+1}^{a_Y,a_D=1,a_Z}=1)\text{ vs. }\Pr
(Y_{k+1}^{a_Y,a_D=0,a_Z}=1),
\label{eq: indir effect D}
\end{equation}%
for $a_Y,a_Z \in \{0,1\}$, and%. Similarly, the separable indirect effect of treatment on the event of interest only \textit{through} $Z_k$ is
\begin{equation}
\Pr (Y_{k+1}^{a_Y,a_D,a_Z=1}=1)\text{ vs. }\Pr
(Y_{k+1}^{a_Y,a_D,a_Z=0}=1),
\label{eq: indir effect L}
\end{equation}%
for $a_Y,a_D \in \{0,1\}$. %The separable indirect effect through $Z_k$ is interpreted analogously to the indirect effect in the mediation setting in Didelez \cite{didelez2018defining} under an agnostic causal model \cite{robins2010alternative}, but here we also allow for the presence of a competing event. 

Similar to the two component decomposition, the total effect can be expressed as a sum of the separable direct and indirect effects, in particular, 
\begin{align*}
& \Pr (Y_{k+1}^{a_Y=1,a_D=1,a_Z=1}=1)-\Pr (Y_{k+1}^{a_Y=0,a_D=1,a_Z=1}=1) \\
& +\Pr (Y_{k+1}^{a_Y=0,a_D=1,a_Z=1}=1)-\Pr (Y_{k+1}^{a_Y=1,a_D=0,a_Z=1}=1) \\
& +\Pr (Y_{k+1}^{a_Y=0,a_D=0,a_Z=1}=1)-\Pr (Y_{k+1}^{a_Y=0,a_D=0,a_Z=0}=1) \\
& =\Pr (Y_{k+1}^{a=1}=1)-\Pr (Y_{k+1}^{a=0}=1).
\end{align*}

\subsection{Interpretation of the 3 component decomposition}
Under \eqref{def: 3 part Ay iso}-\eqref{def: 3 part Al iso}, the 3 way decomposition of $A$ into $A_D$, $A_Y$ and $A_Z$ allows us to interpret the separable effects as direct and indirect effects; \eqref{eq: dir effect A} is the effect not emanating from $A_D$ or $A_Z$, i.e.\ a separable direct effect, \eqref{eq: indir effect D} is the separable indirect effect on the event of interest only emanating from $A_D$, and \eqref{eq: indir effect L} is the separable indirect effect on the event of interest only emanating from $A_Z$. %Note that \eqref{eq: indir effect L} is analogous to the indirect effect in the mediation setting in Didelez \cite{didelez2018defining} under an agnostic causal model \cite{robins2010alternative}. 

In our running example, where $Z_k=L_k$ encodes the (systolic and diastolic) blood pressure, it is not obvious that the 3 part decomposition is of interest; to interpret effects defined by the 3 part decomposition, we would need to conceptualize a treatment decomposition of blood pressure therapy into 3 components: the $A_D$ component could now be defined as the component that exerts effects on mortality not through blood pressure reduction or kidney injury; that is, the substantive meaning of an intervention on $A_D$ fundamentally changes. The $A_Z$ component would affect the outcome of interest only through blood pressure reduction; the effect exerted by $A_Z$ is analogous to an indirect mediation effect described by Didelez \cite{didelez2018defining} under an agnostic causal model, but in our setting we also allow for competing risks. We note that under this 3 way decomposition, the $A_Y$ component is identical to the $A_Y$ component in the 2 way decomposition, that is, the component of blood pressure therapy only exerting direct effects on kidney injury not through blood pressure reduction. 

In other settings, however, the 3 part decomposition may be feasible. For example, Robins and Richardson \cite[Figure 6(d)]{robins2010alternative} consider a similar decomposition in a conceptual example on the effect of cigarettes on lung cancer; they consider the effect of cigarettes smoking through nicotine, tar and other pathways. 

\subsection{Identification of the 3 component decomposition}
The identifiability conditions are straightforward extensions of the conditions in Section \ref{sec: identifiability conditions}. Now we must identify 
\begin{align*}
\Pr (Y_{k+1}^{a_Y, a_D, a_Z}=1) \text{ for } a_Y,a_D,a_Z \in \{0,1\}.
\end{align*}%

First the exchangeability, consistency and positivity conditions are identical to the condition in Section \ref{sec: identifiability conditions}.
The dismissible component conditions read
\begin{align*} %\mathbf{\Delta1: } &
 Y_{k+1}(G) \independent (A_D(G), A_Z(G)) \mid A_Y(G), Y_{k}(G)=D_{k+1}(G)=0, \bar{L}_k(G), \\
 D_{k+1}(G) \independent (A_Y(G), A_Z(G)) \mid A_D(G), D_{k}(G)=Y_{k}(G)=0, \bar{L}_k(G), \\ 
 L_{k+1}(G) \independent (A_Y(G), A_D(G)) \mid A_Z(G), D_{k+1}(G)=Y_{k+1}(G)=0, \bar{L}_k(G). \\
\end{align*}

Under these assumptions we can identify $\Pr (Y_{k+1}^{a_Y, a_D, a_Z}=1)$ for $k = 0, \ldots, K $ from 
\begin{align}
       & \sum_{\bar{l}_k}  \Big[ \sum_{s=0}^{k} \Pr(Y_{s+1}=1 \mid  D_{s+1}= Y_{s}=0, \bar{L}_{s} = \bar{l}_{s}, A = a_Y) \nonumber \\ 
       &  \prod_{j=0}^{s}  \big\{ \Pr(D_{j+1}=0 \mid D_{j}= Y_{j}=0, \bar{L}_{j} = \bar{l}_{j},  A = a_D)  \nonumber \\
      &  \times \Pr(Y_{j}=0 \mid D_{j}= Y_{j-1}=0, \bar{L}_{j-1} = \bar{l}_{j-1},  A = a_Y) \nonumber \\
    &\times \Pr(L_{j} = l_{A_Y,j} \mid Y_{j} = D_{j} = 0, \bar{L}_{j-1} = \bar{l}_{j-1}, A = a_Z) \big\} \Big],
\label{eq: identifying formula L}
\end{align}
which follows from a similar derivation from that in Appendix \ref{sec: proof of idenditifiability}. The identifiability conditions under the 3 component decomposition require stronger restrictions on the unmeasured variables, compared to the settings in Section \ref{sec: identifiability conditions}; unmeasured common causes of any pair in $(Y_{k+1}, D_{j+1}, L_{m+1}), k,j,m \in \{0,  \dots ,K\}$ can violate the dismissible component conditions. In particular, an unmeasured common cause $U_{L,Y}$ of $L_k$ and $Y_k$ will violate the dismissible component condition, as shown in grey in Figure \ref{fig: 3 comp with L}.

\clearpage
\begin{figure}
\centering
\begin{tikzpicture}
\begin{scope}[every node/.style={thick,draw=none}]
    \node (A) at (-1,1.5) {$A$};
    \node (Ay) at (1,1.5) {$A_Y$};
	\node (Ad) at (1,-1) {$A_D$};
	\node (Al) at (1,4) {$A_Z$};
	\node (Y1) at (3,1.5) {$Y_1$};
    \node (D1) at (3,-1) {$D_1$};
    \node (Y2) at (6,1.5) {$Y_2$};
    \node (D2) at (6,-1) {$D_2$};
    \node (Z1) at (3,4) {$Z_1$};
\end{scope}

\begin{scope}[>={Stealth[black]},
              every node/.style={fill=white,circle},
              every edge/.style={draw=black,very thick}]
    \path [->] (A) edge[line width=0.85mm] (Ad);
    \path [->] (A) edge[line width=0.85mm] (Ay);
    \path [->] (A) edge[line width=0.85mm] (Al);
	\path [->] (Ad) edge (D1);
	\path [->] (Al) edge (Z1);
    \path [->] (Ad) edge[bend right] (D2);
	\path [->] (Ay) edge[bend left] (Y2);
    \path [->] (Ay) edge (Y1);	
    \path [->] (Z1) edge (D2);
    \path [->] (Z1) edge (Y2);
    \path [->] (Y1) edge (D2);
    \path [->] (Y1) edge (Y2);
    \path [->] (D1) edge (D2);
    \path [->] (D1) edge (Y1);
    \path [->] (D2) edge (Y2);
    \path [->] (Y1) edge (Z1);
    \path [->] (D1) edge[bend left] (Z1);
\end{scope}
\end{tikzpicture}
\caption{Treatment $A$ is decomposed into 3 components.}
\label{fig: 3 comp decomp}
\end{figure}
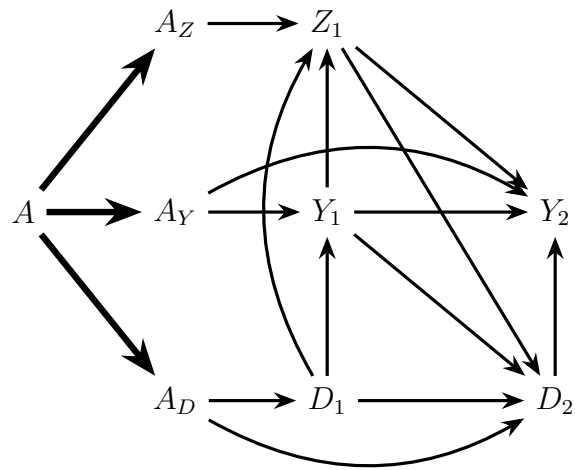

\clearpage
%%%% Additional intervention Figure
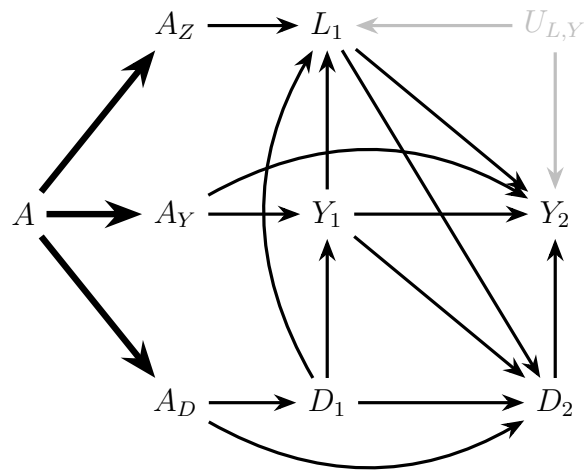
\begin{figure}
\centering
\begin{tikzpicture}
\begin{scope}[every node/.style={thick,draw=none}]
    \node (A) at (-1,1.5) {$A$};
    \node (Ay) at (1,1.5) {$A_Y$};
	\node (Ad) at (1,-1) {$A_D$};
	\node (Al) at (1,4) {$A_Z$};
	\node (Y1) at (3,1.5) {$Y_1$};
    \node (D1) at (3,-1) {$D_1$};
    \node (Y2) at (6,1.5) {$Y_2$};
    \node (D2) at (6,-1) {$D_2$};
    \node (L2) at (3,4) {$L_1$};
    \node[lightgray] (ULY) at (6,4) {$U_{L,Y}$};
\end{scope}

\begin{scope}[>={Stealth[black]},
              every node/.style={fill=white,circle},
              every edge/.style={draw=black,very thick}]
    \path [->] (A) edge[line width=0.85mm] (Ad);
    \path [->] (A) edge[line width=0.85mm] (Ay);
    \path [->] (A) edge[line width=0.85mm] (Al);
	\path [->] (Ad) edge (D1);
	\path [->] (Al) edge (L2);
    \path [->] (Ad) edge[bend right] (D2);
	\path [->] (Ay) edge[bend left] (Y2);
    \path [->] (Ay) edge (Y1);	
    \path [->] (L2) edge (D2);
    \path [->] (L2) edge (Y2);
    \path [->] (Y1) edge (D2);
    \path [->] (Y1) edge (Y2);
    \path [->] (D1) edge (D2);
    \path [->] (D1) edge (Y1);
    \path [->] (D2) edge (Y2);
    \path [->] (Y1) edge (L2);
    \path [->] (D1) edge[bend left] (L2);
    \path [->,>={Stealth[lightgray]}] (ULY) edge[lightgray] (L2);
    \path [->,>={Stealth[lightgray]}] (ULY) edge[lightgray] (Y2);
\end{scope}
\end{tikzpicture}
\caption{Treatment is decomposed into 3 components, such that $L_1 = Z_1$. The variable $U_{L,Y}$ would violate the dismissible component conditions here.}
\label{fig: 3 comp with L}
\end{figure}

\clearpage

\section{Estimation algorithms}
\label{sec: estimation algorithm}
Here we describe an algorithm to estimate the separable effects using estimators based on \eqref{eq: alternative id formula 1}; i.e. the estimator $\hat{\nu}_{1,a_Y,a_D,k} $ described in Section \ref{sec: estimation}. We initially construct our input data set such that each subject has $K^*+1$ lines, indexed by $k = 0,\dots,K^*$, and there are measurements of $(A, C_{k+1},D_{k+1}, Y_{k+1},\bar{L}_{k+1})$ on each line $k$. For each subject, $K^*=K$ if $C_{K+1}=D_{K+1}=Y_{K+1}=0$, otherwise $K^*=m$, where $C_{m}=D_{m}=Y_{m}=0$ and either $C_{m+1}=1$, $D_{m+1}=1$ or $Y_{m+1}=1$. Due to the temporal ordering, we do the following: if $C_{k+1}=1$, then $D_{k+1}$ and $Y_{k+1}$ are set missing. Similarly, if $C_{k+1}=0$ and  $D_{k+1}=1$, then $Y_{k+1}=1$ is set missing. Then we do the following to estimate \eqref{eq: alternative id formula 1} at $K$:
\begin{enumerate}
    \item Using all subject-intervals records, i.e.\ all lines in the data set, obtain $\hat{\alpha}_D$ by fitting a parametric model (e.g.\ pooled logistic regression model) with dependent variable $D_{k+1}$ and independent variables a specified function of $k = 0, \dots K$, $\bar{L}_k$ and $A$. 
    \item Using all subject-intervals records, obtain $\hat{\alpha}_C$ by fitting a parametric model (e.g.\ pooled logistic regression model) with dependent variable $C_{k+1}$ and independent variables a specified function of $k = 0, \dots K$, $\bar{L}_k$ and $A$.
    \item Using all subject-intervals records, estimate $\hat{\alpha}_{L_D,1}$ by fitting a parametric model with dependent variable $A$ and independent variables a specified function of $k = 0, \dots K$, $\bar{L}_k$ and $A$.
    \item Using all subject-intervals records, estimate $\hat{\alpha}_{L_D,2}$ by fitting a parametric model with dependent variable $A$ and independent variables a specified function of $k = 0, \dots K$, $\bar{L}_{k-1}$,$L_{A_D,k}$ and $A$, ensuring that the models used to fit $\hat{\alpha}_{L_D,1}$ and $\hat{\alpha}_{L_D,2}$ are compatible. Notice that this step is redundant if we can define a $L_k$ partition such that $L_{A_D,k} = \emptyset, k = 0, \dots K$, which implies that $A_D$ partial isolation holds. 
    \item For subject $i$, attach a weight to line $k$ with predicted outcome probabilities derived from the parametric models indexed by parameters  $\hat{\alpha}_{D}, \hat{\alpha}_{L_D 1},\hat{\alpha}_{L_D 2}$ and $\hat{\alpha}_{C}$ to estimate $\hat{W}_{1,i,k}(a_Y,a_D; \hat{\alpha}_1) = \hat{W}_{i,D,k} (a_Y,a_D;\hat{\alpha}_{D}) \hat{W}_{i,L_{A_D},k} (a_Y,a_D;\hat{\alpha}_{L_D 1},\hat{\alpha}_{L_D 2}) \hat{W}_{i,C,k}  (a_Y;\hat{\alpha}_{C})$.
    \item Compute an estimate of $\Pr(Y^{a_Y,a_D,\bar{c}=0}_{K+1})$ from 
    \begin{align*}
  \frac{1}{\sum^{n}_{j=1}I(A_j=a_Y)}    \sum^{n}_{i=1}  \sum_{k=0}^{K}  \hat{W}_{1,i,k}(a_Y,a_D;\hat{\alpha}_1) Y_{i,k+1} (1-Y_{i,k}) (1-D_{i,k+1})I(A_i=a_Y).
    \end{align*}
\end{enumerate}
An estimator based on \eqref{eq: alternative id formula 2} could be derived analogously, where step $(1)$ we would fit a model with $Y_{k+1}$ as dependent variable, in step $(4)$ we would fit a model where we replace $L_{A_D,k}$ with $L_{A_Y,k}$, and we finally compute an estimate of  $\Pr(Y^{a_Y,a_D,\bar{c}=0}_{K+1})$ from 
\begin{align*}
     \frac{1}{\sum^{n}_{j=1}I(A_j=a_D)} \sum^{n}_{i=1}  \sum_{k=0}^{K} \hat{W}_{2,i,k}(a_Y,a_D;\hat{\alpha}_1) Y_{i,k+1} (1-Y_{i,k}) (1-D_{i,k+1})I(A_i=a_D).
\end{align*}
 %Thus, which makes it harder to justify how subject matter knowledge can help us to reason about the magnitude of this function.

%Tchetgen Tchetgen \cite{tchetgen2014identification} suggested a sensitivity analysis approach for principal stratum estimands. This approach is motivated by a selection bias function that is a difference between two conditional expectations in different principal strata. One problem with the selection bias function in Tchetgen Tchetgen \cite{tchetgen2014identification}, however, is that it is defined by quantities that we cannot identify even in principle. Thus, it is hard to justify how subject matter knowledge can help us to reason about this function, when it cannot ever be observed. 

\clearpage

\section{Sensitivity analysis}
\label{sec: sensitivity analysis}
To illustrate a sensitivity analysis technique for violations of the dismissible component conditions, consider a selection bias function for dismissible component condition \eqref{ass: delta 1},
\begin{align*}
    t_k(\bar{l}_k, a_Y) = &   \Pr(Y^{a_Y,a_D=0,\bar{c}= 0}_{k+1}=1 \mid  D^{a_Y,a_D=0,\bar{c}= 0}_{k+1}= Y^{a_Y,a_D=0,\bar{c}= 0}_{k}=0, \bar{L}^{a_Y,a_D=0,\bar{c}= 0}_{k} = \bar{l}_{k}) \\
    & - \Pr(Y^{a_Y,a_D=1,\bar{c}= 0}_{k+1}=1 \mid  D^{a_Y,a_D=1,\bar{c}= 0}_{k+1}= Y^{a_Y,a_D=1,\bar{c}= 0}_{k}=0, \bar{L}^{a_Y,a_D=1,\bar{c}= 0}_{k} = \bar{l}_{k}),
    % Below is the conditional functions
    %= &  \Pr(Y_{k+1}=1 \mid  D_{k+1}= Y_{k}=0, \bar{L}_{k} = \bar{l}_{k}, A_Y = a_Y, A_D = 1) \\
    % & -  \Pr(Y_{k+1}=1 \mid  D_{k+1}= Y_{k}=0, \bar{L}_{k} = \bar{l}_{k}, A_Y = a_Y, A_D = 0) 
\end{align*}
which is identified in a setting in which $A_Y$ and $A_D$ are randomly assigned. Analogous sensitivity functions could be defined for dismissible component conditions \eqref{ass: delta 2}-\eqref{ass: delta 3b}. 
If dismissible component condition \eqref{ass: delta 1} holds for $\bar{L}_k$, we know that $t(\bar{L}_k, a_Y) = 0$. However, if \eqref{ass: delta 1} was violated, we would expect that $t_k(\bar{l}_k, a_Y) \neq 0$ for some values of $\bar{l}_k$ and $a_Y$. In particular, we would expect \eqref{ass: delta 1} to be violated in the presence of any unmeasured cause of $Y_k$ and $D_j$, where $0 < j \leq k$.

While the following strategy for sensitivity analysis is applicable to any setting in which $Z_k$ partition holds, we consider a simpler setting in which (i) $A_Y$ partial isolation holds, (ii) dismissible component condition \eqref{ass: delta 1} is satisfied for some $L' \equiv L'_D$ which contains the measured variable $L$ as a subset, $L \subset L'$, and (iii) dismissible component condition \eqref{ass: delta 2}-\eqref{ass: delta 3b} are satisfied. This is coherent with our blood pressure example in Section \ref{sec: sprint implementation}, and one such setting is described in Figure \ref{fig:partial isolationExpanded}f where \eqref{ass: delta 1} is violated due to failure of measuring $U_{L,Y}$.  Now, suppose that $t_k(\bar{l}_k, a_Y)$ is known. Then the separable effects can be identified through the modified version of identification formula \eqref{eq: alternative id formula 2}, 
\begin{align}
       \sum_{s=0}^{k} E & \{ W_{C,s}(a_D) W^{\dagger}_{Y,s}(a_D,a_Y)  (1-Y_{s}) (1-D_{s+1}) Y_{s+1} \mid A=a_D \},
       \label{eq: alternative id formula 2 sensitivity}
\end{align}
where 
\begin{align*}
 W^{\dagger}_{Y,s}(a_D,a_Y) & = \frac{  (-1)^{a_Y} t_{s+1}(\bar{l}_{s+1}, a_Y) + \Pr(Y_{s+1}=1 \mid C_{s+1}=D_{s+1}= Y_{s}=0, \bar{L}_{s},  A = a_Y) }{\Pr(Y_{s+1}=1 \mid C_{s+1}=D_{s+1}= Y_{s}=0, \bar{L}_{s},  A = a_D) } \\
& \times \frac{\prod_{j=0}^{s-1} (-1)^{a_Y} t_j(\bar{l}_j, a_Y) + \Pr(Y_{j+1}=0 \mid C_{j+1}=D_{j+1}= Y_{j}=0, \bar{L}_{j},  A = a_Y) }{ \prod_{j=0}^{s-1} \Pr(Y_{j+1}=0 \mid C_{j+1}=D_{j+1}= Y_{j}=0, \bar{L}_{j},  A = a_D) }, \\
\end{align*}
which is equal to \eqref{eq: alternative id formula 2} under $A_Y$ partial isolation when $t_k(\bar{l}_k, a_Y)=0$ for all $k$, $\bar{l}_k$ and $ a_Y$. Formula \eqref{eq: alternative id formula 2 sensitivity} motivates the estimator $\hat{\nu}^{\dagger}_{2,a_Y,a_D,k}$, a modified version of $\hat{\nu}_{2,a_Y,a_D,k}$ from Section \ref{sec: estimation}, such that
$\hat{\nu}^{\dagger}_{2,a_Y,a_D,k}$ is the solution to the estimating equation $\sum_{i=1}^{n}U^{\dagger}_{2,k,i}(\nu_{a_Y,a_D,k},\hat{\alpha}_2)=0$ with respect to $\nu_{a_Y,a_D,k}$, where % $U_i(\nu,\hat{\alpha})=\sum_{k=0}^{K}U_{i,k}(\nu,\hat{\alpha})$ and 
\begin{align*}
& U^{\dagger}_{2,k,i}(\nu_{a_Y,a_D,k},\hat{\alpha}_2) \nonumber \\
= & I(A_i=a_D) \Big[ \sum_{s=0}^{k} \{\hat{W}_{C,s,i} (a_D;\hat{\alpha}_{C}) \hat{W}^{\dagger}_{2,s,i}(a_Y,a_D;\hat{\alpha}_{2}) Y_{s+1,i} (1-Y_{s,i}) (1-D_{s+1,i}) \} -  \nu_{a_Y,a_D,k} \Big],  \nonumber \\
\end{align*}
and $\hat{W}^{\dagger}_{2,s,i}(a_Y,a_D;\hat{\alpha}_{2} ) = \hat{W}_{C,s,i} (a_D;\hat{\alpha}_{C}) \hat{W}^{\dagger}_{Y,s,i} (a_D,a_Y;\hat{\alpha}_{Y})$, where 
\begin{align*}
    & \hat{W}^{\dagger}_{Y,k,i}  (a_D,a_Y;\hat{\alpha}_{Y}) \\
    & = \frac{ (-1)^{a_Y} t_{k+1}(\bar{l}_{k+1}, a_Y) + \Pr(Y_{k+1}=1 \mid C_{k+1}=D_{k+1}= Y_{k}=0, \bar{L}_{k,i},  A = a_Y; \hat{\alpha}_{Y}) }{\Pr(Y_{k+1}=1 \mid C_{j+1}=D_{k+1}= Y_{k}=0, \bar{L}_{k,i},  A = a_D; \hat{\alpha}_{Y}) } \\
& \times \frac{\prod_{j=0}^{k-1}  (-1)^{a_D} t_{j+1}(\bar{l}_{j+1}, a_Y) + \Pr(Y_{j+1}=0 \mid C_{j+1}=D_{j+1}= Y_{j}=0, \bar{L}_{j,i},  A = a_Y; \hat{\alpha}_{L_Y1}) }{ \prod_{j=0}^{k-1} \Pr(Y_{j+1}=0 \mid C_{j+1}=D_{j+1}= Y_{j}=0, \bar{L}_{j,i},  A = a_D; \hat{\alpha}_{L_Y1}) }. \\
\end{align*}

\begin{proof}
The following equality holds by definition of $t_{k}(\bar{l}_k, a_Y)$,
\begin{align}
   & \Pr(Y^{a_Y,a_D=0,\bar{c}= 0}_{k+1} = 1 \mid  Y^{a_Y,a_D=0,\bar{c}= 0}_k = D^{a_Y,a_D=0,\bar{c}= 0}_{k+1} = 0,\bar{L}^{a_Y,a_D=0,\bar{c}= 0}_{k} = \bar{l}_{k}) \nonumber \\
   = & t_{k}(\bar{l}_k, a_Y)+\Pr(Y^{a_Y,a_D=1,\bar{c}= 0}_{k+1} = 1 \mid  Y^{a_Y,a_D=1,\bar{c}= 0}_k = D^{a_Y,a_D=1,\bar{c}= 0}_{k+1} = 0,\bar{L}^{a_Y,a_D=1,\bar{c}= 0}_{k} = \bar{l}_{k}).  \nonumber \\ \label{eq: ineq y sensitivity}
\end{align}
While \eqref{eq: ineq y} of Lemma \ref{lemma: delta conditions} is violated in our setting where dismissible component condition \eqref{ass: delta 1} is violated, note that \eqref{eq: ineq d}-\eqref{eq: ineq ld} of Lemma \ref{lemma: delta conditions} holds and that Lemma \ref{lemma: conditional counter to obs} holds regardless of violations of the dismissible component conditions. Thus, following analogous steps as in the proof of Theorem \ref{theorem: identification formula}, we use \eqref{ass: E1 app}-\eqref{ass: consistency cens}, \eqref{ass: delta 2 app}-\eqref{ass: delta 3b app}, as well as \eqref{eq: ineq y sensitivity} instead of \eqref{ass: delta 1 app},  to obtain the following identification formula for settings where $a_Y \neq a_D$,
\begin{align*}
 \Pr ( & Y_{k+1}^{a_Y,a_D,\bar{c}=0}=1) \\
       = & \sum_{\bar{l}_K}  \Big[ \sum_{s=0}^{K}  (-1)^{a_Y} t_{k}(\bar{l}_k, a_Y) +  \Pr(Y_{s+1}=1 \mid C_{s+1}= D_{s+1}= Y_{s}=0, \bar{L}_{s} = \bar{l}_{s}, A = a_Y) \nonumber \\ 
       &  \prod_{j=0}^{s}  \big\{ \Pr(D_{j+1}=0 \mid C_{j+1}=D_{j}= Y_{j}=0, \bar{L}_{j} = \bar{l}_{j},  A = a_D)  \nonumber \\
      &  \times [ (-1)^{a_D} t_{k}(\bar{l}_k, a_Y) + \Pr(Y_{j}=0 \mid C_{j}=D_{j}= Y_{j-1}=0, \bar{L}_{j-1} = \bar{l}_{j-1},  A = a_Y)] \nonumber \\
    &\times \Pr(L_{A_Y,j} = l_{A_Y,j} \mid C_{j}= Y_{j} = D_{j} = 0, \bar{L}_{j-1} = \bar{l}_{j-1}, L_{A_D,j} = l_{A_D,j}, A = a_Y) \nonumber \\
    &\times \Pr(L_{A_D,j} = l_{A_D,j} \mid C_{j}= Y_{j} = D_{j} = 0, \bar{L}_{j-1} = \bar{l}_{j-1},  A = a_D) \big\} \Big],
%\label{eq: identifying formula app}
\end{align*}
and a weighted representation of this identification formula is analogous to identification formula \eqref{eq: alternative id formula 2}, where  $\hat{W}_{2,s}(a_Y,a_D)$ is replaced by $W^{\dagger}_{Y,s}(a_D,a_Y)$, which can be shown by an argument that is analogous to the proof in Appendix \ref{sec: proof of alternative id}.

\end{proof}

Now a formal sensitivity analysis can be conducted by repeatedly estimating $\hat{\nu}^{\dagger}_{2,a_Y,a_D,k}$ for each choice of $t_k(\bar{l}_k, a_Y)$ for a set of functions $\mathcal{T} = \{ t_{k,\lambda}(\bar{l}_k, a_Y) : \lambda \}$, where $\lambda$ is a finite dimensional parameter and $t_{k,0}(\bar{l}_k, a_Y) \equiv 0$ describes the setting with no bias, that is, no unmeasured common causes of  $Y_k$ and $D_j$ or of $Y_k$ and $L_j$, for any $j,k$ such that $0 < j \leq k$.

Subject matter knowledge may help us to reason about the sensitivity function $t_k(\bar{l}_k, a_Y)$. To fix ideas, suppose that the graph in Figure \ref{fig:partial isolationExpanded}f represents the blood pressure example, where $U_{L,Y}$ is an unmeasured common cause that increases the blood pressure ($L_k$) and the risk of kidney failure ($Y_k$). Then we would expect $t_k(\bar{l}_k, a_Y)$ to be negative due to selection over time: subjects who do not receive the treatment component that intensively reduces blood pressure ($a_D = 0$) are less likely to be alive with larger values of  $U_{L,Y}$ compared to those who received the component that intensively reduces blood pressure ($a_D = 1$).

Our sensitivity analysis technique is inspired by Tchetgen Tchetgen \cite{tchetgen2014identification}. However, unlike Tchetgen Tchetgen \cite{tchetgen2014identification}, the terms in our sensitivity function are not cross-world quantities that are unobservable in principle, but conditional expectations that can be identified in an experiment in which $A_Y$ and $A_D$ are randomly assigned.

Furthermore, note that our identification results from Section \ref{sec: identifiability conditions} also motivate sensitivity analyses of violations of the isolation conditions from Section \ref{sec: isolation and interpret}. In particular, suppose that an investigator assumed that full isolation was satisfied and, thus, used the simplified identification formula that was introduced in Stensrud et al \cite{stensrud2019separable}. Then, the assumption of full isolation could be falsified by comparing these estimates to estimates derived from the estimators in Section \ref{sec: estimation}, only assuming $Z_k$ partition. To do this sensitivity analysis, the investigator need to measure a set of time-varying covariates $L_k, k \in \{0,\dots, K\}$.

%our assumptions may distort estimates of the separable effects. We distinguish between violations of the identifiability assumptions in Section \ref{sec: identifiability conditions} when $Z_k$ partition holds, that is, failure of observing sufficient data for the dismissible component conditions to be satisfied, and violations of the isolation conditions from Section \ref{sec: isolation and interpret}, which concern the mechanism by which the treatment affects the event of interest.

%\subsection{Sensitivity for dismissible component conditions}

%\subsection{Sensitivity analysis for isolation conditions}
%The identification results from Section \ref{sec: isolation and interpret} immediately motivate sensitivity analyses for the isolation conditions: we can compare estimates from the simplified estimators that require isolation conditions, such as the estimators described in Stensrud et al \cite{stensrud2019separable}, with the generalized estimators presented in Section \ref{sec: estimation}, which include time-varying covariates and hold under $Z_k$ partition regardless of the isolation assumptions. If sufficient data on time-varying common causes of $Y_k$ and $D_j$ for any $j,k \in \{0,  \dots ,K\}$ are not observed, we can combine this idea with the technique described in Section \ref{sec: sensitivity analysis}.

\end{document}